\documentclass{article}

\newcommand{\citet}[1]{cite{#1}}

\usepackage{amsthm, amsmath, amssymb}
\usepackage{graphicx} 
\usepackage{mathtools}
\usepackage{etoolbox}
\usepackage[capitalize]{cleveref}
\usepackage{xcolor}
\usepackage{comment}
\usepackage{enumerate}
\usepackage{multicol}
\usepackage[normalem]{ulem}
\usepackage{xspace}
\usepackage{skak}

\usepackage[format=plain, labelfont=it, textfont=it]{caption}

\theoremstyle{plain}
\newtheorem{theorem}{Theorem}
\newtheorem{innercustomthm}{Theorem}
\newenvironment{customthm}[1]
  {\renewcommand\theinnercustomthm{#1}\innercustomthm}
  {\endinnercustomthm}  

\newtheorem{question}[theorem]{Question}

\newtheorem{corollary}[theorem]{Corollary}

\newtheorem{lemma}[theorem]{Lemma}
\newtheorem{innercustomlem}{Lemma}
\newenvironment{customlem}[1]
  {\renewcommand\theinnercustomlem{#1}\innercustomlem}
  {\endinnercustomlem}

\newtheorem{observation}[theorem]{Observation}

\newtheorem{conjecture}[theorem]{Conjecture}

\newtheorem{proposition}[theorem]{Proposition}

\theoremstyle{definition}
\newtheorem{definition}[theorem]{Definition}
\newtheorem{innercustomdef}{Definition}
\newenvironment{customdef}[1]
  {\renewcommand\theinnercustomdef{#1}\innercustomdef}
  {\endinnercustomdef}

\newtheorem{example}{Example}

\theoremstyle{remark}

\renewcommand{\vec}[1]{\bar{#1}}

\newcommand{\FormatRuleClass}[1]{\ensuremath{\mathtt{#1}}\xspace}
\newcommand{\fus}{\FormatRuleClass{fus}}
\newcommand{\bdd}{\FormatRuleClass{bdd}}
\newcommand{\fc}{\FormatRuleClass{fc}}
\newcommand{\fcbdd}{(\bdd\ensuremath{\Rightarrow}\fc)\xspace}

\newcommand{\Exists}[1]{\exists{\scalebox{0.9}{$#1$}}\;}
\newcommand{\Forall}[1]{\forall{\scalebox{0.9}{$#1$}}\;}

\newcommand{\cq}{\ensuremath{q}}
\newcommand{\ucq}{\ensuremath{Q}}

\newcommand{\function}[1]{\mathsf{#1}}

\newcommand{\hfun}{\function{h}}
\renewcommand{\hom}{\hfun}

\newcommand{\predicate}[1]{\mathtt{#1}}
\newcommand{\ppred}{\predicate{P}}

\newcommand{\apred}{\predicate{A}}
\newcommand{\bpred}{\predicate{B}}

\newcommand{\dpred}{\predicate{D}}
\newcommand{\epred}{\predicate{E}}

\newcommand{\xpred}{\predicate{X}}
\newcommand{\ypred}{\predicate{Y}}

\newcommand{\signature}{\mathbb{S}}
\newcommand{\sig}{\signature}
\newcommand{\arity}[1]{\function{ar}(#1)}

\newcommand{\database}{\instance}
\newcommand{\db}{\database}
\newcommand{\structure}[1]{\mathcal{#1}}
\newcommand{\instance}{\structure{I}}
\newcommand{\inst}{\instance}
\newcommand{\jnstance}{\structure{J}}
\newcommand{\jnst}{\jnstance}
\newcommand{\adom}[1]{\function{adom}(#1)}

\newcommand{\query}{Q}

\newcommand{\knowledgebase}{\db, \rs}
\newcommand{\kb}{\knowledgebase}

\newcommand{\set}[1]{\{\,#1\,\}}
\newcommand{\singleton}[1]{\{#1\}}
\newcommand{\pair}[1]{\langle\, #1 \,\rangle}
\newcommand{\spair}[1]{\langle #1 \rangle}

\newcommand{\vx}{\vec{x}}

\newcommand{\vy}{\vec{y}}
\newcommand{\vvv}{\vec{v}}

\newcommand{\vz}{\vec{z}}

\newcommand{\vt}{\vec{t}}

\newcommand{\va}{\vec{a}}

\newcommand{\fulltrig}{\pair{\eru, \hom}}
\newcommand{\trig}{\tau}
\newcommand{\trigoutput}[1]{\function{output}(#1)}
\newcommand{\existentialrule}{\rho}
\newcommand{\eru}{\existentialrule}
\newcommand{\chasesymbol}{Ch}

\newcommand{\chase}[1]{\chasesymbol(#1)}
\newcommand{\step}[2]{\chasesymbol_{#1}(#2)}

\newcounter{rscounter}
\newcounter{dbcounter}
\newcommand{\ruleset}{\mathcal{R}}
\newcommand{\rs}{\ruleset}
\newcommand{\rsb}{\mathcal{S}}
\newcommand{\rsbb}{\mathcal{S'}}

\newcommand{\iffi}{\textit{iff} }

\newcommand{\fr}[1]{\function{fr}(#1)}

\newcommand{\exvars}[1]{\function{\ensuremath{\exists}vars}(#1)}

\newcommand{\head}[1]{\function{head}(#1)}
\newcommand{\body}[1]{\function{body}(#1)}

\definecolor{mint}{rgb}{0, 0.8, 0.7}
\definecolor{lilac}{rgb}{0.8, 0.4, 0.7}
\definecolor{fox}{rgb}{0.9, 0.5, 0.3}

\newcommand{\piotr}[1]{{\small \textbf{P:} \color{mint}#1}}

\newcommand{\nats}{\mathbb{N}}

\newcommand{\cnarule}[3]{#1 \,\to\,\Exists{#2}\; #3}
\newcommand{\narule}[3]{#1 \;\;\to\;\;\Exists{#2}\; #3}
\newcommand{\nadrule}[2]{#1 \;\;\to\;\;#2}

\newcommand{\rew}{\mathtt{rew}}

\newcommand{\homequiv}{\leftrightarrow}

\newcommand{\arbtour}[3]{\textup{\textsc{Tournaments}}_{#3}}
\newcommand{\arbtoure}{\arbtour{}{}{\epred}}
\newcommand{\loopq}{\textup{\textup{\textsc{Loop}}}_{\epred}}
\newcommand{\mainstatement}[2]{\chase{#1, #2} \models \arbtoure \quad \Rightarrow \quad \chase{#1, #2} \models \loopq}
\newcommand{\mainstatementtopinstance}[2]{\chase{#2} \models \arbtoure \quad \Rightarrow \quad \chase{#2} \models \loopq}

\newcommand{\topinst}{\{\top\}}

\newcommand{\largecrown}{\includegraphics[scale=.035]{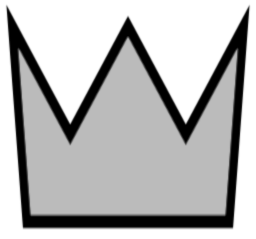}}
\newcommand{\crown}{\includegraphics[scale=.021]{img/crown.png}}
\renewcommand{\largecrown}{\includegraphics[scale=.11]{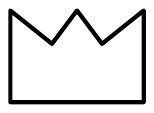}}
\renewcommand{\crown}{\includegraphics[scale=.07]{img/crown-alternate-alternate.pdf}}

\newcommand{\chasec}{\chase{\rsc}}

\newcommand{\chasece}{\chase{\rsce}}

\newcommand{\rscrown}{\rs_{\crown}}
\newcommand{\rsc}{\rscrown}

\newcommand{\rscrownexists}{\rs_{\crown}^{\exists}}
\newcommand{\rsce}{\rscrownexists}

\newcommand{\rscrowndatalog}{\rs_{\crown}^{DL}}
\newcommand{\rscdl}{\rscrowndatalog}

\newcommand{\qcrown}{\ucq_{\crown}}
\newcommand{\qc}{\qcrown}

\newcommand{\timestamp}[1]{\mathtt{TS}(#1)}
\newcommand{\timestamps}[1]{\mathtt{TS_{\mathsf{m}}}(#1)}

\newcommand{\multiset}[1]{\{\,#1\,\}_{\mathsf{m}}}
\newcommand{\memptyset}{\emptyset_{\mathsf{m}}}
\newcommand{\mmax}{\max_{\mathsf{m}}}
\newcommand{\mcup}{\cup_{\mathsf{m}}}
\newcommand{\mcap}{\cap_{\mathsf{m}}}
\newcommand{\msetminus}{\setminus_{\mathsf{m}}}
\newcommand{\lexorder}{\leq_\mathrm{lex}}
\newcommand{\slexorder}{<_\mathrm{lex}}

\newcommand{\modelsinj}{\models_\mathrm{inj}}
\newcommand{\wn}[1]{{\mathcal{W}(#1)}}

\newcommand{\hompi}{\function{\ensuremath{\pi}}}
\newcommand{\fulltrigpi}{\pair{\eru, \hompi}}

\newcommand{\disjunion}{\;\bar{\cup}\;}

\newcommand{\reify}{\function{reify}}

\author{Lucas Larroque\\
{\small \em Inria, DI ENS, ENS, CNRS, PSL University}\\
{\small \texttt{lucas.larroque@inria.fr}}\\ \phantom{a}\\
Piotr Ostropolski-Nalewaja\\
{\small \em University of Wrocław}\\
{\small \texttt{postropolski@cs.uni.wroc.pl}}\\ \phantom{a}\\
Michaël Thomazo\\
{\small \em Inria, DI ENS, ENS, CNRS, PSL University}\\
{\small \texttt{michael.thomazo@inria.fr}}}

\begin{document}

\date{}

\title{No Cliques Allowed:\\ The Next Step Towards BDD/FC Conjecture}

\maketitle
\begin{abstract}
This paper addresses one of the fundamental open questions in the realm of existential rules: the conjecture on the finite controllability of bounded derivation depth rule sets \fcbdd. We take a step toward a positive resolution of this conjecture by demonstrating that universal models generated by \bdd rule sets cannot contain arbitrarily large tournaments (arbitrarily directed cliques) without entailing a loop query, $\exists{x}~E(x,x)$. This simple yet elegant result narrows the space of potential counterexamples to the \fcbdd conjecture.
\end{abstract}
\section{Introduction}
The last decade has witnessed a fruitful interplay between the database and the knowledge representation communities, aiming to facilitate the integration and querying of legacy databases. The general idea is to allow a user to formulate queries using a logical formalism, whose predicates' meaning is constrained through first-order formulas. This leads to a core reasoning problem, coined \emph{ontology based query answering} (OBQA), that takes as input a database $\instance$, a rule set $\rs$ and a Boolean conjunctive query $\cq$, and asks whether $\instance, \rs \models \cq$, where $\models$ denotes the entailment relation of first-order logic. In other terms, is that true that any model (finite or infinite) of $\database$ and $\rs$ is a model of $\cq$? Rules are often expressed as \emph{existential rules}\footnote{Existential rules are also referred to as a \emph{tuple-generating dependencies}~\cite{AbiteboulHV95}, \emph{conceptual graph rules}~\cite{SalMug96}, Datalog$^\pm$~\cite{Gottlob09}, and \emph{$\forall \exists$-rules}~\cite{BagLecMugSal11} in the literature.} which are formulas of the shape $\forall \vx,\vy~ B(\vx, \vy) \rightarrow \exists \vz~ H(\vec{y}, \vec{z})$, where $B$ and $H$ are conjunctions of atoms respectively called the \emph{body} and the \emph{head} of the rule. It is well known that when no further constraints are put on rules, OBQA is an undecidable problem. 

\paragraph{Decidability of OBQA}
A significant research effort has thus been devoted to design conditions on $\rs$ that ensure the decidability and sometimes the tractability of Boolean conjunctive query answering under $\rs$. An important tool to understand these conditions is the \emph{chase}~\cite{chase-first-paper-maier-mend-sag-79}. In a nutshell, the chase is an algorithm that adds fresh terms and new atoms to ensure that each mapping of a rule body can be extended into a mapping of a rule head. The result of this possibly infinite process is a specific model of $\instance$ and $\rs$, which has a universal property~\cite{chase-revisited}: it homomorphically maps to any model of $\instance$ and $\rs$. When the chase is finite (which is for instance the case when the rule set fulfills some kind of acyclicity \cite{chase-intro-fagin-kola-mil-popa-05,krotzsch-rudolph-11-jointly-acycclic,CuencaGrau13,GerlachCarral23,Carral17}), this provides a decision procedure for query answering. It is also the case when the chase is not finite, but is a structure of bounded treewidth --- this is typically the case for guarded rules~\cite{Cali-GK:guarded-weakly-guarded-tgd-introduced}. 

Finally, and of great interest for our work, is the case of UCQ-rewritability. A rule set $\rs$ is said to be UCQ-rewritable if conjunctive queries (CQs) can always be rewritten. Specifically, for every CQ $\cq$, there exists a union of conjunctive queries (UCQ) $\ucq_\rs$ such that, for any database $\db$:  
\[
\db, \rs \models \cq \;\;\iff\;\; \db \models \ucq_\rs.
\]  
Given that the chase can produce an infinite structure, the UCQ-rewritability of a rule set ensures decidable OBQA, as the rewriting can be computed and subsequently used to query the database. Unsurprisingly, UCQ-rewritability has been the subject of extensive research, leading to the introduction of numerous subclasses over the past few decades, including:  
- \emph{Linear theories}, which permit at most a single atom in rule bodies~\cite{oblivious_chase_modern}; \emph{Guarded \bdd theories}, which generalize linear theories~\cite{BBLP18,CR15}; \emph{Sticky theories}, where strong restrictions are imposed on joins~\cite{sticky-intro}.

UCQ-rewritability is also studied independently under a number of different names --- highlighting the fundamental nature of the property --- with notable cases of finite unification sets class (\fus) and bounded derivation depth property (\bdd). The \fus class is closely tied to the concept of the backward chaining procedure, which is essentially a process of ``chasing in reverse.'' In this approach, to answer the entailment question, one starts with a query and attempts to derive a substructure of the database by applying rules in reverse. The \bdd class, formally introduced in the next section, has a much older origin, dating back to the 1980s. It stems from the classical notion of the boundedness of a Datalog program~\cite{GMSV87}.


\paragraph{OBQA in the finite}
Databases, however, are finite structures, and under the understanding that $\instance$ is a partial description of the real world, and that $\rs$ are constraints that should hold on the finite real world, the classical notion of entailment provided by first-order logic does not correspond to the desired notion of entailment. Indeed, one should be more interested in knowing whether ``q holds in any \emph{finite} database that contains $\instance$ and is a model of $\rs$''. This semantics is clearly related with the previous one, as a consequence of $\db$ and $\rs$ in the unrestricted semantics is also one in the finite one. The converse is however not true, as the following prototypical example witnesses.

\begin{example}
\label{ex-not-fc}
    Let us consider $\db = \{\epred(a,b)\}$, and $\rs$ containing the two rules $\forall x\forall y ~ \epred(x,y) \rightarrow \exists z~\epred(y,z)$ and  $\forall x \forall y \forall z~ \epred(x,y) \wedge \epred(y,z) \rightarrow \epred(x,z)$. The query $\exists x~ \epred(x,x)$ is not a consequence of $\kb$ under the unrestricted semantics, as witnessed by the chase. In any finite model however, there must be a cycle, and hence a loop because of the transitivity enforced by the second rule. 
\end{example}

\paragraph{Finite Controllability}
Finite reasoning provides a user with a semi-decision procedure for non-entailment (enumerating finite interpretations, and checking whether there exists one that is not a model of $\query$), but entailment is not semi-decidable anymore, as a finite universal model may not exist. An interesting question, already raised by Rosati \cite{DBLP:journals/jcss/Rosati11}, is, given a rule set $\rs$, whether for any database $\db$ and any query $\query$ the unrestricted and the finite semantics coincide. If so, the rule set $\rs$ is said to be \emph{finitely controllable} (\fc). Note that whenever a rule set is \fc, query answering becomes decidable. Indeed, the classical mathematical tools for semi-deciding first order entailment are still available, while non-entailment can be witnessed by finite structures. Rosati showed that inclusion dependencies, a very restricted form of existential rules, enjoy finite controllability~\cite{DBLP:journals/jcss/Rosati11}. This has been generalized in two ways, for guarded rules~\cite{guarded_is_fc}, as well as for sticky rules~\cite{sticky_is_fc}. Note that both classes generalize inclusion dependencies in various ways, guarded rules enjoying the bounded treewidth property, while sticky rules enjoy the bounded derivation depth property.

There is a well-known conjecture by Gogacz and Marcinkowski~\cite{bdd_fc_conjecture} stating that rule sets with the bounded derivation depth (\bdd) property are finitely controllable \fcbdd. Moreover, they demonstrated that the conjecture holds in a specific restricted setting where \bdd rule sets are defined over a binary signature and have heads containing only a single atom~\cite{bdd_fc_conjecture}. However, this result applies to a highly limited case, and little progress has been made towards resolving the conjecture in its general form. The broader question of whether finite controllability extends to all \bdd rule sets remains an important open problem. 

\paragraph{Contributions}
Our main contribution is to show that \bdd rule sets enjoy certain model-theoretic property. We write $\arbtour{}{}{\epred}$ to denote the query: \emph{for all integers $k$, there is a set $\set{x_1,\dots,x_k}$ of elements in the given instance such that $\epred(x_i,x_j)$ or $\epred(x_j,x_i)$ holds in the instance for all $i\neq j$}, and write $\loopq$ to denote the query $\exists{x}~\epred(x,x)$. Then, the following is our main result:
\begin{theorem}\label{thm:main}
For every \bdd rule set $\rs$ and every instance $\instance$ we have: 
\begin{align*}
(\inst, \rs) \;\models\; \arbtour{}{}{\epred} \quad\Rightarrow\quad (\inst,\rs) \;\models\; \loopq. \tag{\textup{\sympawn}}\label{tag:thm:main}
\end{align*}
\end{theorem}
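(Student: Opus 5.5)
Throughout, ``$(\inst,\rs)\models\arbtoure$'' is read as ``the chase $\chase{\inst,\rs}$ contains arbitrarily large $\epred$-tournaments''. Since the chase is universal and $\loopq$ is a CQ, it suffices to prove the contrapositive in the chase: if $\chase{\inst,\rs}\not\models\loopq$, then the sizes of tournaments in the chase are bounded.

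The key tool is the \bdd property, in its standard UCQ-rewritability form. For every CQ $\cq$ there is a finite UCQ $\rew(\cq)$ such that, for every instance $\jnst$, $(\jnst,\rs)\models\cq$ iff $\jnst\models\rew(\cq)$. Equivalently, every atom of $\chase{\jnst,\rs}$ is derived within a bounded number of chase steps (a bound depending on the query). I apply this to the atomic queries $\epred(x,y)$ and $\epred(x,x)$. Let $\ucq=\rew(\epred(x,y))$, a finite UCQ with answer variables $x,y$, and let $m$ be the maximal number of atoms of a disjunct of $\ucq$. Then every $\epred$-edge $(a,b)$ of the chase has a \emph{support}: a set of at most $m$ atoms of the chase from which $\epred(a,b)$ is re-derived in boundedly many steps. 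Since the chase of the chase is itself, the support may be taken inside any intermediate chase stage that we choose.

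The plan is to pick a huge tournament $T$ in the chase and apply a Ramsey argument to find a loop.

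\textbf{Step 1 (colouring).} I colour each pair $i<j$ of $T=\{x_1,\dots,x_N\}$ by finite data: the direction of the edge, the disjunct of $\ucq$ used, and the ``shape'' of how its support interacts with the positions of $x_i$ and $x_j$. This shape includes which support terms are among the $x$'s and the equality type relative to the ordering. Because $m$ and $\ucq$ are finite, there are finitely many colours. By Ramsey's theorem I obtain a large monochromatic sub-tournament; in particular, this sub-tournament is transitive, with say $x_i\to x_j$ for $i<j$.

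\textbf{Step 2 (creating the loop).} I aim to build a homomorphism from the support of an edge $\epred(x_i,x_j)$ into the chase, sending both $x_i$ and $x_j$ to a single element, say $x_k$. The image of the support would then be a match of the disjunct of $\ucq$ with $x=y$, so the chase entails $\epred(x_k,x_k)$, giving $\loopq$. To get such a map, I use monochromaticity on triples $i<j<k$. The supports of $(x_i,x_k)$ and $(x_j,x_k)$ have the same shape, so the support of $(x_i,x_j)$ should ``glue'' with that of $(x_j,x_k)$ and so on. What I want is to show that the support atoms not touching $x_i,x_j$ can be shared across pairs: they are either database or early-chase terms, or terms created in the subtrees of the chase hanging below $x_i$ or $x_j$.

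\textbf{Step 3 (main obstacle).} The hard part is controlling the non-answer terms of supports, since these can be arbitrary chase terms entangled with many $x$'s. Here is what I try first. I induct on the number of ``fresh'' nulls, using the chase forest structure and timestamps. Taking the earliest-created element of the tournament, each edge to a later element must be justified by a support living in the chase at the creation time of the later element. I then rewrite, using $\rew$ applied to the CQ describing the support restricted to earlier elements. This pushes everything down to a bounded-size piece of $\inst$ together with boundedly many generations.

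This process should eventually show that a sufficiently large tournament forces two tournament vertices to have the same type over a bounded ``base'' set. Once they do, the rule applications that created the edge between them can be replayed with both endpoints identified, which produces $\epred(x,x)$. Making this replay argument precise, by ensuring the identification respects all support atoms, is where I expect most of the work to lie.
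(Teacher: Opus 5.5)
Your reading of the statement and your Ramsey step match the paper. The paper likewise colours each edge by one of its (valley) witnesses among the disjuncts of the injective rewriting of $\epred(x,y)$, and extracts arbitrarily large tournaments defined by a single disjunct. However, Steps~2--3 contain no argument for the only hard part. That part is showing that, in a monochromatic transitive triangle, the witness of some edge can be re-matched with both answer variables sent to the same vertex.

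Knowing that supports have the same ``shape'' gives no equalities between the non-answer terms of supports of \emph{different} edges. In the raw chase of an arbitrary \bdd rule set such equalities need not hold, for three reasons. First, your supports live in $\chase{\inst,\rs}$, where Datalog rules add edges between arbitrary terms. Second, a non-answer term in the support of $\epred(a,b)$ may be a ``peak'' created by a trigger whose frontier contains both $a$ and $b$; collapsing $b$ onto $a$ would then need a body match that need not exist. Third, a term may have many predecessors via the same predicate, so the part of a support lying below $a$ is not determined by $a$.

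Your fallback (two tournament vertices with the same type over a bounded base, then ``replay'') is unsupported. Nothing bounds the base, since tournament vertices lie arbitrarily deep in the chase. Equal types would only let you identify the endpoints of an edge if its support split into a part depending only on $a$ and a part depending only on $b$, glued along a common part determined by each endpoint separately.

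That rigidity is the missing idea, and the paper obtains it in three stages.

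\emph{Normalisation.} A chain of surgeries, each shown to preserve UCQ-rewritability and the relevant (non-)entailments, reduces to a \emph{regal} rule set over $\topinst$. The surgeries are: encoding $\inst$ as a rule $\top\to\inst$; reification to a binary signature; streamlining heads into forward-existential, predicate-unique form; and body rewriting to get quickness.

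\emph{Witnesses in the non-Datalog chase.} Quickness (\cref{lem:skeleton}) lets every $\epred$-edge be witnessed by an injective match of a rewriting disjunct into the chase of the non-Datalog rules alone. That chase is a DAG in which every term has at most one $\apred$-predecessor for each predicate $\apred$. The peak-removal lemma (\cref{lem:peak}) then yields a \emph{valley} witness, whose only maximal variables are $x$ and $y$. It is proved by taking a witness with $\lexorder$-minimal multiset of timestamps, replacing the atoms of a maximal existential variable by the body of the trigger that created it, and re-applying the rewriting.

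\emph{Forcing the loop.} \cref{lem:cq-functional} gives functions $f_x,f_y$ sending an endpoint to the unique image of the variables below both $x$ and $y$, so $q(s,t)$ implies $f_x(s)=f_y(t)$. A transitive triangle $k_1\to k_2$, $k_1\to k_3$, $k_2\to k_3$ then forces $f_x(k_2)=f_y(k_3)=f_x(k_1)=f_y(k_2)$. Hence $q(k_2,k_2)$ holds, and so does $\epred(k_2,k_2)$.

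Your ``induct on fresh nulls using timestamps'' gestures toward peak removal. Without the normalised heads, though, it cannot deliver the uniqueness of lower parts that the loop construction relies on.
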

%
In addition to providing insights into the structural properties of universal models of \bdd rule sets, this result serves as an important step towards proving the \fcbdd conjecture. Specifically, it narrows the space of potential counterexamples to the conjecture by eliminating the most natural ones. We briefly explain why this is the case.

 First note that any model that entails $\arbtour{}{}{\epred}$ and not $\loopq$ is infinite, as the terms $x_1,\dots,x_k$ in $\arbtour{}{}{\epred}$ must be distinct for $\loopq$ not to be entailed. Thus, $\arbtour{}{}{\epred}$ does imply $\loopq$ in the finite setting, and any \bdd rule set for which this implication does not hold (in the unrestricted setting) would disprove \fcbdd.

Note however that \cref{ex-not-fc} does not constitute such a counterexample, as its rule set is simply not \bdd, since the transitivity rule has to be applied at least as many times as the distance between $a$ and $b$ to entail $\epred(a,b)$, which cannot be bounded independently of the size of the database.
Expanding on that example, one could try to mimic the behavior using \bdd rules by replacing the transitivity rule with $\forall x \forall x' \forall y \forall y'~ \epred(x,x') \wedge \epred(y,y') \rightarrow \epred(x,y')$, which entails it. With this change, the rule set is finally \bdd, and the chase entails $\arbtour{}{}{\epred}$. However, this new rule triggers the entailment of $\exists x~ \epred(x,x)$ as soon as $\exists x \exists y~ \epred(x,y)$ is entailed, as expected due to Property~(\sympawn).

While \cref{thm:main} does not entail the \fcbdd conjecture, as a different type of counterexample could in principle exist, we are confident that the insights and carefully curated toolkit presented in the following sections represent an important milestone in the effort to settle the \fcbdd conjecture in the general setting.

\paragraph{Organization of the paper}
We first provide the necessary preliminaries, before stating our main result in Section~\ref{sec:main-result}. The proof is performed in two main steps. 
The first step (\cref{sec:surgeries}) involves a series of rule set ``surgeries''. While some of these are well-known, the properties we require are tailored to the specific case at hand and need a separate formal proof. The second step (\cref{sec:proof-core}) primarily consists of a complex model-theoretic argument, which demonstrates that Property~(\sympawn) holds for the transformed rules.


\section{Preliminaries}

We assume that the reader is familiar with first-order logic and basic concepts from graph theory.
We only provide a brief recap of these topics in this section.

\subsection{First-Order Logic}
We define \textsf{Preds}, and \textsf{Vars} to be mutually disjoint and countably infinite sets of predicates, and variables, respectively.
We associate every $\ppred \in \textsf{Preds}$ with some \emph{arity} $\arity{\ppred} \geq 0$.
For every $n \geq 0$, the set of all $n$-ary predicates is also countably infinite.
A \emph{signature} is a set of predicates.
We write tuples $(x_1, \ldots, x_n)$ of variables as $\vx$ and often treat these tuples as sets.
The \emph{support} of a tuple of variables is the set of variables that occur in it.
A tuple of variables $\vy=(y_1,\dots,y_m)$ is \emph{compatible} with another tuple $\vx=(x_1,\dots,x_n)$ if $m=n$, and for all $i,j\leq n$, we have $y_i=y_j$ whenever $x_i=x_j$.
If, furthermore, for all $i$, either $y_i=x_i$ or $y_i=y_j=x_j$ for some $j$, then $\vy$ is a \emph{specialization} of $\vx$.
Any two tuples $\vx=(x_1,\dots,x_n)$ and $\vt=(t_1,\dots,t_n)$ such that $\vt$ is compatible with $\vx$ define a substitution $[\vx\mapsto\vt]:x_i\mapsto t_i$.

An \emph{atom} over a predicate $\predicate{P}$, or $\predicate{P}$-atom, is a first-order formula of the form $\predicate{P}(\vx)$ with $\vx$ a tuple of variables and $\predicate{P}$ a $\vert \vx \vert$-ary predicate.
For a nullary predicate $\predicate{P}$, we write $\predicate{P}$ as a shortcut for the nullary atom $\predicate{P}()$.
An \emph{instance} over signature $\sig$ is a set of atoms over predicates in $\sig$. For simplicity of presentation, we assume that all instances contain a nullary fact $\top$.
The \emph{active domain} of an atom or an instance is the set of variables that occur in it. The active domain of an atom (or set thereof) $S$ is denoted with $\adom{S}$.
The \emph{disjoint union} $\instance_1\disjunion\instance_2$ of two instances $\instance_1$ and $\instance_2$ is $\instance_1\cup\sigma(\instance_2)$, where $\sigma$ is a function mapping variables of $\instance_2$ to fresh variables, that do not occur in $\instance_1$.
For a first-order formula $F$ and a tuple $\vx$ of variables, we write $F(\vx)$ to indicate that $\vx$ is the set of all free variables occurring in $F$.

\paragraph{Existential rules}
An \emph{(existential) rule} $\eru$ is an FO formula of the form: $$\forall \vx,\vy~ B(\vx, \vy) \rightarrow \exists \vz~ H(\vec{y}, \vec{z})$$ where $B$ and $H$ are non-empty atom conjunctions.
We refer to $\vec{y}$ as the \emph{frontier} of $\eru$, and to $B(\vec{x}, \vy)$ and $\exists \vec{z}~ H(\vec{y}, \vec{z})$ as the \emph{body} and the \emph{head} of $\eru$, respectively.
Such a rule is \emph{Datalog} if it does not feature existential variables; that is, if $\vec{z}$ is empty.
Given a rule $\eru$, the sets $\body{\eru}$, $\head{\eru}$ and $\fr{\eru}$ denote the body, the head and the frontier of $\eru$, respectively.
We refer to elements of $\body{\eru}$ (resp. $\head{\eru}$) as \emph{body-atoms} (resp. \emph{head-atoms}) of $\eru$.
We omit $\forall$ quantifiers when writing rules.
A rule set is over a signature $\sig$ if all the atoms occuring in any of its rules are over a predicate in $\sig$.

\paragraph{Homomorphisms}
A \emph{substitution} $\pi$ is a function from \textsf{Vars} to \textsf{Vars}.
For an atom $\predicate{P}(\vec{t})$, let $\pi(\predicate{P}(\vec{t}))$ be the atom that results from replacing all occurrences of every variable $x$ in $\predicate{P}(\vec{t})$ with $\pi(x)$ if the latter is defined.
A \emph{homomorphism} $\pi$ from an instance $\mathcal{A}$ to an instance $\mathcal{B}$ is a substitution such that $\pi(\mathcal{A}) \subseteq \mathcal{B}$.
An \emph{isomorphism} $\pi$ from $\mathcal{A}$ to $\mathcal{B}$ is an injective and surjective homomorphism from $\mathcal{A}$ to $\mathcal{B}$. 
Two atom sets $\mathcal{A}$ and $\mathcal{B}$ are \emph{homomorphically equivalent} if there is a homomorphism from $\mathcal{A}$ to $\mathcal{B}$ and a homomorphism from $\mathcal{B}$ to $\mathcal{A}$. We denote homomorphic equivalence with $\homequiv$ symbol.

\paragraph{Queries}
A \emph{conjunctive query} (CQ) $\cq(\vx)$ is a pair of a first-order formula of the form $\exists \vz~ B(\vx,\vz)$, where $B$ is a non-empty conjunction of atoms, and a tuple $\vx$ of \emph{answer variables} whose support is the set of free variables of the formula.
A CQ is \emph{Boolean} if it features no answer variable.
A \emph{union of conjunctive queries} (UCQ) $\ucq(\vx)$ is a pair of disjunction of CQs, that we often see as a set of CQs $\set{\cq_1(\vx_1),\dots,\cq_n(\vx_n)}$, and a tuple of answer variables $\vx$ such that for all $i$, $\vx_i$ is a specialization $\vx$.
Given a UCQ $\ucq(\vx,\vy)$ and a tuple of variables $\vx'$ compatible with $\vx$, we denote with $\ucq(\vx',\vy)$ the UCQ $\ucq[\vx\mapsto\vx'](\vx',\vy)$.
Under standard first-order semantics, an instance $\instance$ entails a UCQ $\ucq(\vx)=\set{\cq_1(\vx_1),\dots,\cq_n(\vx_n)}$ for a given tuple $\vt$ of elements of $\adom{I}$ compatible with $\vx$ if there is a homomorphism from $\cq_k(\vx_k[\vx\mapsto\vt])$ to $\instance$ for some $1 \leq k \leq n$. If, furthermore, this homomorphism is injective, then $\instance$ \emph{injectively entails} $\ucq$. We denote these facts with $\instance\models \ucq(\vt)$ and $\instance\modelsinj \ucq(\vt)$, resp.
For a rule set $\rs$, an instance $\inst$ and a UCQ $\ucq$, we write $\pair{\rs, \instance} \models \ucq(\vt)$ to indicate that $\rs \cup \instance$ entails (over unrestricted models) $\ucq(\vx)$ for $\vt$, and $\pair{\rs, \instance} \modelsinj \ucq(\vt)$ when $\rs \cup \instance$ injectively entails (over unrestricted models) $\ucq(\vx)$ for $\vt$.

\subsection{The Chase}
Given a rule set $\rs$, an ($\rs$-)\emph{trigger} over some instance $\instance$ is a pair $\fulltrig$ of a rule $\eru\in\rs$ and a homomorphism $\hom$ from $\body{\eru}$ to some subset of $\instance$.
The \emph{output} of a trigger $\trig=\fulltrig$ over $\instance$ is the instance $\trigoutput{\trig}=\hom'(\head{\eru})$, where $\hom'$ is a homomorphism extending $\hom$ by mapping existentially quantified variables of $\head{\eru}$ to fresh variables. With $\function{triggers}(\inst, \rs)$ we denote the set of all triggers for $\inst$ and $\rs$.
Step $k$ of the (oblivious) chase~\cite{oblivious_chase_original,oblivious_chase_modern} is denoted with $\step{k}{\instance,\rs}$, and defined by
\[\small\step{0}{\instance,\rs}=\instance\quad\quad\step{n+1}{\instance,\rs}\;=\;\step{n}{\instance,\rs}\cup\bigcup_{\trig\in T_n}\trigoutput{\trig} \tag{for $n \in \mathbb{N}_{>0}$}\]
where $T_n=\function{triggers}(\step{n}{\instance,\rs}, \rs)\setminus\function{triggers}(\step{n-1}{\instance,\rs}, \rs)$.
The \emph{result of the chase} from $\pair{\instance,\rs}$, denoted with $\chase{\instance,\rs}$, is the union of $\step{n}{\instance,\rs}$ over all $n$. It is well known that $\chase{\instance,\rs}$ is a universal model of $\instance$ and $\rs$. To ease the reading, we denote $\step{n}{\topinst,\rs}$ as $\step{n}{\rs}$ and $\chase{\topinst,\rs}$ as $\chase{\rs}$.

Given an instance $\instance$ and a rule set $\rs$, a \emph{chase term} is an element of $\adom{\chase{\instance,\rs}}\setminus\adom{\instance}$. Thus, every chase term is created by applying some trigger.
The \emph{frontier} of a chase term $t$ created by some trigger $\fulltrig$ is the set $\hom(\fr{\eru})$.

\subsection{UCQ-rewritability}

\begin{definition}\label{def:ucq-rewritability}
    A rule set $\rs$ is \emph{UCQ-rewritable} if for all CQs $\cq(\vx)$, there is a UCQ $\ucq(\vx)$ such that for all instances $\instance$ and tuples $\vt$ of elements $\adom{\instance}$ compatible with $\vx$, we have
    \[\pair{\instance,\rs}\models \cq(\vt) \iff \instance \models \ucq(\vt).\]
    The UCQ $\ucq(\vx)$ is a (UCQ-)\emph{rewriting} of $\cq(\vx)$ and is denoted with $\rew(\cq(\vx),\rs)$.
\end{definition}

Note that $\rew(\cq(\vx),\rs)$ is not unique; however, we can take the minimal one, which is uniquely defined up to bijective renaming of variables \cite{sound_complete_minimal_ucq_rewritings}. Moreover, the notion of UCQ rewriting can easily be lifted from CQs to UCQs.

\begin{definition}
    A rule set has \emph{bounded derivation depth (bdd)} if for every CQ $\cq$, there is some $k\geq 0$ such that for all instances $\instance$, we have $\pair{\instance,\rs}\models \cq$ if and only if  $\step{k}{\instance,\rs}\models \cq$. We use \bdd to denote the class of rule sets with bdd. For a given $\cq$, we call the minimal $k$ as above the \bdd-\emph{constant} of $\cq$ and denote it with $\bdd(\cq,\rs)$.
\end{definition}

Interestingly the notions coincide. For a full proof of this result, see \cite{bdd_is_fus}.

\begin{proposition}\label{prop:ucq-rewritable_is_bdd}
    A rule set is UCQ-rewritable if and only if it is \bdd.
\end{proposition}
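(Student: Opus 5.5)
We prove the two directions separately. Both rest on one auxiliary fact about the oblivious chase, which we establish first by induction on $n$:

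\emph{(Step monotonicity.)} If $\hom$ is a homomorphism from $\instance$ to $\jnst$, then $\hom$ extends to a homomorphism from $\step{n}{\instance,\rs}$ to $\step{n}{\jnst,\rs}$, for every $n$.

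For the inductive step, take a trigger $\fulltrig$ that is new on $\step{n}{\instance,\rs}$. Composing with the current extension of $\hom$ gives a trigger on $\step{n}{\jnst,\rs}$. That trigger need not be new at step $n$, but it was applicable at some step $m\le n$. Its output therefore already lies in $\step{m+1}{\jnst,\rs}\subseteq\step{n+1}{\jnst,\rs}$. We map the fresh variables of $\trigoutput{\fulltrig}$ onto the fresh variables of that output. This bookkeeping around the ``new triggers only'' definition of $T_n$ is the one point needing care; everything else is routine.

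\textbf{UCQ-rewritable $\Rightarrow$ \bdd.} Fix a CQ $\cq(\vx)$ and let $\ucq=\rew(\cq,\rs)=\set{\cq_1,\dots,\cq_m}$. For each $i$, let $\instance_i$ be the canonical instance of $\cq_i$, with its answer variables kept as elements. Trivially $\instance_i\models\cq_i$, so $\pair{\instance_i,\rs}\models\cq$ by the rewriting property. Since $\chase{\instance_i,\rs}$ is a universal model, it entails $\cq$. A match of $\cq$ uses finitely many atoms, so there is $k_i$ with $\step{k_i}{\instance_i,\rs}\models\cq$. Set $k=\max_i k_i$.

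Now let $\instance$ be any instance with $\pair{\instance,\rs}\models\cq(\vt)$. Then $\instance\models\cq_i(\vt)$ for some $i$, witnessed by a homomorphism $\hom:\instance_i\to\instance$. By step monotonicity, $\hom$ extends to a homomorphism $\step{k}{\instance_i,\rs}\to\step{k}{\instance,\rs}$, and hence $\step{k}{\instance,\rs}\models\cq(\vt)$. The converse implication is immediate because $\step{k}{\instance,\rs}\subseteq\chase{\instance,\rs}$. So $k$ is a bdd bound for $\cq$.

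\textbf{\bdd $\Rightarrow$ UCQ-rewritable.} Fix $\cq(\vx)$ and let $k=\bdd(\cq,\rs)$. Let $b$ be the maximal body size in $\rs$, and set $N=|\cq|\cdot b^{k}$. By induction on $j$, each atom of $\step{j}{\instance,\rs}$ is produced by a derivation that uses at most $b^{j}$ atoms of $\instance$. Define $\ucq(\vx)$ as the set, up to renaming, of all CQs whose canonical instance $\jnst$ has at most $N$ atoms and satisfies $\step{k}{\jnst,\rs}\models\cq$ on the answer tuple. This set is finite, because the signature of $\rs\cup\set{\cq}$ is finite and the size is bounded.

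\emph{Soundness.} If $\instance\models\ucq(\vt)$, then some such $\jnst$ maps homomorphically into $\instance$. By step monotonicity $\step{k}{\instance,\rs}\models\cq(\vt)$, so $\pair{\instance,\rs}\models\cq(\vt)$.

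\emph{Completeness.} If $\pair{\instance,\rs}\models\cq(\vt)$, then $\step{k}{\instance,\rs}\models\cq(\vt)$ by \bdd. Collect the supporting atoms of the match into a subinstance $\jnst\subseteq\instance$ with $|\jnst|\le N$. The same derivations replay inside $\step{k}{\jnst,\rs}$ up to renaming of fresh variables; this is step monotonicity applied to the inclusion $\jnst\to\instance$, read in the reverse direction for the specific triggers used. Hence $\jnst$, viewed as a CQ with answer variables $\vt$, lies in $\ucq$ up to a specialization compatible with $\vx$, and $\instance\models\ucq(\vt)$.

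The main obstacle is the reverse replay in completeness. One must check that every trigger used to derive the match is already a trigger over $\step{j}{\jnst,\rs}$, which holds because its body image lies among supported atoms. One must also handle the freshness of variables, which is done by defining the renaming inductively along the derivation.
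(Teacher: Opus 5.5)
Your proof is correct. The paper gives no argument of its own for this proposition and only refers to \cite{bdd_is_fus}, so your write-up is a self-contained replacement, not a variant of an in-paper proof. As far as I can tell, both directions follow the classical pattern. For rewritability $\Rightarrow$ \bdd you chase the canonical instance of each of the finitely many disjuncts and take the maximal depth. For \bdd $\Rightarrow$ rewritability you bound the number of database atoms supporting a depth-$k$ match by $|\cq|\cdot b^k$ and take all small instances that entail $\cq$ within $k$ steps. Compared with deferring to the literature, your argument works directly with the paper's own breadth-first oblivious chase. Your treatment of the ``new triggers only'' sets $T_n$ in step monotonicity is exactly the point a citation to a differently formalised chase glosses over. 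It also needs no unification machinery. The price is a large, non-minimal rewriting, which is harmless because the paper only uses existence and then passes to the minimal rewriting.

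Two details should be stated explicitly in a final version. First, the replay in the completeness step is not literally step monotonicity, which maps $\step{k}{\jnst,\rs}$ into $\step{k}{\instance,\rs}$, the wrong way. It is a separate induction over a fixed choice of generating trigger for each derived atom. It must use one global renaming for the union of the derivations of all match atoms, because match atoms may share chase terms. This renaming is well defined because each chase term has a unique creating trigger. That trigger lies in the derivation of every derived atom containing the term: a term not created by an atom's generating trigger is a frontier image, and so it occurs in an earlier body atom. The same observation shows that every element of $\instance$ occurring in the match, in particular every element of $\vt$, lies in $\jnst$, so the renaming fixes $\vt$. Second, the paper states \bdd for a CQ without an answer tuple. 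If that is read as Boolean-only, your use of $\pair{\instance,\rs}\models\cq(\vt)\Rightarrow\step{k}{\instance,\rs}\models\cq(\vt)$ needs the routine reduction that tags the answer positions with fresh unary predicates. Also, finiteness of your UCQ uses that $\rs$ is finite.
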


\begin{lemma}\label{lem:bdd-stacking}
    Let $\rs_1$ and $\rs_2$ be \bdd. If for every instance $\inst$ it holds that $\chase{\chase{\inst, \rs_1}, \rs_2}$ is homomorphically equivalent to $\chase{\inst, \rs_1 \cup \rs_2}$ then $\rs_1 \cup \rs_2$ is \bdd.
\end{lemma}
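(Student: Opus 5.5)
To prove \cref{lem:bdd-stacking}, I would use the characterisation of \cref{prop:ucq-rewritable_is_bdd}: instead of bounding chase depth directly, I would show that $\rs_1 \cup \rs_2$ is UCQ-rewritable by composing the rewritings of $\rs_2$ and $\rs_1$. The starting observation is that entailment of a CQ $\cq(\vx)$ with answer tuple $\vt \subseteq \adom{\inst}$ is preserved under homomorphic equivalence, provided the homomorphisms fix $\adom{\inst}$. Both chases contain $\inst$, and the homomorphisms given by universality of the chase can be chosen to be the identity on $\adom{\inst}$. So for every instance $\inst$ the hypothesis gives
\[\pair{\inst,\rs_1\cup\rs_2}\models \cq(\vt) \iff \chase{\chase{\inst,\rs_1},\rs_2}\models \cq(\vt).\]
One point needs care here: the hypothesis only states homomorphic equivalence, not that it fixes $\inst$. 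I would handle this by noting that $\chase{\chase{\inst,\rs_1},\rs_2}$ is a model of $\rs_1\cup\rs_2$ containing $\inst$, since the second chase preserves all atoms. Thus $\chase{\inst,\rs_1\cup\rs_2}$ maps into it while fixing $\inst$. For the converse, the rewriting argument below only ever uses $\rs_1$-entailment and $\rs_2$-entailment, each of which is implied by $\rs_1\cup\rs_2$-entailment. Hence both directions hold without appealing to the hypothesis's homomorphism beyond the equivalence of the models.

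Next I would take $\ucq_2(\vx)=\rew(\cq(\vx),\rs_2)$, which exists because $\rs_2$ is \bdd. Its defining property, applied to the (possibly infinite) instance $\jnst=\chase{\inst,\rs_1}$, gives $\chase{\jnst,\rs_2}\models\cq(\vt)$ iff $\jnst\models\ucq_2(\vt)$. This step is a subtlety, since \cref{def:ucq-rewritability} quantifies over instances that may be finite. I would handle it by compactness: if the chase of $\jnst$ entails $\cq$, then some finite subinstance $\jnst'\subseteq\jnst$ containing $\vt$ already has a chase that entails $\cq$, because the chase derivation uses only finitely many atoms. Then $\jnst'\models\ucq_2$, and hence $\jnst\models\ucq_2$. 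The converse direction follows by monotonicity.

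Then I would let $\ucq(\vx)=\rew(\ucq_2(\vx),\rs_1)$, using the lift of rewritings to UCQs mentioned after \cref{def:ucq-rewritability}. This gives $\chase{\inst,\rs_1}\models\ucq_2(\vt)$ iff $\inst\models\ucq(\vt)$. Chaining the equivalences yields $\pair{\inst,\rs_1\cup\rs_2}\models\cq(\vt)$ iff $\inst\models\ucq(\vt)$, so $\rs_1\cup\rs_2$ is UCQ-rewritable, and therefore \bdd by \cref{prop:ucq-rewritable_is_bdd}.

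I expect the main obstacle to be the first step: making sure the homomorphic equivalence in the hypothesis respects the constants of $\inst$, so that non-Boolean answers transfer correctly. For Boolean queries this issue disappears. Otherwise, I would either read the hypothesis as equivalence over $\inst$, or use the model argument sketched above.
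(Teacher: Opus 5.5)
Your main line (rewrite $\cq$ against $\rs_2$, rewrite the resulting UCQ against $\rs_1$, and transfer entailment through the hypothesis) is exactly the paper's proof. Your compactness remark about applying $\rew(\cq,\rs_2)$ to the possibly infinite $\chase{\inst,\rs_1}$ is a correct, if optional, addition.

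The step that fails is the ``model argument'' you offer for the constants issue. $\chase{\chase{\inst,\rs_1},\rs_2}$ is a model of $\rs_2$, but in general not of $\rs_1$, because atoms produced by $\rs_2$ can create $\rs_1$-triggers that are never fired. For example, take $\rs_1=\{\epred(x,y)\wedge\bpred(y)\to\predicate{C}(x)\}$, $\rs_2=\{\predicate{C}(x)\to\bpred(x)\}$ and $\inst=\{\epred(a,b),\epred(b,c),\bpred(c)\}$. Then $\chase{\chase{\inst,\rs_1},\rs_2}=\inst\cup\{\predicate{C}(b),\bpred(b)\}$, which violates $\rs_1$ at $\epred(a,b)\wedge\bpred(b)$.

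The claim cannot hold in general either. A homomorphism $\chase{\chase{\inst,\rs_1},\rs_2}\to\chase{\inst,\rs_1\cup\rs_2}$ fixing $\adom{\inst}$ always exists; this is the direction you call the converse. So if your claim were true, the hypothesis of the lemma would hold for every pair of rule sets, and \bdd would be closed under union. The same $\rs_1,\rs_2$ refute that: each is non-recursive, hence \bdd, while their union propagates $\bpred$ backwards along $\epred$-paths of unbounded length. The direction $\chase{\inst,\rs_1\cup\rs_2}\to\chase{\chase{\inst,\rs_1},\rs_2}$ is precisely what the hypothesis has to supply. Your remark that both directions hold without appealing to the hypothesis's homomorphism therefore has it backwards.

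So drop that argument and take your other option: read the hypothesis as a homomorphic equivalence that is the identity on $\adom{\inst}$. This is what the paper tacitly uses. It is also what \cref{obs:simple-top-to-jnst} actually provides where the lemma is invoked, in \cref{cor:db-into-rs-prop-two}.

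If you prefer to extract this from the literal statement, use the fact that the hypothesis ranges over all instances. Apply it to $\inst\cup\{\predicate{P}_a(a)\mid a\in\adom{\inst}\}$, with one fresh unary predicate $\predicate{P}_a$ per element. No rule mentions these predicates, so both chases merely acquire these atoms, and any homomorphism between them must send each $a$ to itself. With this repair your chain of equivalences goes through and coincides with the paper's argument.
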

\begin{proof}
    Let $\ucq$ be a rewriting of a CQ $q$ against $\rs_2$ and then $\ucq'$ be a rewriting of $\ucq$ against $\rs_1$. Note that $\ucq'$ is a rewriting of $\cq$ against $\rs_1 \cup \rs_2$ as $\chase{\chase{\inst, \rs_1}, \rs_2} \homequiv \chase{\inst, \rs_1 \cup \rs_2}$.
\end{proof}

Later, we will need injective rewritings, meaning we would like to have \cref{def:ucq-rewritability} rephrased:

\begin{customdef}{\ref{def:ucq-rewritability}}[rephrased]
    A rule set $\rs$ is \emph{UCQ-rewritable} if for all CQs $\cq(\vx)$, there is a UCQ $\ucq(\vx)$ such that for all instances $\instance$ and tuples $\vt$ of elements $\adom{\instance}$ compatible with $\vx$, we have:
    \[\pair{\instance,\rs}\models \cq(\vt) \iff \instance \modelsinj \ucq(\vt)\]
    The UCQ $\ucq(\vx)$ is an 
    \emph{injective rewriting} of $\cq(\vx)$ and is denoted with $\rew_{\mathsf{inj}}(\cq(\vx),\rs)$.
\end{customdef}

The above is indeed an equivalent definition due to the following:

\begin{proposition}\label{prop:injective-rewritings}
    For every UCQ $\ucq$ there exists another UCQ $\ucq_{\mathsf{inj}}$ s.t.
    for every instance $\inst$ and tuple $\va$:
    $$\inst \models \ucq(\va) \;\;\iff\;\; \exists{\cq \in \ucq_{\mathsf{inj}}}\quad \inst \models_{\mathsf{inj}} \cq(\va) \;\;\iff\;\; \inst \models \ucq_{\mathsf{inj}}(\va).$$
\end{proposition}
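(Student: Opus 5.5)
The plan is to build $\ucq_{\mathsf{inj}}$ from all \emph{quotients} of the CQs in $\ucq$. For each CQ $\cq(\vx)=\exists\vz~B(\vx,\vz)$ in $\ucq$, and each equivalence relation $\sim$ on the variables of $\cq$ (answer and existential), let $\cq_\sim$ be the CQ obtained by replacing every variable with a chosen representative of its $\sim$-class. The answer tuple of $\cq_\sim$ is the image of $\vx$ under this replacement, so it is a specialization of $\vx$, and consequently of the answer tuple of $\ucq$, as the definition of a UCQ requires. We then set $\ucq_{\mathsf{inj}}=\set{\cq_\sim \mid \cq\in\ucq,\ \sim \text{ an equivalence on the variables of } \cq}$. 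Since each CQ has finitely many variables and $\ucq$ is finite, this is a finite UCQ.

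For the implication $\inst\models\ucq(\va)\Rightarrow\exists \cq'\in\ucq_{\mathsf{inj}}~\inst\modelsinj\cq'(\va)$, take a homomorphism $h$ witnessing some $\cq\in\ucq$, mapping the answer variables to $\va$. Let $\sim$ be its kernel: $u\sim v$ iff $h(u)=h(v)$. Then $h$ factors through the quotient, giving a map from $\cq_\sim$ to $\inst$. This map is injective by the choice of $\sim$, and it still sends the answer variables to $\va$. The one point needing care is compatibility. If $\va$ repeats an element at positions where $\vx$ has distinct variables, the kernel identifies those variables, and this is consistent with the convention that $\cq(\vx[\vx\mapsto\va])$ is evaluated after substitution. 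I expect this bookkeeping with answer tuples and specializations to be the only mildly delicate step.

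The implication $\exists \cq'\in\ucq_{\mathsf{inj}}~\inst\modelsinj\cq'(\va)\Rightarrow\inst\models\ucq_{\mathsf{inj}}(\va)$ is trivial, because an injective homomorphism is in particular a homomorphism. For the return implication $\inst\models\ucq_{\mathsf{inj}}(\va)\Rightarrow\inst\models\ucq(\va)$, a homomorphism $g$ from some $\cq_\sim$ to $\inst$ composes with the quotient map $\cq\to\cq_\sim$. The result is a homomorphism from $\cq$ to $\inst$ that sends the answer variables to $\va$, so $\inst\models\ucq(\va)$. This closes the cycle of equivalences.

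Finally, combining this with Proposition~\ref{prop:ucq-rewritable_is_bdd} and \cref{def:ucq-rewritability} gives the rephrased definition. Given any rewriting $\rew(\cq(\vx),\rs)$, the UCQ $\rew(\cq(\vx),\rs)_{\mathsf{inj}}$ is an injective rewriting, so the two definitions are equivalent.
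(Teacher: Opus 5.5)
Your proof is correct and follows the paper's approach. Your quotients $\cq_\sim$ by equivalence relations on the variables are exactly the specializations $\cq[\vx\mapsto\vy]$ the paper uses, and taking the kernel of $h$ is the paper's factorization of $h$ through a specialization followed by an injective homomorphism. The differences are cosmetic. The paper gets the second equivalence from idempotence of the construction, whereas you close the cycle by composing with the quotient map. Separately, for the answer tuple to be literally a specialization, you should pick an answer variable as representative whenever a class contains one.
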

\begin{proof}
    We will show the case when $Q$ is a CQ, the proof easily generalizes to the UCQ case. Let $\cq$ be a CQ and $\vx$ be the tuple of its variables. Note that whenever $\cq$ maps to some instance $\inst$ via a homomorphism $h$ we have that there exists a specialization $\vy$ of $\vx$ and an injective homomorphism $h'$ such that $[\vx \mapsto \vy] \circ h' = h$. Then the required UCQ is simply a disjunction of $\cq[\vx \mapsto \vy]$ for all specializations $\vy$ of $\vx$. To get the second equivalence note the above construction is idempotent. 
\end{proof}

\subsection{Other Useful Notions}

\paragraph{Graph Theoretical Notions}

A \emph{(directed) graph} is a pair $\pair{V,E}$ composed of a set $V$ of \emph{vertices}, and a subset $E$ of $V\times V$ whose elements are called \emph{edges}.
The \emph{size} of a graph is the number of its vertices.
We often say that an edge $\spair{v,w} \in E$ goes \emph{from} $v$ \emph{to} $w$.
A \emph{loop} is an edge from a node to itself.
A \emph{subgraph} $\pair{V',E'}$ of a graph $\pair{V,E}$ is a graph such that $V'\subseteq V$ and $E\subseteq E'$.
A subgraph $\pair{V',E'}$ of $\pair{V,E}$ is \emph{induced} if for all $v,w\in V'$, if $\spair{v,w}\in E$, then $\spair{v,w}\in E'$.
A \emph{path} between two vertices $s$ and $t$ is a sequence of vertices $v_0,\dots, v_n$ such that $v_0=s$, $v_n=t$ and for all $i<n$, $\spair{v_i,v_{i+1}}\in E$.
A \emph{cycle} is a path from a node to itself.
A graph is \emph{acyclic} if it does not contain any cycle.
A \emph{tournament} is a graph such that for all distinct vertices $v,w$, we have $\spair{v,w}\in E$ or\footnote{Note that this ``or'' is inclusive, contrary to the usual definition of a tournament.} $\spair{w,v}\in E$.
A \emph{$k$-coloring} of the edges of a graph $\pair{V,E}$ is a function from $E$ to $\set{1,\dots,k}$.\footnote{Note that we do not impose adjacent vertices to have different colors.}
Given a graph $G$ and a $k$-coloring $C$ of its edges, a subgraph $\pair{V',E'}$ of $G$ is \emph{colored with $i$} for some $i\leq k$ if for all $e\in E$, we have $C(e)=i$.
The following is a rephrasing of the Ramsey's theorem for directed graphs.
\begin{theorem}\label{cor:Ramsey}
    For any integers $s_1,\ldots,s_k \geq 1$, there is some integer $R(s_1,\ldots,s_k)$ such that for any tournament $T$ of size at least $R(s_1,\dots,s_k)$ and any $k$-coloring of the edges of $T$, there is a subgraph of $T$ that is a tournament of size $s_i$ colored with some $i$.
\end{theorem}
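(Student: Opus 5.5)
The plan is to reduce the statement to the classical multicolour Ramsey theorem for complete undirected graphs. Let $T=\pair{V,E}$ be a tournament with $|V|\geq R(s_1,\dots,s_k)$, where $R(s_1,\dots,s_k)$ is the classical undirected Ramsey number (its existence is recalled below), and let $C$ be a $k$-colouring of $E$. Since ``or'' in the definition of a tournament is inclusive, a pair $\{v,w\}$ of distinct vertices may carry both $\spair{v,w}$ and $\spair{w,v}$, possibly with different colours. So the first step fixes, for each such pair, one \emph{representative} edge. Fix an arbitrary total order on $V$. For $v<w$, take $\spair{v,w}$ as the representative if it lies in $E$, and otherwise take $\spair{w,v}$, which then lies in $E$ because $T$ is a tournament.

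This gives a $k$-colouring $c$ of the edges of the complete undirected graph on $V$: set $c(\{v,w\})=C(e_{vw})$, where $e_{vw}$ is the representative edge of $\{v,w\}$. By the classical multicolour Ramsey theorem there is some $i\leq k$ and a set $V'\subseteq V$ with $|V'|=s_i$ such that every pair in $V'$ has $c$-colour $i$. Let $E'$ be the set of representative edges of pairs inside $V'$. Then $\pair{V',E'}$ is a subgraph of $T$, it is a tournament because every pair of distinct vertices of $V'$ is joined by its representative edge, and all of its edges have colour $i$. This is exactly the required tournament of size $s_i$ coloured with $i$.

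The subgraph is deliberately not taken to be induced, since an induced one could contain an edge in the opposite direction with another colour. This is the only point where the inclusive ``or'' needs care, and the representative-edge choice handles it. I read ``coloured with $i$'' as ``all edges of the subgraph have colour $i$'', correcting the evident typo in the definition.

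For completeness I will recall why the undirected Ramsey number exists, by induction on $\sum_j s_j$. If some $s_j=1$, one vertex suffices, so $R=1$. Otherwise set
\[
R(s_1,\dots,s_k)\;\leq\;1+\sum_{j=1}^{k}\bigl(R(s_1,\dots,s_j-1,\dots,s_k)-1\bigr).
\]
To justify this bound, pick a vertex $v$ and partition the remaining vertices by the colour of their edge to $v$. By pigeonhole, some class $j$ has at least $R(s_1,\dots,s_j-1,\dots,s_k)$ vertices. By induction, that class contains either a clique of size $s_l$ monochromatic in colour $l$ for some $l\neq j$, which is already the required set, or a clique of size $s_j-1$ in colour $j$, to which we add $v$. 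This is routine and standard; the main care in the whole proof is the reduction step above, not the Ramsey argument itself.
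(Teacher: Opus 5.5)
Your reduction to the undirected multicolour Ramsey theorem is correct, and it is exactly what the paper takes for granted: the paper gives no proof and presents the statement as a rephrasing of Ramsey's theorem. Picking one representative edge per unordered pair and returning the non-induced subgraph of representative edges is the right way to handle the inclusive ``or''. It also keeps loops out of $E'$. It gives exactly the form used later, where each $\epred$-edge is coloured by a valley query. Your readings of the typos ($C(e)=i$ for $e\in E'$, and likewise $E'\subseteq E$ in the definition of subgraph) are the intended ones.

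The flaw is in your ``for completeness'' proof of the classical theorem. Write $R_j=R(s_1,\dots,s_j-1,\dots,s_k)$; your recursion $R(s_1,\dots,s_k)\le 1+\sum_j (R_j-1)$ is false. For $k=1$ it reads $R(s)\le R(s-1)$, which forces $R(s)\le 1$. For two colours it gives $R(2,2)\le 1$ and $R(3,3)\le 5$, while $R(3,3)=6$ (colour the pentagon and the pentagram of $K_5$ differently). The step that fails is the pigeonhole. With $1+\sum_j (R_j-1)$ vertices, removing $v$ leaves exactly $\sum_j(R_j-1)$ vertices. These may split as $R_j-1$ vertices in each class $j$, so no class is forced to reach size $R_j$.

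The repair is one extra vertex: use $R(s_1,\dots,s_k)\le 2+\sum_j (R_j-1)=2-k+\sum_j R_j$. For $k=2$ this is the familiar $R(s,t)\le R(s-1,t)+R(s,t-1)$. Then at least $1+\sum_j(R_j-1)$ vertices remain after removing $v$, so pigeonhole does give a class $j$ of size at least $R_j$. The rest of your induction then goes through unchanged. Alternatively, just cite the classical theorem, as the paper does. Your reduction, which is the actual content here, is unaffected either way.
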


\paragraph{Multisets}

A \emph{multiset over a set $D$} is a function $M$ from $D$ to $\nats$.
A multiset is \emph{finite} if $\set{x\in D\mid M(x)>0}$ is finite.
The size of a finite multiset $M$ is $|M|=\sum_{x\in D} M(x)$. The empty multiset is $\memptyset:x\mapsto 0$.
Given a finite list $[x_1,\dots,x_n]$ of elements of $D$, we denote the multiset $x\mapsto|\set{i\mid x_i=x}|$ with $\multiset{x_1,\dots,x_n}$.
Given a function $f$ from some finite set $E$ to $D$, we denote the multiset $x\mapsto|f^{-1}(x)|$ with $\multiset{f(x) \mid x\in E}$.
The \emph{union} of two multisets $M$ and $N$ is the multiset $M\mcup N:x\mapsto M(x)+N(x)$, their intersection is $M\mcap N:x\mapsto \min(M(x),N(x))$, and their difference is $M\msetminus N:x\mapsto \max(M(x)-N(x),0)$.

Consider a strict order $<$ over $D$. The maximum of a multiset $M$ over $D$ is $\mmax(M)=\max\set{x\in D\mid M(x)>0}$ whenever it is defined, and undefined otherwise.
The strict lexicographical order $\slexorder$ over multisets is defined inductively as follows: for all non-empty multisets $M$, $\memptyset\slexorder M$, and given two non-empty multisets $M$ and $N$, we have $M\slexorder N$ if and only if $\mmax(M)<\mmax(N)$, or if $\mmax(M)=\mmax(N)$ and $(M\msetminus\multiset{\mmax(M)}) \slexorder (N\msetminus\multiset{\mmax(N)})$.
The lexicographical order $\lexorder$ is defined by $M\lexorder N$ if and only if $M\slexorder N$ or $M=N$ as usual.

\begin{lemma}\label{lem:lexorder_wf}
    If $<$ is well-founded on $D$, then for all $k$, the order $\slexorder$ is well-founded on the set of multisets over $D$ of size at most $k$.
\end{lemma}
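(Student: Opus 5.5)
We argue by induction on $k$, the bound on multiset size. The crux is that a multiset of size at most $k$ is determined by its maximum together with a multiset of size at most $k-1$ (the remainder $M\msetminus\multiset{\mmax(M)}$). For $k=0$ the only multiset is $\memptyset$, and the order is trivially well-founded. Assume now that $\slexorder$ is well-founded on multisets of size at most $k-1$, and suppose for contradiction that $M_0 \mathrel{>_{\mathrm{lex}}} M_1 \mathrel{>_{\mathrm{lex}}} M_2 \cdots$ is an infinite strictly decreasing chain of multisets of size at most $k$.

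Since $\memptyset$ is below every non-empty multiset, it can only occur as the last element of a chain. Hence all $M_i$ are non-empty and $m_i=\mmax(M_i)$ is defined. (Here we implicitly use that $\mmax$ exists for finite non-empty multisets, i.e.\ that $<$ is total on $D$, as the definition of $\mmax$ presupposes.) By the definition of $\slexorder$, $M_{i+1}\slexorder M_i$ forces $m_{i+1}\le m_i$. So $(m_i)_i$ is a non-increasing sequence in $D$. Because $<$ is well-founded, this sequence cannot strictly decrease infinitely often, so it is eventually constant: there is $N$ with $m_i=m$ for all $i\ge N$.

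For $i\ge N$ the maxima agree, so the definition gives
\[
M_{i+1}\msetminus\multiset{m} \;\slexorder\; M_i\msetminus\multiset{m}.
\]
This yields an infinite strictly decreasing chain of multisets of size at most $k-1$, contradicting the induction hypothesis.

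The only subtle points are bookkeeping. First, the removal of one copy of the maximum must reduce the size by exactly one; this holds because $M_i(m)>0$. Second, we need $\slexorder$ to be well defined (irreflexive and transitive) on these multisets, which follows from a routine induction on size; we only use the defining clauses anyway. There is no real obstacle beyond stating the eventual constancy of the maxima correctly.
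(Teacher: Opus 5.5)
Your proof is correct and takes essentially the paper's route: induction on $k$, with well-foundedness of $<$ controlling the maxima and the induction hypothesis applied to the remainders $M\msetminus\multiset{\mmax(M)}$. The only difference is presentational (you rule out infinite descending chains via eventual constancy of the maxima, while the paper builds the minimum of an arbitrary set directly as $M'_{\min}\mcup\multiset{m}$ with $m$ the least maximum), and your explicit treatment of $\memptyset$ and of the totality of $<$ is slightly more careful than the paper's.
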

\begin{proof}
    We proceed by induction on $k$. As the only multiset of size $0$ is $\memptyset$, the case $k=0$ is trivial. Then, consider a set $\mathbb{S}_k$ of multisets of size at most $k$ over $D$. Then, since all the elements of $\mathbb{S}_k$ are finite, they all admit a maximum. Since $<$ is well-founded over $D$, the set $\set{\mmax(M)\mid M\in\mathbb{S}_k}$ admits a minimum $m$. Consider the set $\mathbb{S}_{k-1}=\set{M\mid M\mcup\multiset{m}\in\mathbb{S}_k\wedge \mmax(M\mcup\multiset{m})=m}$. As all the multisets in this set have size at most $k-1$, by induction hypothesis, it admits a minimum $M'_{\min}$. Then, we claim that $M_{\min}=M'_{\min}\mcup\multiset{m}$ is the minimum of $\mathbb{S}_k$. Consider an element $M$ of $\mathbb{S}_k$. If $M\msetminus\multiset{m}\notin\mathbb{S}_{k-1}$, then $\mmax(M)>m=\mmax(M_{\min})$ by definition of $\mathbb{S}_{k-1}$, so $M_{\min}\lexorder M$. Otherwise, if $M\msetminus\multiset{m}\in\mathbb{S}_{k-1}$, then $\mmax(M)=m=\mmax(M_{\min})$, and $M_{\min}\msetminus\multiset{m}=M'_{\min} \lexorder M\msetminus\multiset{m}$ as $M'_{min}$ is the minimum of $\mathbb{S}_{k-1}$. Thus, $M_{\min}\lexorder M$, which concludes the proof.
\end{proof}

\section{The Main Result}
\label{sec:main-result}

\begin{definition}\label{def:arb-tournament}
Given a rule set $\rs$, an instance $\instance$ and a binary predicate $\epred$, we write $\arbtour{\rs}{\instance}{\epred}$ to denote the following query:\footnote{In order for $\models$ to make sense with $\arbtour{}{}{\epred}$ we can define the query via the following second order logic formula: $\Forall{\xpred}\text{finite, }\Exists{\ypred}\; \Forall{x,y}\Exists{x',y'}\; \xpred(x) \land \xpred(y) \;\rightarrow\; \ypred(x,x') \land \ypred(y,y') \land \epred(x',y').$ The finiteness of $X$ is expressed by stating that all functions from $X$ to $X$ are injective if and only if they are surjective.} \emph{for all integers $k$, there exists a $\epred$-tournament of size $k$ in the given instance.}
\end{definition}

\begin{definition}\label{def:loop-query}
    We define the $\epred$-\emph{loop query} $\exists{x}~\epred(x,x)$ and denote it with $\loopq$.
\end{definition}

For the rest of the paper, let us fix a binary predicate $\epred$ for use in both $\arbtoure$ and $\loopq$.

\begin{customthm}{\ref{thm:main}}[restated]
For every signature $\signature$, every UCQ-rewritable rule set $\rs$ over $\signature$, and every instance $\instance$ we have: 
\begin{align*}
\mainstatement{\inst}{\rs}.\tag{\textup{\sympawn}}
\end{align*}
\end{customthm}

In the context of this theorem and statements with a similar structure we call (\ref{tag:thm:main}) the \emph{consequence} while the preceding part is the \emph{assumption}. We shall devote the rest of the paper to the proof of the above, main result. 

\section{Reducing the Main Theorem}
\label{sec:surgeries}
The goal of this section is to reduce the assumption of the main theorem of the paper to a more controlled one in which the considered instance is always $\topinst$ and the rule set satisfies a number of properties. This overarching goal is reflected in \cref{thm:main-regal} which - as we show in this section - is equivalent to \cref{thm:main}. The challenge here lies in showing that \cref{thm:main-regal} implies \cref{thm:main} as the other direction is trivial. The proof goes via a number of steps - each showing equivalence to some intermediate statement (\cref{lem:main-a,lem:main-b,lem:main-cde}) with progressively stronger assumptions. The toolkit of the section consists of a number of rule set surgeries of varying degrees of complexity.

All the parts of the proof (of equivalence of \cref{thm:main} and \cref{thm:main-regal}), will follow the same formula, which we will discuss alongside the next step.

\subsection{Encoding Instances in Rule Sets} 
We begin with the first intermediate statement, enforcing the instance in \cref{thm:main} to be $\topinst$.

\begin{lemma}\label{lem:main-a}
    For every signature $\signature$, and every UCQ-rewritable rule set $\rs$ over $\signature$ we have: 
    \begin{align*}
        \mainstatementtopinstance{\topinst}{\rs}.\tag{\textup{\symknight}}\label{tag:lem:main-a}
    \end{align*}
\end{lemma}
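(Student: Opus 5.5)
The plan is to prove (\symknight) directly. I would not treat it as a consequence of \cref{thm:main}, since that theorem is what we are ultimately trying to establish. Instead I would route it through the stronger, more controlled statement this section is building toward.

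\textbf{Step 1: normalise the rule set.} Starting from a UCQ-rewritable $\rs$ and the instance $\topinst$, I would apply a sequence of surgeries: single-head normalisation, separating Datalog rules from existential ones, and adding auxiliary predicates that record the frontier of each created term. Each surgery $\rs\mapsto\rs'$ must satisfy three conditions:
\begin{itemize}
\item $\rs'$ is still \bdd, which I would check with \cref{lem:bdd-stacking} by stacking a \bdd Datalog part on top of the original rules;
\item $\chase{\rs'}$ restricted to $\signature$ is homomorphically equivalent to $\chase{\rs}$, so that $\loopq$ is preserved in both directions;
\item arbitrarily large $\epred$-tournaments survive, which needs care because tournaments are not preserved by homomorphisms in the backward direction.
\end{itemize}
To get the third condition I would only use surgeries whose chase contains a copy of $\chase{\rs}$ on the original terms.

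\textbf{Step 2: the core argument, by contradiction.} Suppose $\chase{\rs}$ contains $\epred$-tournaments of every size but no $\epred$-loop. Take the rewriting $\rew_{\mathsf{inj}}(\epred(x,y),\rs)$, using \cref{prop:injective-rewritings}, with \bdd-constant $k$. Every edge $\epred(a,b)$ of a tournament is then witnessed by one of finitely many disjuncts, mapped injectively into $\step{k'}{\rs}$ for the relevant stage.

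Using the finiteness of the rewriting, I would colour each tournament edge by the pair (disjunct, isomorphism type of the witness relative to $a,b$). \cref{cor:Ramsey} then yields arbitrarily large monochromatic sub-tournaments.

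\textbf{Step 3: choose a minimal witness.} Among these, I would pick one minimising the multiset of creation times (timestamps) of the witnessing terms in the order $\slexorder$. This choice is legitimate by \cref{lem:lexorder_wf}, because the witnesses have bounded size.

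\textbf{Step 4: extract a loop.} In a large homogeneous tournament, two vertices $x_i,x_j$ have witnesses of identical shape, and so do their incoming and outgoing edges to a third vertex. The aim is to build a homomorphism from a disjunct of the rewriting of $\epred(x,y)$ into $\chase{\rs}$ that sends $x$ and $y$ to the same term. Since the rewriting is sound, that would give $\epred(t,t)$, contradicting the assumption of no loop. If instead such a merge is impossible, I would use the witness structure to produce a tournament with a strictly $\slexorder$-smaller timestamp multiset, contradicting minimality.

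\textbf{Main obstacle.} I expect the hardest step to be the last one: showing that the terms witnessing the edges can be identified, that is, that chase terms created by the same trigger pattern from frontiers lying inside the homogeneous tournament can be mapped onto each other. My first attempt would be an induction on the timestamp multiset. At each stage I would replace the newest witnessing term by an older one whose frontier is again a tournament vertex, and I would use \bdd to bound how far back these replacements can reach.
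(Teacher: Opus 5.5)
\textbf{There is a genuine gap: the step that produces the loop is missing.}

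Your overall architecture matches the paper's: normalise the rule set, take the finite injective rewriting of $\epred(x,y)$, use Ramsey to get monochromatic tournaments, and use $\slexorder$-minimality of timestamp multisets. But Step~4 only states the goal (a disjunct mapped with $x,y$ sent to the same term) and an alternative (``otherwise a smaller tournament''), with no mechanism for either. Knowing that $\epred(k_1,k_2)$, $\epred(k_1,k_3)$ and $\epred(k_2,k_3)$ are witnessed by the same disjunct says nothing about whether their witnesses share terms. In the chase of an arbitrary \bdd rule set there is no reason they should.

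Your fallback does not repair this. You propose replacing the newest witnessing term by an older one whose frontier is a tournament vertex, with \bdd bounding how far back you go. But \bdd bounds the derivation depth needed to entail a fixed query. It does not bound the timestamps of tournament terms, which are necessarily unbounded. Nothing forces the frontiers to lie in the tournament either.

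\textbf{What the paper supplies instead.} The key is a normal form with a specific structural consequence, and your Step~1 does not aim at it. The paper needs three ingredients:
\begin{enumerate}
\item a binary signature, obtained by reification;
\item heads that are \emph{forward-existential} and \emph{predicate-unique}, obtained by streamlining. Ordinary single-head normalisation introduces high-arity predicates and leaves the graph setting in which the argument runs. Your ``auxiliary predicates recording the frontier'' resemble the streamlining's $\bpred_{y',z}$ atoms, but you never state or use the property they buy.
\item \emph{quickness}, obtained by body rewriting.
\end{enumerate}
Quickness is what justifies ``separating Datalog rules'': it gives $\chase{\rs}\homequiv\chase{\chase{\rs^{\exists}},\rs^{DL}}$. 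Witnesses can then be taken injectively in $\chase{\rs^{\exists}}$ rather than in some $\step{k'}{\rs}$ containing Datalog-derived atoms.

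In $\chase{\rs^{\exists}}$, every atom points forward in time, and each term has at most one $\apred$-predecessor per predicate $\apred$. Hence for a CQ whose variables all lie below one answer variable, the image of that variable determines all the others.

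The timestamp minimisation must then be done \emph{per edge}, not per tournament. A minimal witness has no maximal existential variable. Otherwise, replace that variable's atoms by the image of the body of the trigger that created it, and re-apply the rewriting to get a strictly $\slexorder$-smaller witness. So a minimal witness is a valley query.

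For a connected valley query with both $x$ and $y$ maximal, $q(s,t)$ forces $f_x(s)=f_y(t)$ for two functions $f_x,f_y$. A transitive triangle $k_1\to k_2\to k_3$, $k_1\to k_3$ in a monochromatic $4$-tournament then gives $f_x(k_2)=f_y(k_3)=f_x(k_1)=f_y(k_2)$, hence $q(k_2,k_2)$ and an $\epred$-loop. The one-peak case contradicts the out-degree of a $4$-tournament, and the disconnected case is handled directly.

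\textbf{Minor point.} Your third surgery condition is automatic, so you do not need the ``copy of the chase'' restriction. A homomorphism from a tournament is either injective on it or merges two adjacent vertices and creates an $\epred$-loop. So homomorphic equivalence already suffices.
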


The goal of this step is to show that \cref{lem:main-a} is equivalent to \cref{thm:main}.
As we are restricting the set of potential instances, the $(\Rightarrow)$ direction of the proof is trivial, that is \cref{thm:main} implies \cref{lem:main-a}.

In this and the following steps of the proof, we shall always prove the $(\Leftarrow)$ direction via contraposition. Therefore, assume signature $\signature$, rule set $\rs$, and instance $\inst$ form a counterexample to \cref{thm:main}, now we shall construct $\rs'$ such that $\rs'$ and $\signature$ constitute a counterexample to \cref{lem:main-a}.

\subsubsection*{Rule Set Surgery}

The counterexample construction will always consist of some rule set surgery.

\begin{definition}
    For any instance $\jnst$ we denote with $\top \to \jnst$ the following rule \[\narule{\top}{f(\adom{\jnst})}{\bigwedge_{\apred(\vt) \in \jnst} \apred(f(\vt))}\] where $f$ is a bijective renaming of terms to fresh variables.
\end{definition}


\begin{observation}\label{obs:simple-top-to-jnst}
    Let $\eru$ be any rule with $\top$ being its body, then for any rule set $\mathcal{S}$ and any instance $\jnst$ we have:
    $$\chase{\jnst, \mathcal{S} \cup \singleton{\eru}} \homequiv \chase{\chase{\jnst, \singleton{\eru}},\; \mathcal{S}}.$$
\end{observation}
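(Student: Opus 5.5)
To prove the observation, I will exhibit homomorphisms in both directions. The key point is that the body of $\eru$ is $\top$, and $\top$ belongs to every instance (instances are assumed to contain the nullary fact $\top$). So $\eru$ has exactly one trigger over any instance, namely $\pair{\eru,\emptyset}$ with the empty homomorphism, and that trigger is already active on $\jnst$ at step $0$. Consequently $\chase{\jnst,\singleton{\eru}}=\jnst\cup\trigoutput{\pair{\eru,\emptyset}}$, which is $\jnst$ together with one copy of $\head{\eru}$ on fresh terms.

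\emph{Direction $\chase{\chase{\jnst,\singleton{\eru}},\mathcal{S}}\to\chase{\jnst,\mathcal{S}\cup\singleton{\eru}}$.}
1. In the combined chase, the $\eru$-trigger fires at step $1$ and produces a copy of $\head{\eru}$ on fresh terms.
2. Map the head copy of $\chase{\jnst,\singleton{\eru}}$ onto this copy, and map $\jnst$ identically. This gives a homomorphism $g$ from $\chase{\jnst,\singleton{\eru}}$ into the combined chase.
3. Extend $g$ by induction on chase steps of the $\mathcal{S}$-chase. For every trigger $\pair{\sigma,\hom}$ with $\sigma\in\mathcal{S}$, the image $\pair{\sigma,g\circ\hom}$ is a trigger of the combined chase, because the combined chase contains $g$ applied to everything built so far. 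The combined chase therefore contains an output for this image trigger, and we send the fresh terms of the original output onto the fresh terms of that image output.
4. This is the standard argument that the chase of a model-prefix maps into any model, using that $\chase{\jnst,\mathcal{S}\cup\singleton{\eru}}$ is a model of $\jnst$, $\eru$ and $\mathcal{S}$.
5. Equivalently, and more cleanly, appeal to universality. $\chase{\chase{\jnst,\singleton{\eru}},\mathcal{S}}$ is a universal model of $\chase{\jnst,\singleton{\eru}}$ and $\mathcal{S}$, so it maps into every model of them. The combined chase is such a model once we view $\chase{\jnst,\singleton{\eru}}$ as mapped into it via $g$.

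\emph{Direction $\chase{\jnst,\mathcal{S}\cup\singleton{\eru}}\to\chase{\chase{\jnst,\singleton{\eru}},\mathcal{S}}$.} By universality of the chase, it suffices to show that the right-hand side $M$ is a model of $\jnst$, $\mathcal{S}$ and $\eru$.
1. $M$ contains $\jnst$.
2. $M$ satisfies $\mathcal{S}$, since it is the result of a chase with $\mathcal{S}$.
3. $M$ satisfies $\eru$: the body $\top$ holds, and the head is witnessed by the copy of $\head{\eru}$ added in the first stage.
4. Universality then yields a homomorphism from $\chase{\jnst,\mathcal{S}\cup\singleton{\eru}}$ into $M$.

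The only real subtlety is the universality fact itself: the result of the chase maps homomorphically into any model of the instance and the rules. The paper states this as well known, so I would cite it rather than reprove it, and both directions then follow directly.
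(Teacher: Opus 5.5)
Your proof is correct and rests on the same fact that the paper's one-line proof invokes. Since $\eru$ has exactly one trigger, already active on $\jnst$, $\chase{\jnst,\singleton{\eru}}$ is $\jnst$ plus a single copy of $\head{\eru}$, and every superset of it satisfies $\eru$; you then use universality of the chase in both directions to turn this into the two homomorphisms, a step the paper leaves implicit.
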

\begin{proof}
    Note that $\eru$ always triggers, and triggers only once.
\end{proof}

We simply let $\rs' = \rs \cup \singleton{\top \to \inst}$. With $\rs'$ constructed we shall show now that the newly constructed rule set $\rs'$ and the original $\signature$ form a counterexample to \cref{lem:main-a}. This will always consist of two parts: showing that the assumption holds - in this case that $\rs'$ is UCQ-rewritable - and that the consequence does not hold - which in this case amounts to showing:
$$\chase{\rs'} \models \arbtoure \;\land\; \chase{\rs'} \not\models \loopq$$

\subsubsection*{Main Properties of the Surgery}
As in the following steps we present here the main properties of the surgery. The raison d’être of such subsections is to provide a modular and reusable formulation of the main building blocks of the proof. As noted in the introduction, some of these techniques are well-known in the community. However, they often lack proper formulation or formal statements. Importantly, the properties of most of the techniques with respect to UCQ-rewritability are far from being well-known and are themselves results of separate interest to the community. Finally, we note that we will usually state two important properties - one used for the ``Falsifying the Consequence'' parts and one for the ``Proving the Assumption'' ones. 
\begin{observation}\label{obs:remove-db-A}
    For every pair of instances $\jnst$ and $\jnst'$ and a rule set $\rsb$ we have that $\chase{\jnst \disjunion \jnst', \rsb}$ is homomorphically equivalent to $\chase{\jnst, \rsb \cup \singleton{\top \to \jnst'}}$.
\end{observation}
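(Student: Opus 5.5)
We prove the observation by giving homomorphisms in both directions and using the universality of the chase. Throughout, write $\rho$ for the rule $\top \to \jnst'$ and $\jnst \disjunion \jnst' = \jnst \cup \sigma(\jnst')$, where $\sigma$ renames the variables of $\jnst'$ to fresh ones.

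\emph{Relating the two starting points.} By \cref{obs:simple-top-to-jnst}, applied with $\mathcal{S}=\rsb$ and the rule $\rho$ (whose body is $\top$), we have
\[\chase{\jnst, \rsb\cup\singleton{\rho}} \homequiv \chase{\chase{\jnst,\singleton{\rho}},\rsb}.\]
Since $\top\in\jnst$, the rule $\rho$ fires exactly once. Its only trigger maps the empty frontier trivially, so $\chase{\jnst,\singleton{\rho}} = \jnst \cup f(\jnst')$, where $f$ sends $\adom{\jnst'}$ bijectively to fresh variables. This is exactly a disjoint union of $\jnst$ and $\jnst'$. Concretely, $\sigma\circ f^{-1}$ (extended by the identity on $\adom{\jnst}$) is an isomorphism from $\jnst\cup f(\jnst')$ onto $\jnst\disjunion\jnst'$. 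The two renamings agree on $\adom{\jnst}$ and send $\jnst'$ to disjoint fresh copies, so this map is well defined and bijective.

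\emph{Transfer through the chase.} It remains to show that isomorphic, or more generally homomorphically equivalent, instances have homomorphically equivalent chases under $\rsb$. Suppose $g:\mathcal{A}\to\mathcal{B}$ is a homomorphism. Then $g$ maps $\mathcal{A}$ into $\chase{\mathcal{B},\rsb}$, which is a model of $\mathcal{B}$ and $\rsb$. Universality of the chase gives a homomorphism $\chase{\mathcal{A},\rsb}\to\chase{\mathcal{B},\rsb}$ extending $g$. Applying this in both directions yields $\chase{\jnst\cup f(\jnst'),\rsb}\homequiv\chase{\jnst\disjunion\jnst',\rsb}$. Composing with the first equivalence, and using that $\homequiv$ is transitive because homomorphisms compose, we obtain the claim.

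The only point needing care is the universality step. It requires that a homomorphism from the starting instance into a model extends along the chase. This is the standard inductive argument: each trigger $\pair{\eru,\hom}$ is mapped to a trigger whose head is satisfied in the target model, and the fresh existential terms are mapped to witnesses. Everything else is bookkeeping with renamings, and the disjointness of fresh variables ensures that no accidental identifications arise.
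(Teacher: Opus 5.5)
Your proof is correct and follows the paper's route: identify $\chase{\jnst,\singleton{\top\to\jnst'}}$ with $\jnst\disjunion\jnst'$ up to renaming of fresh variables, then invoke \cref{obs:simple-top-to-jnst}. The extra step you spell out, that isomorphic starting instances have homomorphically equivalent chases, is what the paper leaves implicit in its one-line justification ``Definition of $\set{\top\to\jnst'}$''.
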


\begin{proof} Consider the following sequence of equivalences:
    \begin{align*}
        \chase{\jnst \disjunion \jnst', \rsb} &\homequiv \chase{\;\chase{\jnst, \set{\top \to \jnst'}},\; \rsb\;} \tag{Definition of $\set{\top \to \jnst'}$}\\
        &\homequiv \chase{\jnst, \rsb \cup \set{\top \to \jnst'}} \tag*{(From \cref{obs:simple-top-to-jnst}) \quad \qedhere}
    \end{align*}
\end{proof}

\noindent
The following allows us to encapsulate the first property of the surgery:
\begin{corollary}\label{cor:db-into-rs-prop-one}
    For any instance $\jnst$ and any rule set $\rsb$ we have that:
    $$\chase{\jnst, \rsb} \;\leftrightarrow\; \chase{\topinst, \rsb \cup \singleton{\top \to \jnst}}.$$
\end{corollary}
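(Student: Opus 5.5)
The plan is to obtain \cref{cor:db-into-rs-prop-one} as the special case $\jnst := \topinst$, $\jnst' := \jnst$ of \cref{obs:remove-db-A}. That observation gives
\[
\chase{\topinst \disjunion \jnst, \rsb} \;\homequiv\; \chase{\topinst, \rsb \cup \singleton{\top \to \jnst}},
\]
so it suffices to show that $\chase{\topinst \disjunion \jnst, \rsb} \homequiv \chase{\jnst, \rsb}$. Homomorphic equivalence is transitive, since homomorphisms compose, so the two equivalences can then be chained.

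First I will check what $\topinst \disjunion \jnst$ is. The only atom of $\topinst$ is the nullary atom $\top$, which has no variables. Every instance is assumed to contain $\top$, so $\top \in \jnst$ as well. The renaming $\sigma$ in the definition of $\disjunion$ only touches variables of $\jnst$. Hence $\topinst \disjunion \jnst = \singleton{\top} \cup \sigma(\jnst) = \sigma(\jnst)$, which is an isomorphic copy of $\jnst$.

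Next I will argue that the chase is invariant under isomorphism of the input instance. Given a bijective renaming $\sigma$, the triggers on $\step{n}{\sigma(\jnst),\rsb}$ correspond one-to-one with those on $\step{n}{\jnst,\rsb}$ via composition with $\sigma$. The fresh variables introduced at each step can be matched up as well. An induction on $n$ then extends $\sigma$ to an isomorphism $\step{n}{\jnst,\rsb} \to \step{n}{\sigma(\jnst),\rsb}$ for each $n$, and taking the union gives an isomorphism of the chase results. In particular, the two chase results are homomorphically equivalent.

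The only mildly delicate point is this isomorphism-invariance of the oblivious chase, because fresh-variable naming is not canonical. Since the statement only asks for homomorphic equivalence, it suffices to note that each triggered output lands, up to renaming of nulls, on the output of the corresponding trigger. I expect no real obstacle here.
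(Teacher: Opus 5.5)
Your proposal is correct and matches the paper's route. The paper states this corollary right after \cref{obs:remove-db-A} with no separate proof, i.e.\ as the instantiation of the first instance with $\topinst$ and the second with $\jnst$. Your points that $\topinst \disjunion \jnst$ is just a renamed copy of $\jnst$ (since every instance contains $\top$) and that the chase is invariant under such renamings are exactly the details the paper leaves implicit.
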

and the following is its second property:
\begin{observation}\label{cor:db-into-rs-prop-two}
    Given a rule set $\rsb$ that is UCQ-rewritable and an instance $\jnst$ we have that\\ $\rsb \cup \singleton{\top \to \jnst}$ is UCQ-rewritable.
\end{observation}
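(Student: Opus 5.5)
We prove that $\rsb \cup \singleton{\top \to \jnst}$ is UCQ-rewritable (equivalently, by \cref{prop:ucq-rewritable_is_bdd}, \bdd) by building its rewritings directly from those of $\rsb$. The guiding identity is \cref{obs:remove-db-A}: for every instance $\instance$,
\[\chase{\instance, \rsb \cup \singleton{\top \to \jnst}} \;\homequiv\; \chase{\instance \disjunion \jnst, \rsb}.\]
Homomorphically equivalent instances satisfy the same CQs with answer tuples from $\adom{\instance}$, since the homomorphisms in \cref{obs:remove-db-A} fix $\adom{\instance}$. So it suffices to rewrite a CQ $\cq(\vx)$ over $\instance \disjunion \jnst$ under $\rsb$, and then eliminate the fixed part $\jnst$ syntactically.

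The plan has three steps. First, let $\ucq(\vx) = \rew(\cq(\vx), \rsb)$. Then $\pair{\instance, \rsb\cup\singleton{\top\to\jnst}} \models \cq(\vt)$ iff $\instance \disjunion \jnst \models \ucq(\vt)$.

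Second, analyse a homomorphism $h$ of a disjunct $\cq'(\vx') \in \ucq$ into $\instance \disjunion \jnst$. Since the union is disjoint, $h$ splits the atoms of $\cq'$ into those mapped into $\instance$ and those mapped into the renamed copy of $\jnst$. No atom can straddle both parts, and answer variables must land in $\instance$ because $\vt$ ranges over $\adom{\instance}$. The nullary atom $\top$ is present in both parts and causes no difficulty.

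Third, define the new UCQ $\ucq^*(\vx)$ as follows. For each $\cq' \in \ucq$ and each partition of its atoms into $A_1 \cup A_2$ such that
\begin{itemize}
\item no variable is shared between $A_1$ and $A_2$,
\item no answer variable occurs in $A_2$, and
\item $A_2$, viewed as a Boolean CQ, maps homomorphically into $\jnst$ (a fixed finite check),
\end{itemize}
add the CQ $\exists\,\cdot\; A_1$, keeping the answer variables of $\cq'$. If $A_1$ is empty, use $\top$ instead. The result is finite because $\ucq$ and each $\cq'$ are finite.

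Correctness is then checked in both directions. If $\instance \models \ucq^*(\vt)$ via some $A_1$, extend the map by the witnessing homomorphism of $A_2$ into the copy of $\jnst$; since $A_1$ and $A_2$ share no variables, this gives a homomorphism of $\cq'$ into $\instance \disjunion \jnst$. Conversely, any homomorphism of $\cq'$ into $\instance \disjunion \jnst$ induces exactly such a partition, with $A_1$ mapped into $\instance$.

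The main subtlety is the claim that homomorphic equivalence preserves answers over $\adom{\instance}$. This needs the homomorphisms in \cref{obs:remove-db-A} to be the identity on $\instance$. They can be chosen so: both chases are built from $\instance$ by adding atoms, and the only difference is the renaming of the terms of $\jnst$ together with the extra trigger of $\top \to \jnst$.

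As an alternative route, one can apply \cref{lem:bdd-stacking} with $\rs_1 = \singleton{\top \to \jnst}$, which is trivially \bdd with constant $1$, and $\rs_2 = \rsb$. Its hypothesis is exactly \cref{obs:simple-top-to-jnst}, so this gives the statement immediately.
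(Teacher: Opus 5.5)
Your proof is correct, and your closing ``alternative route'' is exactly the paper's proof. The paper simply combines \cref{lem:bdd-stacking}, with $\rs_1=\singleton{\top\to\jnst}$ and $\rs_2=\rsb$, with \cref{obs:simple-top-to-jnst}.

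Your main route is a hand-unfolded version of that same two-stage composition. The proof of \cref{lem:bdd-stacking} first rewrites $\cq$ against $\rsb$ and then rewrites the resulting UCQ against $\singleton{\top\to\jnst}$. Your $\ucq^*$ deletes variable-disjoint, answer-variable-free groups of atoms that map into $\jnst$, and this is precisely an explicit rewriting against that single rule. You then justify correctness through \cref{obs:remove-db-A} rather than \cref{obs:simple-top-to-jnst}.

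What your version buys is an explicit rewriting and care on a point the paper leaves implicit. \cref{lem:bdd-stacking} appeals only to homomorphic equivalence of the chases. Bare homomorphic equivalence does not in general preserve non-Boolean answers $\vt$ over $\adom{\instance}$. You explicitly note that the homomorphisms can be chosen to be the identity on $\adom{\instance}$. You also handle the shared nullary atom $\top$ and the empty-$A_1$ case.

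The paper's version buys brevity and modularity. In exchange, it leaves unstated both that $\singleton{\top\to\jnst}$ is \bdd (which you observe, with constant $1$) and the answer-preservation issue.
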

\begin{proof}
    Using \cref{lem:bdd-stacking,obs:simple-top-to-jnst} we get that $\rsb \cup \singleton{\top \to \jnst}$ is UCQ-rewritable.
\end{proof}

\subsubsection*{Falsifying the Consequence: (\ref{tag:lem:main-a})}
\begin{lemma}
    $\chase{\rs'} \models \arbtoure \;\land\; \chase{\rs'} \not\models \loopq$.
\end{lemma}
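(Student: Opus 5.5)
The plan is to transfer both properties from $\chase{\inst,\rs}$ to $\chase{\rs'}$ using \cref{cor:db-into-rs-prop-one}. With $\jnst=\inst$ and $\rsb=\rs$ it gives $\chase{\inst,\rs}\homequiv\chase{\topinst,\rs\cup\singleton{\top\to\inst}}=\chase{\rs'}$. By the contrapositive assumption, $\chase{\inst,\rs}\models\arbtoure$ and $\chase{\inst,\rs}\not\models\loopq$. The key point is that both queries are preserved by homomorphisms in the right direction.

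\textbf{Preservation of $\loopq$.} Suppose $\chase{\rs'}\models\loopq$, so $\epred(a,a)\in\chase{\rs'}$ for some $a$. Composing with the homomorphism $\chase{\rs'}\to\chase{\inst,\rs}$ yields $\epred(h(a),h(a))\in\chase{\inst,\rs}$. This contradicts the assumption, hence $\chase{\rs'}\not\models\loopq$.

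\textbf{Preservation of $\arbtoure$.} Fix $k$ and a homomorphism $g:\chase{\inst,\rs}\to\chase{\rs'}$. Take an $\epred$-tournament $x_1,\dots,x_k$ in $\chase{\inst,\rs}$. For every $i\neq j$ we have $\epred(x_i,x_j)$ or $\epred(x_j,x_i)$, so the corresponding atom over $g(x_i),g(x_j)$ lies in $\chase{\rs'}$.

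\textbf{Main obstacle.} The only subtle step is that $g$ might identify distinct $x_i,x_j$, in which case the image would not be a tournament of size $k$. I resolve this using the loop argument above. If $g(x_i)=g(x_j)$ for some $i\neq j$, then the edge between $x_i$ and $x_j$ maps to $\epred(g(x_i),g(x_i))$, so $\chase{\rs'}\models\loopq$, which we have just refuted. Hence $g$ is injective on $\{x_1,\dots,x_k\}$, and its image is an $\epred$-tournament of size $k$.

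Since $k$ was arbitrary, $\chase{\rs'}\models\arbtoure$. Equivalently, in terms of the second-order formulation in the footnote of \cref{def:arb-tournament}, for every finite set we obtain the witness by composing with $g$. Combining the two parts proves the lemma.
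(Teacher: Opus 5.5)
Your proof is correct and follows the paper's route. The paper also derives the lemma directly from the homomorphic equivalence $\chase{\inst,\rs}\homequiv\chase{\rs'}$ given by \cref{cor:db-into-rs-prop-one}, together with the counterexample assumption on $\chase{\inst,\rs}$. Your explicit check, that a homomorphism cannot merge two tournament vertices without creating an $\epred$-loop, is the detail the paper leaves implicit when it says the lemma ``follows from'' these two facts.
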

\begin{proof}
    Follows from:
\begin{itemize}
    \item  $\chase{\inst, \rs} \homequiv \chase{\rs'}$ \quad(\cref{cor:db-into-rs-prop-one})
    \item $\chase{\inst, \rs} \models \arbtoure \land \chase{\inst, \rs} \not\models \loopq$ \quad(from assumption).\qedhere
\end{itemize}

\end{proof}
\subsubsection*{Proving the Assumption of \cref{lem:main-a}} In this case we have to show that $\rs'$ is UCQ-rewritable. It follows, from  \cref{cor:db-into-rs-prop-two}, as $\rs$ is UCQ-rewritable.

\subsection{Reducing to Binary Signatures}
\begin{lemma}\label{lem:main-b}
    For every binary signature $\signature$, and every UCQ-rewritable rule set $\rs$ over $\signature$ we have: 
    \begin{align*}
    \mainstatementtopinstance{\topinst}{\rs}\tag{\textup{\symbishop}}\label{tag:lem:main-b}
    \end{align*}
\end{lemma}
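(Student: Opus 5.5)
We prove the statement directly by contradiction. Fix a binary signature $\signature$ and a UCQ-rewritable rule set $\rs$ over $\signature$. Suppose $\chase{\rs} \models \arbtoure$ and $\chase{\rs} \not\models \loopq$. The plan is to derive $\chase{\rs} \models \epred(a,a)$ for some term $a$. The tools are the injective rewriting $\ucq := \rew_{\mathsf{inj}}(\epred(x,y),\rs)$ from \cref{prop:injective-rewritings}, Ramsey's theorem (\cref{cor:Ramsey}), and a minimality argument over multisets (\cref{lem:lexorder_wf}).

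\emph{Step 1: a uniform family of witnesses.} The first step is to fix witnesses for the edges. Every term of $\chase{\rs}$ is created at some chase step, and we call that step its timestamp. For any $s$, the rewriting property applied to the instance $\step{s}{\rs}$ (together with \cref{lem:bdd-stacking}-style stacking, since the chase from $\step{s}{\rs}$ is the chase from $\topinst$) gives the following. Every edge $\epred(a,b)$ of $\chase{\rs}$ is witnessed by some CQ $\cq \in \ucq$ and an injective homomorphism $h_{ab}$ from $\cq(a,b)$ into $\step{s}{\rs}$, where $s$ is the first step at which $\epred(a,b)$ is present.

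Among all such witnesses we choose one minimising the multiset of timestamps of $h_{ab}(\adom{\cq})$ in $\lexorder$. This choice is possible by \cref{lem:lexorder_wf}, because $|\adom{\cq}|$ is bounded by the size of the largest disjunct of $\ucq$.

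Next we colour each edge of a large tournament. The colour records three things:
\begin{itemize}
\item the chosen $\cq$;
\item the order type of the timestamps of the image terms;
\item which image terms coincide with $a$ or $b$.
\end{itemize}
There are finitely many colours. So by \cref{cor:Ramsey}, applied to tournaments of growing size, we get arbitrarily large tournaments $a_1,\dots,a_n$ in which every edge is witnessed by the same $\cq$ with the same pattern.

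\emph{Step 2: splitting the witness.} Split $\adom{\cq}$ into three parts: the part whose image depends only on the source endpoint, the part depending only on the target endpoint, and a \emph{shared} part. To make this split, we use pigeonhole over the bounded size of $\cq$ together with a second Ramsey pass on triples. This lets us assume that for every non-endpoint variable $z$ of $\cq$, the map $(i,j)\mapsto h_{a_ia_j}(z)$ falls into one of three cases:
\begin{itemize}
\item it depends only on $i$;
\item it depends only on $j$;
\item it is constant, so the images form a common set $C$ of terms shared by all edges.
\end{itemize}
A variable whose image genuinely depends on both $i$ and $j$ would yield quadratically many distinct terms. Those terms would be reachable by binary atoms from $a_i$ and $a_j$ in a way we must rule out.

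\emph{Step 3: the main obstacle, and how we would attack it.} This is exactly the step where the binary signature and timestamp-minimality are used. Take such a "two-sided" term $t$. Along a path of binary atoms in the image of $\cq$, $t$ is linked to both $a_i$ and $a_j$. Every chase term is created by a trigger whose head atoms are binary, so the term with the larger timestamp on any such link must have been created later by a trigger whose body reached the other term. We would show that $t$ could be replaced by an earlier term. Reapplying the same rule at $a_i$ or $a_j$ gives a witness that is smaller in $\lexorder$, which contradicts our choice of witness. We expect this to be the main difficulty, and we would attack it first, by induction on the position of $t$ in the multiset order.

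\emph{Step 4: conclusion.} Once the split holds, define $g$ on $\adom{\cq}$ by
\begin{itemize}
\item $g(x)=g(y)=a_1$;
\item $g$ sends source-only variables to their images for $i=1$;
\item $g$ sends target-only variables to their images for $j=1$;
\item $g$ sends shared variables into $C$.
\end{itemize}
Every atom of $\cq$ involves at most a source-side and a shared variable, or a target-side and a shared variable, since it is binary and has no two-sided variables. Each such atom is therefore realised in some $h_{a_1a_j}$ or $h_{a_ia_1}$. Hence $g$ is a homomorphism from $\cq(a_1,a_1)$ into $\chase{\rs}$. Soundness of the rewriting then gives $\chase{\rs}\models\epred(a_1,a_1)$, so $\chase{\rs}\models\loopq$. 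This contradicts the assumption and proves (\ref{tag:lem:main-b}).
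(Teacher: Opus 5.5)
\paragraph{There is a genuine gap, and it shows up already in the simplest case.} In Step~1 your witnesses are injective maps into the full chase, and $\step{s}{\rs}$ already contains $\epred(a,b)$. Any rewriting of $\epred(x,y)$ must contain a disjunct that is just $\epred(x,y)$, possibly with $\top$; test the rewriting on $\set{\top,\epred(a,b)}$ with $a\neq b$ to see this.

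Every witness image contains $a$ and $b$, so this disjunct, whose image is $\{a,b\}$, is $\lexorder$-minimal. Your minimal witness is therefore always the atom $\epred(x,y)$ itself. Steps~2--3 become vacuous, and Step~4 asserts that $g$ sends $\epred(x,y)$ to $\epred(a_1,a_1)\in\chase{\rs}$, which is the conclusion you are trying to prove.

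The false claim underneath is in Step~4: that every atom of $q$ involves at most one side plus shared variables. Nothing prevents a binary atom from joining a source-side variable to a target-side one; $x$ and $y$ themselves are the simplest case. Timestamp-minimality in the full chase cannot exclude this, because Datalog rules place atoms between arbitrary existing terms, in either temporal direction. A smaller slip: mapping both $x$ and $y$ to $a_1$ needs an edge into $a_1$ and an edge out of it. This fails if $a_1$ is the source of a transitive subtournament, so you should use a middle vertex instead.

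\paragraph{What is missing.} Witnesses must be sought in a structure where atoms are oriented and lower terms are functionally determined by upper ones. A general binary UCQ-rewritable rule set gives you neither. The paper does not prove this lemma directly. It reduces it, via streamlining heads (\cref{lem:streamline-prop-two}) and body rewriting (\cref{lem:frontier}), to regal rule sets, which are forward-existential, predicate-unique and quick.

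Quickness gives \cref{lem:skeleton}: $\chase{\rs}$ is Datalog applied on top of the existential-only chase $\chase{\rs^{\exists}}$. In that structure every atom goes from an older term to a newer one, and each term has at most one $\apred$-predecessor per predicate $\apred$. Witnesses are taken there, with the Datalog rules absorbed into the rewriting.

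Only in that setting does timestamp-minimality (\cref{lem:peak}) yield valley queries. Every atom lies below $x$ or below $y$, so $q=\exists\bar v\, q_x(x,\bar v)\wedge q_y(\bar v,y)$ with no atom between the $x$-only and $y$-only parts. This is exactly what your Step~4 needs. Then \cref{lem:cq-functional} makes $\bar v$ a function of $x$ and also of $y$, so a transitive triangle $k_1\to k_2\to k_3$, $k_1\to k_3$ forces $f_x(k_2)=f_y(k_2)$.

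Your Step~3 is only a plan, and it is where all of this would have to happen. Its key move is to replace a two-sided term by an earlier one by reapplying its creating rule at $a_i$ or $a_j$. That move has no justification without the normal forms above. Any of the following can link a term to both $a_i$ and $a_j$ with no functional dependence:
\begin{itemize}
\item a head such as $\apred(y_1,z)\wedge\apred(y_2,z)$;
\item a head atom from an existential to a frontier variable;
\item a Datalog atom in the full chase.
\end{itemize}
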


Now we show that \cref{lem:main-a} is equivalent to \cref{lem:main-b}. The $(\Rightarrow)$ direction is trivial, we show $(\Leftarrow)$ via the contraposition. Let $\signature$ be a signature and $\rs$ be a rule set that together form a counterexample to \cref{lem:main-a}.

\subsubsection*{Reification}
Given a relational symbol $\apred$ of arity $\arity{\apred} > 2$ we define $\reify(\apred)$ as the following set $\set{\apred_1,\ldots,\apred_{\arity{\apred}}}$ of binary predicates.
Given a finite signature $\signature = \signature_{\leq2} \uplus \signature_{\geq3}$ divided into at-most-binary and higher-arity predicates, we define the \emph{reified version} of $\signature$ as the binary signature $\reify(\signature) = \signature_{\leq2} \uplus \bigcup_{\apred \in \signature_{\geq3}} \reify(\apred).$
Given an atom $\alpha = \apred(x_1, \ldots, x_n)$ we define $\reify(\alpha)$ as $\set{\apred_i(x_i, x_\alpha) \mid 1 < i \leq n}$ for $n > 2$ and let $\reify$ be the identity on atoms of arity less than three.
We lift $\reify$ to instances, rules, and queries in the natural way.

We prove that $\rs' = \reify(\rs)$ and $\signature' = \reify(\signature)$ form a counterexample to \cref{lem:main-b}.

\subsubsection*{Main Properties of Reification}
We take our first property from Feller et al.~\cite{cliquewidth-paper} (Property (ii) of reification):
\begin{lemma}\label{lem:reification-prop-one}
    For any instance $\jnst$ and any rule set $\rsb$ we have that $$\chase{\reify(\jnst), \reify(\rsb)} \homequiv \reify(\chase{\jnst, \rsb}).$$
\end{lemma}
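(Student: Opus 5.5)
We prove the statement by building homomorphisms in both directions, using the universality of the chase on each side. Both structures are models of reified theories, and we use the (one-sided) inverse operation of ``dereification''. Fix a chase of $\jnst,\rsb$ and write $\mathcal{C}=\chase{\jnst,\rsb}$ and $\mathcal{C}'=\chase{\reify(\jnst),\reify(\rsb)}$.

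\textbf{Direction $\mathcal{C}'\to\reify(\mathcal{C})$.} Since $\mathcal{C}'$ is a universal model of $\reify(\jnst)$ and $\reify(\rsb)$, it suffices to show that $\reify(\mathcal{C})$ is a model of both. It contains $\reify(\jnst)$ because $\jnst\subseteq\mathcal{C}$ and $\reify$ is monotone. For a rule $\reify(\eru)$, a match $h$ of $\reify(\body{\eru})$ into $\reify(\mathcal{C})$ sends each reified body atom $\apred(\vec x)$, through its atom-variable $x_\alpha$, to the atom-term $t_\beta$ of some atom $\beta=\apred(\vec t)\in\mathcal{C}$. Because $\reify(\beta)$ is the only set of $\apred_i$-atoms with second argument $t_\beta$, we get $h(x_i)=t_i$. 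Hence $h$ restricts to a match of $\body{\eru}$ in $\mathcal{C}$. The head of $\eru$ is satisfied there, and reifying the witnessing atoms satisfies $\reify(\head{\eru})$, with fresh atom-terms as witnesses for the new existential variables.

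One point must be settled: the reified atom $\apred(\vec x)$ has to be encoded by an atom-term that is uniquely determined by the atom. I would read $x_\alpha$ as a term attached to $\alpha$, existentially quantified in rule heads. If the definition lists only $1<i\le n$ with no $\apred_1$, I treat that as a typo and include $i=1$. Otherwise $x_1$ could not be recovered, and a body match could pick an inconsistent first argument.

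\textbf{Direction $\reify(\mathcal{C})\to\mathcal{C}'$.} It is enough to give a homomorphism $\mathcal{C}\to\mathrm{dereify}(\mathcal{C}')$, where $\mathrm{dereify}$ collects $\apred(t_1,\dots,t_n)$ whenever some $s$ has $\apred_i(t_i,s)$ for all $i$, and keeps atoms of arity at most two. The composite map reifies back into $\mathcal{C}'$ by sending $t_\beta$ to the witness $s$, so we need $\reify(\mathrm{dereify}(\mathcal{C}'))\to\mathcal{C}'$ and $\mathcal{C}\to\mathrm{dereify}(\mathcal{C}')$.
\begin{itemize}
\item The first map follows by choosing one witness $s$ per atom.
\item For the second, by universality of $\mathcal{C}$ we show that $\mathrm{dereify}(\mathcal{C}')$ models $\jnst$ and $\rsb$. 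A body match of $\eru$ gives, atom by atom, witnesses $s$. Extending the match by $x_\alpha\mapsto s$ yields a match of $\reify(\body{\eru})$ in $\mathcal{C}'$, whose head witnesses dereify to a head match.
\end{itemize}
This direction is routine.

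The main obstacle is the correctness of the atom-term encoding in the first direction. If the same atom-term could carry two different $\apred$-atoms, or a head atom-term could coincide with an existing one, the dereification would not be functional. I would handle this by working with the oblivious chase from the paper and checking by induction on chase steps that every term of $\mathcal{C}'$ occurring in second position of some $\apred_i$ is created by exactly one trigger and carries exactly one full tuple $\apred_1,\dots,\apred_n$. Alternatively, since the statement is Property (ii) of reification in Feller et al., one may simply cite it.
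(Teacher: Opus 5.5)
Your proof is correct, and it takes a different route from the paper. The paper gives no argument of its own and simply imports the statement as Property~(ii) of reification from Feller et al. You instead argue purely through universality of the chase. You show that $\reify(\chase{\jnst,\rsb})$ is a model of $\reify(\jnst)$ and $\reify(\rsb)$, and that the dereification of $\chase{\reify(\jnst),\reify(\rsb)}$ is a model of $\jnst$ and $\rsb$. Pushing the second homomorphism through $\reify$ and choosing one witness per atom then closes that direction. This avoids any step-by-step simulation of one chase by the other.

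The citation buys brevity. Your version buys self-containedness, and it exposes two conventions the statement silently depends on: atom-terms are existentially quantified in reified heads, and $\apred_1$ is included. The latter is essential. With the definition as printed ($1<i\le n$) the lemma fails. For example, take $\jnst=\set{\apred(a,b,c),\bpred(d)}$ and the rule $\apred(x,y,z)\wedge\bpred(x)\to\predicate{C}(x)$. The reified chase derives $\predicate{C}(d)$, while $\chase{\jnst,\rsb}$ contains no $\predicate{C}$-atom. So your reading, which matches $\reify(\apred)=\set{\apred_1,\ldots,\apred_{\arity{\apred}}}$, is the right one.

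One correction to your own assessment: the ``main obstacle'' you name is not needed anywhere in your argument. In the first direction the target is $\reify(\chase{\jnst,\rsb})$, where each atom-term carries exactly one atom by construction. In the second direction your dereification is just a set of atoms, and you only ever pick some witness. So it would do no harm if atom-terms of $\chase{\reify(\jnst),\reify(\rsb)}$ carried several tuples, or if one atom had several atom-terms. The inductive uniqueness check you propose is true but superfluous.
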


The relation of reification to UCQ-rewritability was to our knowledge not explored before:
\begin{lemma}\label{lem:reification-property-two}
    For any UCQ-rewritable rule set $\rsb$ we have that $\reify(\rsb)$ is UCQ-rewritable as well.
\end{lemma}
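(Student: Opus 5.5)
We prove that $\reify(\rsb)$ is \bdd, which is equivalent by \cref{prop:ucq-rewritable_is_bdd}. The plan is to pull a reified query back to the original signature, rewrite it there against $\rsb$, and then reify the resulting rewriting.

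Fix a CQ $\cq$ over $\reify(\signature)$. The difficulty is that $\cq$ need not be the reification of any query: it may contain ``partial'' reified atoms. For instance, it may have a variable $y$ playing the role of an atom-node with only some of the atoms $\apred_i(x_i,y)$ present, or with several predicates attached to $y$. The first step is therefore a \emph{completion} of $\cq$. We consider the finitely many ways of interpreting $\cq$ as part of a reified structure. Each variable $y$ occurring in second position of some $\apred_i$ atom is designated as an atom-node of a unique higher-arity predicate $\apred$. Each missing argument position is filled with a fresh variable, and variables whose designations clash are discarded or identified. From each admissible choice we obtain a CQ $\cq^\circ$ over $\signature$ with answer variables extended by the atom-node variables. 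Since an atom-node $y$ in $\reify(\chase{\jnst,\rsb})$ stands for a unique atom, this yields a finite UCQ $\ucq^\circ$ over $\signature$. For each $\cq^\circ\in\ucq^\circ$, the structure $\reify(\cq^\circ)$ maps onto $\cq$ via the designation map, and conversely every homomorphism of $\cq$ into a reified instance factors through one of these completions.

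The second step treats the atom-node variables, which have no counterpart in $\signature$. We eliminate them by existentially projecting: $\cq^\circ$ asks for atoms $\apred(\vx)$, and the atom-node variable is simply the name of that atom. Because reification assigns one node per atom, matching $\cq$ in $\reify(\mathcal{K})$ amounts to matching $\cq^\circ$ in $\mathcal{K}$. The only extra condition is that two atom-node variables of $\cq$ are equal exactly when the corresponding atoms coincide, and this condition is handled by enumerating identifications in the completion. We then take $\ucq'=\rew(\ucq^\circ,\rsb)$ and output $\reify(\ucq')$ as the candidate rewriting of $\cq$ against $\reify(\rsb)$.

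For correctness, let $\jnst'$ be any instance over $\reify(\signature)$. When $\jnst'=\reify(\jnst)$, \cref{lem:reification-prop-one} gives $\chase{\jnst',\reify(\rsb)}\homequiv\reify(\chase{\jnst,\rsb})$, and the chain of equivalences above closes the argument. The main obstacle is arbitrary $\jnst'$ that are not reifications, for example with dangling atom-nodes missing some $\apred_i$ edge. Such partial nodes never trigger rule bodies of $\reify(\rsb)$ except via the partial atoms they do contain. My first attempt would be to complete $\jnst'$ by adding fresh nulls at the missing positions, producing an instance $\reify(\jnst)$ into which $\jnst'$ embeds. I would then argue that any match of the relevant rewriting in the completed instance uses only atoms already present in $\jnst'$. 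This needs care, since fresh nulls can create spurious matches. If that fails, I would fall back to a direct \bdd argument: show that derivations in $\chase{\jnst',\reify(\rsb)}$ correspond step by step to derivations in the chase of the completion under $\rsb$. The depth bound $\bdd(\ucq^\circ,\rsb)$ then transfers to $\cq$ with at most the same depth, since each $\rsb$-trigger corresponds to exactly one $\reify(\rsb)$-trigger.
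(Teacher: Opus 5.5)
\paragraph{Verdict.} There is a genuine gap. Your plan is to pull $\cq$ back to $\signature$, rewrite against $\rsb$, and reify the result, with your hand-made completion of $\cq$ in place of the paper's projection rules. On instances of the form $\reify(\jnst)$ this works (modulo two slips noted at the end) and follows the paper's outline. But UCQ-rewritability quantifies over \emph{all} instances over the reified signature, and you leave that case open. It cannot be closed by a better correctness argument, because the UCQ you output, $\reify(\rew(\ucq^\circ,\rsb))$, is not a rewriting there. Your completion forces every atom-node variable of $\cq$ to be the centre of a full star. That is harmless when every node of the instance is a full star, and wrong otherwise.

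Concretely, let $\rsb=\emptyset$ over a signature with a ternary $\apred$, let $\cq(x)=\exists y\,\apred_1(x,y)$, and let $\jnst'=\set{\apred_1(a,b)}$. Then $\chase{\jnst',\reify(\rsb)}=\jnst'\models\cq(a)$. Your only disjunct, $\exists u,v,y\;\apred_1(x,y)\wedge\apred_2(u,y)\wedge\apred_3(v,y)$, has no match in $\jnst'$.

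The repairs you sketch do not help either:
\begin{itemize}
\item Padding $\jnst'$ with fresh nulls turns partial stars into full ones. A rule with body $\apred(x_1,x_2,x_3)$ then fires on the padded instance but not on $\jnst'$, so the padded chase entails too much.
\item A node with two distinct $\apred_1$-in-edges embeds into no reification at all, so there is nothing correct for a ``step-by-step correspondence'' to target.
\end{itemize}

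\paragraph{The missing idea.} Keep the reified atoms of $\cq$ available for direct matching in the instance. Trade a star for an $\apred$-atom only where the node may have been created by the chase. The paper gets this for free:
\begin{itemize}
\item It rewrites $\cq$, over the mixed signature, against $\rsbb=\rsb\cup\set{\eru_{\apred}}$, using the projection rules $\eru_{\apred}\colon\apred(x_1,\ldots,x_n)\to\exists z\,\bigwedge_i\apred_i(x_i,z)$.
\item $\rsbb$ stays UCQ-rewritable, since these rules are non-recursive and feed nothing back into $\rsb$.
\item A rewriting $\Phi$ then contains both kinds of disjuncts, and $\reify(\Phi)$ is the candidate. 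In the example above it contains $\cq$ itself.
\end{itemize}

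The paper's text only spells out the verification for instances $\reify(\jnst)$. For an arbitrary $\jnst'$, let $D(\jnst')$ contain $\apred(t_1,\ldots,t_n)$ whenever some node $y$ has $\apred_i(t_i,y)\in\jnst'$ for all $i$. Then two facts close the argument:
\begin{itemize}
\item $\chase{\jnst',\reify(\rsb)}$ is homomorphically equivalent, via maps fixing $\adom{\jnst'}$, to the reified-signature part of $\chase{\jnst'\cup D(\jnst'),\rsbb}$.
\item $\jnst'\cup D(\jnst')\models\phi(\vt)$ iff $\jnst'\models\reify(\phi)(\vt)$.
\end{itemize}

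\paragraph{Two smaller slips.}
\begin{itemize}
\item The map you need goes from $\cq$ into $\reify(\cq^\circ)$, not from $\reify(\cq^\circ)$ onto $\cq$.
\item ``Existentially projecting'' the atom-node variables loses those that are answer variables. Those must instead be matched to nodes already present in the instance.
\end{itemize}
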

\begin{proof}
    Given a symbol $\apred$ of signature $\Xi$ of $\rsb$ whose arity is greater than two, let $\eru_{\apred}$ be a rule $$\narule{\apred(x_1, \ldots, x_n)}{z}{\bigwedge_{\apred_i \in \reify(\apred)} \apred_i(x_i, z)}$$ that is $\eru_\apred$ is a rule that simply \emph{projects} $\apred$ to its reified variant. Let $\rsbb$ be $\rsb$ enriched with rules $\eru_\apred$ for all relevant $\apred$ of signature of $\rsb$. Note that $\rsbb$ is UCQ-rewritable - as $\rsb$ is UCQ-rewritable and the added rules cannot be fired recursively or cannot trigger rules from $\rsb$. 
    
    The restriction of $\chase{\jnst, \rsbb}$ to reified signature $\reify(\Xi)$ is isomorphic to $\reify(\chase{\jnst, \rs})$ for any instance $\jnst$. By \cref{lem:reification-prop-one}, it is also isomorphic to $\chase{\reify(\jnst), \reify(\rs)}$. From this we conclude that if we take any CQ $Q$ over $\reify(\Xi)$ and let $\Phi$ be its rewriting against $\rsbb$, we have that $\reify(\Phi)$ is the rewriting of $Q$ against $\reify(\rsb)$ - which implies that $\reify(\rsb)$ is UCQ-rewritable.
\end{proof}

\subsubsection*{Falsifying the Consequence: (\ref{tag:lem:main-b})}
Using \cref{lem:reification-prop-one} we have that $\reify(\chase{\rs})$ is homomorphically equivalent to $\chase{\rs'}$; hence if the consequence (\ref{tag:lem:main-a}) does not hold, then (\ref{tag:lem:main-b}) does not hold as well.

\subsubsection*{Proving the Assumption of \cref{lem:main-b}}
As $\signature'$ is binary and we have \cref{lem:reification-property-two} we have shown that assumption of \cref{lem:main-b} holds for $\rs'$ and signature $\signature'$ which ends the proof.

\subsection{Streamlining the Heads}

We now introduce two definitions that allow us to have a convenient structure in the chase as long as only non-Datalog rules have been applied: atoms are endowed with an orientation, and each term has a unique predecessor by atoms of a given predicate. 

\begin{definition}[Forward-existential]\label{def:rs-forward-existential}
    A rule $\eru$ is \emph{forward-existential} \iffi for each its head-atoms $A(x,y)$ we have that $x$ is a frontier variable and $y$ is an existential variable. A rule set is forward-existential \iffi each of its non-Datalog rule is forward-existential.
\end{definition}

\begin{definition}[Predicate-unique]\label{def:frontier-functional}
    A rule set $\rs$ over binary signature $\signature$ is \emph{predicate-unique} \iffi for every non-Datalog rule $\eru$ of $\rs$ every predicate $\epred \in \signature$ appears at most once in the head of $\eru$.
\end{definition}

Note, that the above is not equivalent with single-head rule sets as \[\narule{\apred(x),\bpred(y)}{z}{\dpred(x,z),\epred(y,z)}\] is a predicate-unique, forward-existential rule.

\begin{lemma}\label{lem:main-cde}
    For every binary signature $\signature$ and every UCQ-rewritable, forward-existential, and predicate-unique rule set $\rs$ over $\signature$ we have: 
    \begin{align*}
        \mainstatementtopinstance{\topinst}{\rs}. \tag{\textup{\symrook}}\label{tag:lem:main-cde}
    \end{align*}
\end{lemma}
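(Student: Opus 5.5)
We prove (\symrook) by contraposition. Suppose $\rs$ is a UCQ-rewritable, forward-existential, predicate-unique rule set over a binary signature $\signature$ with $\chase{\rs}\models\arbtoure$; we aim to show $\chase{\rs}\models\loopq$. The argument has three parts: exploit the tree shape of the chase forced by the two syntactic conditions, bound how $\epred$-edges are justified via the injective rewriting of $\epred(x,y)$, and then use Ramsey (\cref{cor:Ramsey}) to force a loop.

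\textbf{Step 1: the existential skeleton is a forest.} Consider the atoms of $\chase{\rs}$ produced by non-Datalog triggers. By forward-existentiality, every head atom $\apred(x,y)$ of such a trigger points from a frontier term to a freshly created term. By predicate-uniqueness, each chase term $t$ has, for every predicate $\apred$, at most one $\apred$-predecessor, namely the image of its trigger's frontier. Hence every chase term has a well-defined, finite set of ``parents'' created strictly earlier. I would make this quantitative with a timestamp $\timestamp{t}$ (the chase step creating $t$). For a tuple of terms I would use the multiset $\timestamps{\cdot}$ of the timestamps of its terms, ordered by $\lexorder$. By \cref{lem:lexorder_wf} this order is well-founded on multisets of bounded size, so minimal witnesses exist.

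\textbf{Step 2: bounded justification of edges.} Let $\ucq=\rew_{\mathsf{inj}}(\epred(x,y),\rs)$ be the injective rewriting from the rephrased \cref{def:ucq-rewritability}, and let $m$ be the maximal size of its disjuncts. For chase terms $a,b$ with $\epred(a,b)\in\chase{\rs}$, I would apply the rewriting to the instance consisting of the existential atoms created up to the creation of $a$ and $b$, together with the ancestors of $a$ and $b$. The goal is a statement of the following form:
\begin{itemize}
\item $\epred(a,b)$ holds iff some disjunct $\cq\in\ucq$ maps injectively into a set of at most $m$ existential atoms;
\item these atoms can be chosen within a bounded neighbourhood of the ancestries of $a$ and $b$;
\item the choice is $\lexorder$-minimal with respect to timestamps.
\end{itemize}
Getting this localisation right is delicate. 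Datalog atoms between older terms may have been derived using younger terms. So I would work with the chase of the non-Datalog part followed by Datalog saturation, and justify this with \cref{lem:bdd-stacking}-style stacking.

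\textbf{Step 3: Ramsey and pumping.} Take an $\epred$-tournament on $N$ distinct terms, with $N$ huge. Colour each edge $\{a,b\}$ by a finite amount of data:
\begin{itemize}
\item which disjunct $\cq$ justifies it and in which direction;
\item the pattern describing which variables of $\cq$ land in the ancestry of $a$, in the ancestry of $b$, or in a common ancestry;
\item the relative order of the relevant timestamps.
\end{itemize}
By \cref{cor:Ramsey} we obtain a large monochromatic subtournament $a_1,\dots,a_s$. Within it, the justification of $\epred(a_i,a_j)$ has the same shape for all $i<j$. In particular the part of $\cq$ attached to the ``second'' endpoint lands in the ancestry of $a_j$, and it is isomorphic in type to what it would be for any other index.

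The main obstacle is the final step. From three indices $i<j<k$ with uniform justifications, one must build a homomorphism of a disjunct of $\ucq$ into $\chase{\rs}$ that sends both $x$ and $y$ to the same term, which would give $\epred(a_i,a_i)$ and hence $\loopq$. I would first try to glue the $x$-side of the witness for $\epred(a_i,a_j)$ with the $y$-side of the witness for $\epred(a_k,a_i)$ or $\epred(a_j,a_i)$. This uses the facts that the forest structure makes the two sides interact only through common ancestors, and that monochromaticity guarantees these common ancestors coincide. If direct gluing fails, I would fall back on choosing the tournament $\lexorder$-minimal and deriving a contradiction with minimality from a shorter justification.
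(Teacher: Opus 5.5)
Your Step~2 has a genuine gap, at exactly the point you flag as delicate. For a rule set that is only UCQ-rewritable, forward-existential and predicate-unique, $\chase{\rs}$ need \emph{not} be homomorphically equivalent to $\chase{\chase{\rs^{\exists}},\rs^{DL}}$, because Datalog-derived atoms can fire non-Datalog rules. Take $\top\to\exists z\,\apred(z)$, $\apred(x)\to\bpred(x)$ and $\bpred(x)\to\exists z\,\dpred(x,z)$. This set is non-recursive, hence UCQ-rewritable, and satisfies both syntactic conditions, yet the stacked chase contains no $\dpred$-atom. \cref{lem:bdd-stacking} cannot repair this: it \emph{assumes} such a stacking equivalence in order to conclude bdd. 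Without it, your tournament terms need not lie in the existential skeleton. Moreover, the trigger creating a term may have Datalog atoms in its body, so your $\lexorder$-minimisation (replace a term by the body of its creating trigger) leaves the skeleton, where unique predecessors and acyclicity are lost. The missing idea is the body-rewriting surgery of \cref{def:body-rewriting}. $\rew(\rs)$ has a homomorphically equivalent chase (\cref{lem:rew-prop-one}), keeps the three properties (\cref{lem:rew-prop-two}), and is \emph{quick} (\cref{lem:frontier}). Quickness is precisely what makes \cref{lem:skeleton} hold. Tournaments can then be moved into $\chase{\chase{\rs^{\exists}},\rs^{DL}}$, and every $\epred$-edge there has an injective witness from the rewriting lying inside $\chase{\rs^{\exists}}$.

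Step~3 is also left open, and colouring by ancestry patterns and timestamp orders does not close it: equal types do not make the shared parts of two witnesses coincide as actual terms. Two concrete facts do.
\begin{enumerate}
\item Peak removal (\cref{lem:peak}). Take a $\lexorder$-minimal disjunct with an injective match. If it had a maximal existential variable, you could replace that variable's atoms by the body of the trigger creating its image and re-apply the rewriting, giving a witness with strictly smaller timestamp multiset. Hence some witness is a \emph{valley query}: every variable lies below $x$ or $y$.
\item Functionality (\cref{lem:cq-functional}). The skeleton is not a forest, since a term can have several parents, one per head predicate. What forward-existentiality and predicate-uniqueness give is that each term has at most one $\apred$-predecessor for each predicate $\apred$. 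Consequently, the match of a query all of whose variables lie below $x$ is determined by the image of $x$.
\end{enumerate}
Now colour each edge by a valley query alone and take a monochromatic $4$-tournament. If $q$ is disconnected, a loop follows directly. If only one answer variable is maximal, every vertex has out-degree at most one, which is impossible. Otherwise write $q=\exists\vec{v}\,q_x(x,\vec{v})\wedge q_y(\vec{v},y)$ with induced functions $f_x,f_y$. A transitive triple $k_1\to k_2$, $k_1\to k_3$, $k_2\to k_3$ gives $f_x(k_2)=f_y(k_3)=f_x(k_1)=f_y(k_2)$, so $q(k_2,k_2)$ holds. This is your gluing idea made precise, and the fallback through a minimal tournament is unnecessary.
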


Now we show that \cref{lem:main-b} is equivalent to \cref{lem:main-cde}. The $(\Rightarrow)$ direction is trivial, we show $(\Leftarrow)$ via the contraposition. Let $\signature$ be a signature and $\rs$ be a rule set that together form a counterexample to \cref{lem:main-b}.

\subsubsection*{Rule Set Surgery}
\newcommand{\eruinit}{\eru_\mathrm{init}}
\newcommand{\eruDL}{\eru_\mathrm{DL}}
\newcommand{\eruex}{\eru_\exists}
Let $\apred$ and $\bpred$ (with indices) be fresh relational symbols. Given a rule\\ \[\eru = \narule{B(\vx, \vy)}{\vz}{H(\vy, \vz)}\] we define:
\begin{itemize}
    \item the existential rule $\eruinit$ as
    \[\narule{B(\vx, \vy)}{w}{\apred_{0}^\eru(w)\wedge\bigwedge_{y\in\vy} \apred^\eru_{y,w}(y,w)};\]
    \item the existential rule $\eruex$ as
    \[\narule{\apred_0^\eru(w)\wedge\bigwedge_{y\in\vy} \apred_{y,w}^\eru(y,w)}{\vz}{\bigwedge_{y'\in\vy\,\cup \singleton{w}}\bigwedge_{z\in\vz} \bpred^\eru_{y',z}(y',z)};\]
    \item the Datalog rule $\eruDL$ as
    \[\nadrule{\bigwedge_{y'\in\vy\,\cup \singleton{w}}\bigwedge_{z\in\vz} \bpred^\eru_{y',z}(y',z)}{H(\vy, \vz)}.\]
\end{itemize}

\newcommand{\stream}{\bigtriangledown}
Given a rule set $\rsb$ let $\stream(\rsb)$ be the rule set containing $\eruinit$, $\eruex$ and $\eruDL$ for all rules in $\rsb$. Now let $\rs' = \stream(\rs)$, and let $\signature'$ contain all predicates that appear in $\rs'$. We prove that $\rs'$ and $\signature'$ form a counterexample to \cref{lem:main-cde}.

\subsubsection*{Main Properties of Streamlining}
As the rules of $\stream(\rsb)$ simply introduce intermediate steps to the chase --- note there is no room for unwanted interplay between the rules --- we have the following:

\begin{lemma}\label{lem:streamline-prop-one}
For every rule signature $\sig$, rule set $\rsb$ and instance $\jnst$ over $\sig$, we have that $\chase{\jnst, \rsb}$ is homomorphically equivalent to $\chase{\jnst, \stream(\rsb)}$ when restricted to $\sig$.
\end{lemma}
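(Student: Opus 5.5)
We construct homomorphisms in both directions between $\chase{\jnst,\rsb}$ and the $\sig$-restriction of $\chase{\jnst,\stream(\rsb)}$. The idea is that each trigger of a rule $\eru$ in the first chase corresponds to a three-stage cascade $\eruinit$, $\eruex$, $\eruDL$ in the second. The fresh predicates $\apred^\eru_\bullet$ and $\bpred^\eru_\bullet$ never occur in $\sig$ and never occur in the body of any rule other than the next stage of the same cascade. Because of this, all the bookkeeping can be done by induction on chase steps.

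\emph{Direction 1, from $\chase{\jnst,\rsb}$ into $\chase{\jnst,\stream(\rsb)}$.} We build by induction on $n$ a homomorphism $h_n$ from $\step{n}{\jnst,\rsb}$ into $\chase{\jnst,\stream(\rsb)}$, each $h_{n+1}$ extending $h_n$ and $h_0$ the identity. Take a trigger $\fulltrig$ of $\rsb$ applied at step $n+1$. Then $h_n\circ\hom$ maps $B(\vx,\vy)$ into the second chase, so the trigger $\pair{\eruinit, h_n\circ\hom}$ is applied there and yields a term $w$ together with the atoms $\apred_0^\eru(w)$ and $\apred^\eru_{y,w}(h_n\hom(y),w)$. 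These atoms form a body match for $\eruex$, which creates images for $\vz$ and all the $\bpred^\eru$-atoms. Those in turn form a body match for $\eruDL$, which produces the image of $H(\vy,\vz)$. We extend $h_n$ by sending each existential of the original trigger to the corresponding term created by $\eruex$. Taking the union of all $h_n$ gives the homomorphism.

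\emph{Direction 2, from the second chase restricted to $\sig$ into $\chase{\jnst,\rsb}$.} We again proceed by induction on chase steps, building $g$ on terms of $\chase{\jnst,\stream(\rsb)}$. The invariant is as follows.
\begin{itemize}
\item Every $\sig$-atom is mapped to an atom of $\chase{\jnst,\rsb}$.
\item Every term $w$ created by a trigger $\pair{\eruinit,\hom}$ remembers that trigger, and $g\circ\hom$ is a match of $B$ into $\chase{\jnst,\rsb}$. Concretely, we set $g(w)$ to be any term of $g(\hom(\vy))$, or an arbitrary term if $\vy$ is empty; its value is irrelevant for $\sig$.
\item Each $\bpred$-atom produced by $\eruex$ is linked to a unique originating $\eruinit$-trigger.
\end{itemize}

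The main obstacle is checking that a body match of $\eruex$ or $\eruDL$ really comes from a single cascade. A priori, $\eruex$ could fire on $\apred^\eru$-atoms with the same predicate but produced by different $\eruinit$-triggers, and $\eruDL$ could mix $\bpred$-atoms from different $\eruex$-triggers. Both situations are ruled out by the shared variable $w$, and by $\vz$ in the $\eruDL$ case.
\begin{itemize}
\item Every $\apred^\eru$-atom and $\bpred^\eru$-atom has a second argument that is a fresh term. By the rule shapes, that term is created only by a unique trigger of $\eruinit$ (for $w$) or of $\eruex$ (for a $z$).
\item The body of $\eruex$ joins all its atoms on $w$. The body of $\eruDL$ contains $\bpred^\eru_{w,z}(w,z)$ for every $z\in\vz$, and all the $\bpred$-atoms share the $z$'s with it.
\item Hence the match of $\eruDL$ is exactly the output of one $\eruex$-trigger, which itself sits on the output of one $\eruinit$-trigger whose body match is $\hom$.
\end{itemize}
When $\vz$ is empty, $\eruDL$ has an empty body. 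In that case we read the construction as containing the frontier-dependence through $w$; we would treat such rules (which are Datalog in $\rsb$) separately, either by keeping them unchanged or by checking the degenerate case directly.

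Given this uniqueness, we define $g$ on the terms created by the $\eruex$-trigger as the images of $\vz$ under the output of the original trigger $\pair{\eru,g\circ\hom}$ in $\chase{\jnst,\rsb}$. The head atoms derived by $\eruDL$ are then mapped into that output. This closes the induction, and together with Direction 1 gives the homomorphic equivalence.
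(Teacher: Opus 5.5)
Your proposal is correct and follows the paper's proof. The forward direction is the paper's step-by-step simulation of each $\mathcal{S}$-trigger by the three-stage cascade. The backward direction rests on the same key fact as the paper's observation: any body match of $\rho_\exists$ or $\rho_{\mathrm{DL}}$ on freshly created atoms is exactly the output of a single earlier trigger. You map into the full chase rather than tracking step $3i$ against step $i$, which spares you the paper's nine-case analysis. You are also right that the construction degenerates for Datalog rules of $\mathcal{S}$ (empty $\bar{z}$), a case the paper's definition does not treat explicitly; keeping those rules unchanged is the sensible repair.
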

\begin{proof}
The proof can be found in \cref{app:streamline-prop-one}.
\end{proof}

\begin{lemma}\label{lem:streamline-prop-two}
Given a rule set $\rsb$ we have that $\stream(\rsb)$ is forward-existential and predicate-unique. Moreover, if $\rsb$ is UCQ-rewritable then $\stream(\rsb)$ is UCQ-rewritable as well.
\end{lemma}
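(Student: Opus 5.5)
The first part of the statement is a syntactic check. The only non-Datalog rules of $\stream(\rsb)$ are the rules $\eruinit$ and $\eruex$; the rules $\eruDL$ are Datalog and impose no constraint.
\begin{itemize}
\item Every head atom of $\eruinit$ of the form $\apred^\eru_{y,w}(y,w)$ has its frontier variable $y$ first and the existential $w$ second. The unary atom $\apred^\eru_0(w)$ is handled either by treating unary atoms as vacuous for the orientation condition, or by replacing it with a binary atom $\apred^\eru_0(y_0,w)$ for some frontier variable $y_0$.
\item Every head atom $\bpred^\eru_{y',z}(y',z)$ of $\eruex$ has the frontier variable $y'\in\vy\cup\singleton{w}$ first and the existential $z$ second.
\item Predicate-uniqueness holds because every predicate in these heads is indexed by the pair of variables it connects, so it occurs at most once.
\end{itemize}

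For UCQ-rewritability I would not rewrite queries directly. Instead I would use \cref{prop:ucq-rewritable_is_bdd} and show that $\stream(\rsb)$ is \bdd. The central claim is a step-simulation property of the chase, obtained from the construction underlying \cref{lem:streamline-prop-one}. Fix an instance $\jnst$ over the signature of $\stream(\rsb)$, possibly containing fresh $\apred$- and $\bpred$-atoms. Every atom of $\step{n}{\jnst,\rsb}$ then maps to $\step{3n+c}{\jnst,\stream(\rsb)}$, and conversely, for suitable constants. I would organise the argument in three steps.

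\begin{enumerate}
\item Any atom over the original signature that is derived in $\chase{\jnst,\stream(\rsb)}$ comes from a chain $\eruinit,\eruex,\eruDL$, or from a shortened chain $\eruex,\eruDL$ or $\eruDL$ alone, started on auxiliary atoms already present in $\jnst$. Since the auxiliary predicates are fresh and private to each rule $\eru$, no other rule can produce or consume them. This gives a per-rule factorisation of each derivation.
\item I would define $\jnst^+=\chase{\jnst,\{\eruex,\eruDL\mid \eru\in\rsb\}}$ restricted to two rounds. This is a finite-depth, non-recursive closure: an $\eruex$ application needs $\apred$-atoms, which only $\eruinit$ creates, and $\eruDL$ needs $\bpred$-atoms, which only $\eruex$ or the input creates. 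From step 1 I would obtain
\[
\chase{\jnst,\stream(\rsb)}\homequiv \chase{\jnst^+,\rsb}\cup(\text{auxiliary atoms}).
\]
\item A CQ $\cq$ over the full signature splits into an original-signature part and auxiliary atoms. Each auxiliary atom hangs off terms introduced by an $\eruinit$ or $\eruex$ trigger, and the body of that trigger must itself be matched. I would replace each such auxiliary cluster by the body of the corresponding $\eruinit$ or $\eruex$ rule, which yields a UCQ over the original signature. I would then rewrite this UCQ with $\rew(\cdot,\rsb)$ and finally rewrite through the bounded-depth closure $\jnst\mapsto\jnst^+$. The result is a UCQ, and this is the pattern of \cref{lem:bdd-stacking}.
\end{enumerate}

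Alternatively, one can argue about the derivation depth directly. If $\cq$ is entailed, the part over the original signature is entailed from $\jnst^+$ within $\bdd(\cq',\rsb)$ steps of $\rsb$. Each step of $\rsb$ costs at most three steps of $\stream(\rsb)$, and the auxiliary atoms add a constant. This gives a bound of the form $3\cdot\bdd(\cq',\rsb)+3$.

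The main obstacle is step 3, and specifically the auxiliary atoms in the query and the input. A query may mention partial chains such as an $\apred^\eru_0(w)$ whose continuation was never fired, or it may match $\bpred$-atoms of the database that do not form a full $\eruDL$ body. I would handle this by case analysis on which auxiliary atoms of $\cq$ are matched to chase terms and which to database terms. Chase-created auxiliary atoms are always accompanied by a full trigger of the original rule, so those query atoms can be replaced by that rule's body plus the created terms. Database auxiliary atoms are left untouched, since they are already in $\jnst$. Getting the homomorphic equivalence in step 2 right, with partial chains in $\jnst$, is where I expect most of the care to go.
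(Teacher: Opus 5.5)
\paragraph{Assessment.} Your syntactic check is fine. Your step 2 is in substance the paper's decomposition. The paper merges $\eru_{\mathrm{init}},\eru_\exists,\eru_{\mathrm{DL}}$ into one rule $\eru_{\mathrm{full}}$, whose head is the head of $\eru$ plus all auxiliary atoms. It then factors the chase of $\stream(\rsb)$ as $\rsb_{\mathrm{full}}$ run on top of the bounded closure of the input under the rules $\eru_\exists,\eru_{\mathrm{DL}}$.

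\paragraph{The gap: step 3.} The same problem undermines your direct depth bound. Take a query atom $\bpred^\eru_{y',z}(u,v)$ matched to a chase-created atom. It does not merely say that some trigger for $\eru$ fired. It pins $v$ to be the very null created by that trigger, and the original-signature atoms of the query containing $v$ then speak about that null. Replacing the cluster by the body of $\eru$ (with or without its head atoms, ``plus the created terms'') forgets this pinning. The resulting disjunct over the original signature is therefore unsound.

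For example, take $\rsb=\{P(y)\to\exists z\,R(y,z)\}$, $\cq=\exists u,v,t\;\bpred^\eru_{y,z}(u,v)\wedge R(v,t)$, and $\jnst=\{P(a),R(a,b),R(b,c)\}$. The chase of $\stream(\rsb)$ only adds the chain for $a$, ending with $R(a,z_1)$. Since $z_1$ has no outgoing $R$-edge, $\cq$ fails. Yet your replacement $\exists u,v,t\;P(u)\wedge R(u,v)\wedge R(v,t)$ holds in $\jnst$.

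For the same reason, a match of the original-signature part found within $\mathtt{bdd}(\cq',\rsb)$ steps need not extend to the auxiliary atoms. So the bound $3\cdot\mathtt{bdd}(\cq',\rsb)+3$ does not follow.

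\paragraph{Repair.} You must reverse the order: rewrite the original-signature atoms on a pinned null back to the moment that null is created, and only then collapse its auxiliary cluster.
\begin{enumerate}
\item Work with $\rsb_{\mathrm{full}}$. Fix a match and let $v$ be the pinned null created last, say at step $s$.
\item In $J=\step{s}{\jnst,\rsb_{\mathrm{full}}}$, the only atoms containing $v$ are the head atoms of its own trigger. Moreover, for answer tuples taken from $J$, the chase of $\rsb$ from the original-signature part of $J$ satisfies the same original-signature CQs as the full chase.
\item So first replace the original-signature part of $\cq$ by its rewriting $\rew(\cdot,\rsb)$, keeping every variable that occurs in an auxiliary atom as an answer variable. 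In a disjunct matched in $J$, the atoms on $v$ lie in the head of $v$'s trigger.
\item Only now perform a piece-rewriting step that replaces these atoms, together with their auxiliary cluster, by the body of the rule.
\end{enumerate}
Each round removes one cluster and yields finitely many disjuncts. Induction on the number of clusters therefore gives a finite UCQ, taking the union over all guesses of which cluster is created last and which auxiliary atoms lie in the input.

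This argument is also what the paper's final sentence needs. That sentence derives UCQ-rewritability of $\rsb_{\mathrm{full}}$ from the fact that its derivations coincide with those of $\rsb$, but it does not spell out how queries mentioning the auxiliary predicates are handled.
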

\begin{proof}
The proof can be found in  \cref{app:C}
\end{proof}

\subsubsection*{Falsifying the Consequence: (\ref{tag:lem:main-cde})}
From \cref{lem:streamline-prop-one} we have that the chase of $\topinst$ and $\rs$ and the chase of $\topinst$ and $\rs'$ are homomorphically equivalent when restricted to signature $\signature$. Therefore, if the consequence (\ref{tag:lem:main-b}) does not hold then (\ref{tag:lem:main-cde}) does not hold as well.

\subsubsection*{Proving the Assumption of \cref{lem:main-cde}}
Given $\rs$ is UCQ-rewritable we have that $\rs'$ is as required due to \cref{lem:streamline-prop-two}.

\subsection{Rewriting Bodies}

In this section we will use the quickness property from \cite{sebastian-piotr-rpq-paper}, inspired from \cite{DBLP:conf/pods/Ostropolski-Nalewaja22}. 

\begin{definition}[Quick]\label{def:rs-quick}
   A rule set $\rs'$ is {\em quick} \iffi for every instance $\inst$ and every atom $\beta$ of $\chase{\inst, \rs'}$ if all frontier terms of $\beta$ appear in $\adom{\inst}$ then $\beta \in \step{1}{\inst, \rs'}$.
\end{definition}

\begin{definition}[Regal]\label{def:rs-regal}
    A rule set $\rs$ over binary signature is \emph{regal}
    \iffi it is UCQ-rewritable, quick, forward-existential, and predicate-unique.
\end{definition}

We can now state \cref{thm:main-regal}, which is the statement that we prove in Section~\ref{sec:proof-core}.

\begin{theorem}\label{thm:main-regal}
    For every binary signature $\signature$ and every regal rule set $\rs$ over $\signature$ we have: 
    \begin{align*}
        \mainstatementtopinstance{\topinst}{\rs}.\tag{\textup{\largecrown}}\label{tag:thm:main-regal}
    \end{align*}
\end{theorem}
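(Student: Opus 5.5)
The plan is to argue by contradiction using a minimal counterexample. Fix a binary signature $\signature$ and a regal rule set $\rs$, and assume $\chase{\rs} \models \arbtoure$ but $\chase{\rs}\not\models\loopq$. Since $\rs$ is forward-existential and predicate-unique, the non-Datalog part of $\chase{\rs}$ is tree-like. Every chase term $t$ is created by a unique trigger, whose image of the frontier is the frontier of $t$. Every non-Datalog atom points from an older term to the newly created one. So we can assign to each term a timestamp $\timestamp{t}$, namely the chase step at which it was created, together with a well-founded ``creation ancestry''. Quickness will let us localise Datalog atoms. If an atom $\beta$ of $\chase{\rs}$ has all its terms in some sub-instance $\jnst$ closed under frontier-ancestry, then $\beta$ is already produced in one step from $\jnst$. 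In particular, each $\epred$-edge of a tournament is justified by a bounded amount of ancestry.

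First step: normalise the tournaments. Given a huge $\epred$-tournament $x_1,\dots,x_N$ in $\chase{\rs}$, colour each pair $i<j$ by a finite amount of information:
\begin{itemize}
\item the direction of the edge;
\item whether $x_i$ is older than, younger than, or an ancestor of $x_j$;
\item the isomorphism type of the bounded-size injective witness of $\epred(x_i,x_j)$.
\end{itemize}
The witness in the last item is given by the injective rewriting $\rew_{\mathsf{inj}}(\epred(x,y),\rs)$ of \cref{prop:injective-rewritings}, evaluated on the ancestry of the pair. Since $\rs$ is UCQ-rewritable, this rewriting is a finite UCQ, so there are finitely many colours. The Ramsey theorem (\cref{cor:Ramsey}) then yields arbitrarily large sub-tournaments on which all these features are uniform. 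In such a sub-tournament, for $i<j$, the edge goes in a fixed direction, and it is witnessed by one fixed CQ $\cq$ of the rewriting, mapped injectively into the ancestry of $x_i,x_j$.

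Second step: collapse two tournament vertices to produce a loop. The idea is to find a homomorphism $\hom$ from the relevant part of $\chase{\rs}$ back into $\chase{\rs}$ that identifies $x_i$ and $x_j$. Applying $\hom$ to the witness of $\epred(x_i,x_j)$ and using that $\chase{\rs}$ is a model closed under $\rs$, we would get $\epred(\hom(x_i),\hom(x_i))$, contradicting $\chase{\rs}\not\models\loopq$. To obtain $\hom$, use pigeonhole on the finitely many bounded-depth ancestry types. This gives two vertices whose creating triggers are ``the same up to the frontier''. Forward-existentiality and predicate-uniqueness then let us redirect the creation of one onto the other, since each term has a unique predecessor per predicate.

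Main obstacle: the frontier of the tournament vertices may be unbounded and may differ. The ancestry of $x_j$ need not be isomorphic to that of $x_i$, and naive identification may break Datalog atoms used in the witness. To handle this, I would set up a descent on multisets of timestamps. Among all uniform tournaments of a given large size (large enough for the Ramsey argument above, with the rewriting's size as parameter), choose one minimising the multiset $\timestamps{x_1,\dots,x_k}$ in the order $\slexorder$. This order is well-founded on multisets of bounded size by \cref{lem:lexorder_wf}. Then show that if the tournament cannot be collapsed to a loop, one can replace each $x_j$ by a strictly older term from its frontier. The replacement uses the witness CQ $\cq$ pulled back through the creating trigger, together with UCQ-rewritability to re-derive the edges among the replacements. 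This produces a new uniform tournament of the same size with a $\slexorder$-smaller timestamp multiset, contradicting minimality. The delicate part, which I expect to require the most care, is proving that this ``push to the frontier'' step preserves every pairwise $\epred$-edge.
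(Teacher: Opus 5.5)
There is a genuine gap. The step that actually produces the loop is missing, and the timestamp descent you propose to fill it is applied to the wrong object.

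First, you take for granted that each edge $\epred(x_i,x_j)$ has an injective witness living in the ancestry of $x_i,x_j$, and you attribute this to quickness. Quickness only gives $\chase{\rs}\homequiv\chase{\chase{\rs^\exists},\rs^{DL}}$. So every edge is injectively witnessed in $\chase{\rs^\exists}$ by some CQ of the rewriting, but that image may still contain \emph{peaks}, i.e.\ maximal existential variables lying below neither endpoint. Removing them is exactly where the multiset descent belongs:
\begin{enumerate}
\item Pick a witness $(\cq,\hom)$ with $\timestamps{\hom(\cq)}$ $\lexorder$-minimal.
\item If some existential $z$ is maximal, replace the atoms containing $\hom(z)$ by the body of the trigger that created $\hom(z)$. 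By forward-existentiality, all those atoms come from that trigger's head.
\item The new instance still entails $\epred(s,t)$ and has strictly smaller timestamps, so the injective rewriting yields a smaller witness.
\end{enumerate}
Hence every edge is witnessed by a \emph{valley query}, whose only maximal variables are $x$ and $y$.

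Your descent on $\timestamps{x_1,\dots,x_k}$ instead needs edges among the $x_j$ to be inherited by frontier terms, and nothing local supports this. A small non-recursive regal rule set can create terms $v_1,v_2,v_3$ and $x_1,x_2,x_3$ with $\apred(v_i,x_i)$ and $\bpred(v_i,x_{i+1})$ (indices mod $3$), plus the Datalog rule $\apred(v,x)\wedge\bpred(v,y)\to\epred(x,y)$. Its chase contains the loop-free $3$-cycle $x_1\to x_2\to x_3\to x_1$, all edges witnessed by one valley query, while the frontier terms $v_i$ carry no $\epred$-edge at all. Whatever excludes this for larger tournaments must exploit the tournament structure, and that is exactly what your plan leaves open. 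Likewise, pigeonholing on ancestry types does not give a homomorphism identifying two vertices while preserving witness atoms that mention both.

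The missing idea is a functional-dependency argument. In $\chase{\rs^\exists}$ every term has at most one $\apred$-predecessor per predicate $\apred$. So for any CQ all of whose variables lie below $x$, the images of the other variables are determined by the image of $x$. Write a connected valley query with two peaks as $\exists\vvv\,\cq_x(x,\vvv)\wedge\cq_y(\vvv,y)$, with $\vvv$ the variables below both. This gives functions $f_x,f_y$ such that $\cq(s,t)$ implies $f_x(s)=f_y(t)$.

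After Ramsey, color each edge only by a valley query witnessing it. Any $4$-tournament contains a transitive triple $k_1\to k_2$, $k_1\to k_3$, $k_2\to k_3$. This gives $f_x(k_2)=f_y(k_3)=f_x(k_1)=f_y(k_2)$, hence $\cq(k_2,k_2)$ and $\epred(k_2,k_2)$. The other shapes are easy. A valley query with a single peak defines a partial function, so out-degrees are at most one, which is impossible in a $4$-tournament. A disconnected one yields a loop by combining the components used by a few edges. This closes the argument directly, with no descent on the tournament vertices.
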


We now show that \cref{lem:main-cde} is equivalent to \cref{thm:main-regal}. The $(\Rightarrow)$ direction is trivial, we show $(\Leftarrow)$ via the contraposition. Let $\signature$ be a signature and $\rs$ be a rule set that together form a counterexample to \cref{lem:main-cde}.
\subsubsection*{Rule Set Surgery}
\newcommand{\rewop}{\mathtt{rew}}
\newcommand{\rrew}[1]{\rew(#1)}
\newcommand{\rhorew}[2]{\rew(#1,#2)}

Following \cite{sebastian-piotr-rpq-paper} consider the next definition:
\begin{definition}[Body Rewriting]\label{def:body-rewriting}
Given a rule set $\rsb$ and an existential rule $\rho \in \rsb$ of the form $\cnarule{B(\vx, \vy)}{\vz}{H(\vy, \vz)}$,
let $\rhorew{\rho}{\rsb}$ be the rule set:
\begin{align*}
    &\Big\{\ \narule{q(\vx', \vy')}{\vz}{H(\vy', \vz) \ \Big| \\[-1ex] 
    &\hspace{5ex}\Exists{\vx'}q(\vx', \vy') \;\in\; \rew(\Exists{\vx} B(\vx, \vy), \rsb)}\ \Big\}.
\end{align*}
Finally, let $\rrew{\rsb}= \rsb \cup \bigcup_{\rho \in \rsb}\; \rhorew{\rho}{\rsb}.$
\end{definition}

Let $\rs'$ be $\rrew{\rs}$. We now show that $\rs'$ and $\signature$ form a counterexample to \cref{thm:main-regal}.
\subsubsection*{Main Properties of Body Rewriting}
We take the following (Lemma 42, \cite{sebastian-piotr-rpq-paper}) as our first property of the surgery:
\begin{lemma}[\cite{sebastian-piotr-rpq-paper}]\label{lem:rew-prop-one}
For any \bdd rule set $\rsb$ and any instance $\jnst$ we have $\chase{\jnst, \rsb} \homequiv \chase{\jnst, \rew(\rsb)}$.
\end{lemma}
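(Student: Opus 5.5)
We need to prove that for any \bdd rule set $\rsb$ and any instance $\jnst$ we have $\chase{\jnst, \rsb} \homequiv \chase{\jnst, \rew(\rsb)}$. The plan is to give one homomorphism in each direction, using for each direction a different semantic fact about $\rew(\rsb)$.

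\textbf{Homomorphism from $\chase{\jnst,\rsb}$ into $\chase{\jnst,\rew(\rsb)}$.} This direction is immediate because $\rsb \subseteq \rew(\rsb)$. Indeed, $\chase{\jnst,\rew(\rsb)}$ is a model of $\jnst$ and of every rule of $\rsb$. By universality of $\chase{\jnst,\rsb}$, it therefore maps homomorphically into $\chase{\jnst,\rew(\rsb)}$.

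\textbf{Homomorphism from $\chase{\jnst,\rew(\rsb)}$ into $\chase{\jnst,\rsb}$.} By universality of $\chase{\jnst,\rew(\rsb)}$, it suffices to show that $\mathcal{M} := \chase{\jnst,\rsb}$ is a model of every rule of $\rew(\rsb)$. Take a rule $\narule{q(\vx',\vy')}{\vz}{H(\vy',\vz)}$ obtained from $\rho = \cnarule{B(\vx,\vy)}{\vz}{H(\vy,\vz)} \in \rsb$. Here $\Exists{\vx'} q(\vx',\vy')$ is a disjunct of $\rew(\Exists{\vx} B(\vx,\vy),\rsb)$, and $\vy'$ is a specialization of $\vy$. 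Let $\pi$ be a match of $q$ in $\mathcal{M}$, and write $\vt = \pi(\vy')$.
\begin{enumerate}
\item View $\mathcal{M}$ itself as an input instance. Since $\mathcal{M} \models q(\vt)$, soundness of the rewriting gives $\pair{\mathcal{M},\rsb} \models \Exists{\vx} B(\vx,\vt)$, where $\vt$ is read through the identification induced by $\vy'$.
\item Since $\mathcal{M}$ is already a model of $\rsb$, the chase of $\mathcal{M}$ under $\rsb$ is homomorphically equivalent to $\mathcal{M}$, via a homomorphism that is the identity on $\adom{\mathcal{M}}$. Hence $\mathcal{M} \models \Exists{\vx} B(\vx,\vt)$.
\item This yields a trigger for $\rho$ in $\mathcal{M}$ whose frontier is mapped to $\vt$. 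Because $\mathcal{M}$ satisfies $\rho$, the head $H(\vt,\vz)$ is satisfied for some choice of $\vz$. This is exactly the head of the rewritten rule under $\pi$.
\end{enumerate}

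The only delicate point is step 1, where we apply the rewriting to a possibly infinite instance $\mathcal{M}$. \cref{def:ucq-rewritability} quantifies over instances, and the rewriting's correctness is semantic (entailment), so this is allowed. To stay within finite instances, one can instead restrict to the finite subinstance $\pi(q) \subseteq \mathcal{M}$. Then $\pair{\pi(q),\rsb} \models \Exists{\vx} B(\vx,\vt)$. Since $\pi(q) \subseteq \mathcal{M}$ and $\mathcal{M}$ is a model of $\rsb$, universality of $\chase{\pi(q),\rsb}$ gives a homomorphism into $\mathcal{M}$ that fixes $\adom{\pi(q)}$. This transfers the match of $B$ to $\mathcal{M}$ with the frontier still sent to $\vt$. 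This finiteness-safe version is the one I would write out.

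The two homomorphisms together give $\chase{\jnst, \rsb} \homequiv \chase{\jnst, \rew(\rsb)}$.
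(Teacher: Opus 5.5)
The paper does not prove this lemma itself; it imports it as Lemma~42 of \cite{sebastian-piotr-rpq-paper}, so there is no in-paper argument to compare against. Your proof is correct and self-contained.

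The first direction uses $\rsb \subseteq \rew(\rsb)$ together with universality of $\chase{\jnst,\rsb}$. The second direction reduces, again by universality, to checking that $\chase{\jnst,\rsb}$ satisfies every rewritten rule. Your finite-instance argument via $\pi(q)$ does exactly this, using only the soundness direction of the rewriting. You can drop the detour through $\chase{\pi(q),\rsb}$: $\pair{\pi(q),\rsb}$ entailing $\exists \vx\, B(\vx,\vt)$ means the query holds in every model of $\pi(q)\cup\rsb$, and $\chase{\jnst,\rsb}$ is such a model.

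Your route makes explicit that the argument actually shows $\rsb \models \rho'$ for every $\rho' \in \rew(\rsb)$. In other words, $\rsb$ and $\rew(\rsb)$ have the same models, and the homomorphic equivalence of their chases is just universality. It also shows which hypotheses matter. Completeness of the rewriting plays no role here; it is needed only for quickness (\cref{lem:frontier}). The \bdd hypothesis is used only to guarantee that $\rew(\rsb)$ is a well-defined finite rule set.
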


As an easy consequence of Lemma~\ref{lem:rew-prop-one}, we have the following. 
\begin{lemma}\label{lem:rew-prop-two}
    The $\rew$ rule set surgery preserves UCQ-rewritable, predicate-unique, and forward-existential properties of rule sets.
\end{lemma}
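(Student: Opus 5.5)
We treat the three properties separately, using \cref{lem:rew-prop-one} for UCQ-rewritability and a direct inspection of the shape of the added rules for the two syntactic properties. Throughout, let $\rsb$ be a rule set and recall $\rrew{\rsb}=\rsb\cup\bigcup_{\rho\in\rsb}\rhorew{\rho}{\rsb}$.

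\emph{UCQ-rewritability.} Assume $\rsb$ is UCQ-rewritable, hence \bdd by \cref{prop:ucq-rewritable_is_bdd}. Note that $\rrew{\rsb}$ is finite: each rewriting $\rew(\Exists{\vx} B(\vx,\vy),\rsb)$ is a finite UCQ. By \cref{lem:rew-prop-one}, for every instance $\jnst$ we have $\chase{\jnst,\rsb}\homequiv\chase{\jnst,\rrew{\rsb}}$. Since CQ entailment is preserved under homomorphic equivalence, for every CQ $\cq(\vx)$ and tuple $\vt$ over $\adom{\jnst}$ we get $\pair{\jnst,\rrew{\rsb}}\models\cq(\vt)$ iff $\pair{\jnst,\rsb}\models\cq(\vt)$ iff $\jnst\models\rew(\cq(\vx),\rsb)(\vt)$. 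Thus $\rew(\cq(\vx),\rsb)$ is also a rewriting of $\cq$ against $\rrew{\rsb}$. For answer variables, we need the equivalence to fix the terms of $\jnst$. This holds because the homomorphisms witnessing $\homequiv$ can be taken to be the identity on $\adom{\jnst}$: both chases are models of $\jnst$ and the respective rule set, and each is universal for its models. This is the one point where some care is needed, and we expect it to be the main (though mild) obstacle. If Lemma~42 of \cite{sebastian-piotr-rpq-paper} is stated only up to plain homomorphic equivalence, we would re-derive the identity-on-$\jnst$ version from universality of the chase together with soundness of rewritings: every rule of $\rhorew{\rho}{\rsb}$ is entailed by $\rsb$, since its body entails $\Exists{\vx}B(\vx,\vy)$ under $\rsb$.

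\emph{Forward-existential and predicate-unique.} Every rule added in $\rhorew{\rho}{\rsb}$ has exactly the head $\Exists{\vz}H(\vy',\vz)$ of some $\rho\in\rsb$, with the frontier tuple $\vy$ replaced by $\vy'$ and the existential variables $\vz$ unchanged. The rule is Datalog iff $\rho$ is. Predicate-uniqueness concerns only which predicates occur in a non-Datalog head, and the substitution $\vy\mapsto\vy'$ does not change predicates. Hence it is inherited from $\rho$.

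For forward-existentiality, take a head-atom $A(x,y)$ of $\rho$ with $x$ frontier and $y$ existential. In the new rule this becomes $A(x',y)$, and $y$ is still existential. We must check that $x'$ is a frontier variable of the new rule, i.e.\ that it occurs in the body $q(\vx',\vy')$. This holds because rewritings keep the answer variables $\vy$ (up to specialization) free in every disjunct. A specialization may merge frontier variables, but it never merges a frontier variable with an existential variable, since the $\vz$ are fresh. So every head atom still goes from a frontier variable to an existential variable. The rules of $\rsb$ themselves are unchanged.
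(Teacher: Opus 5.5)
Your proof is correct and follows the paper's route: UCQ-rewritability transfers through the homomorphic equivalence of chases from \cref{lem:rew-prop-one}, and predicate-uniqueness and forward-existentiality are read off from the fact that rewritten heads differ only by an identification of frontier variables. Your extra checks fill in details the paper leaves implicit. These are that the equivalence can be taken to fix $\adom{\jnst}$, and that specialized frontier variables still occur in the rewritten body. On the first check, it is your soundness remark that secures it (universality alone does not give the reverse homomorphism); that remark shows $\rsb$ and $\rrew{\rsb}$ are logically equivalent, which would suffice on its own.
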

\begin{proof}
Let $\rsb$ be a rule set that is:
\begin{itemize}
    \item \emph{UCQ-rewritable}:  As chases of $\rsb$ and $\rew(\rsb)$ are homomorphically equivalent (\cref{lem:rew-prop-one}), the rewritings against $\rsb$ can simply serve as rewritings against $\rew(\rsb)$.
    \item \emph{Predicate-unique or forward-existential}: Note that the heads of rules of $\rew(\rsb)$ are unchanged 
    — except potential identification of frontier variables —
    when compared to $\rsb$. Therefore, as $\rsb$ is predicate-unique and forward-existential we have that $\rew(\rsb)$ is as well. \qedhere
\end{itemize}
  
\end{proof}

Moreover, we take the following (Lemma 28, \cite{sebastian-piotr-rpq-paper})
\begin{lemma}[\cite{sebastian-piotr-rpq-paper}]\label{lem:frontier}
For any UCQ-rewritable rule set $\rsb$, $\rew(\rsb)$ is quick.
\end{lemma}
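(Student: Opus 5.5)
The plan is to show directly that any atom of $\chase{\jnst,\rew(\rsb)}$ whose terms all lie in $\adom{\jnst}$ is already produced by a single trigger of some rule of $\rew(\rsb)$ applied to $\jnst$ itself. Throughout, I read the hypothesis of \cref{def:rs-quick} as saying that every term of $\beta$ belongs to $\adom{\jnst}$.

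\textbf{Step 1: locate the generating rule and head atom.} Let $\beta$ be such an atom with $\beta\notin\jnst$. It is the output of some trigger $\pair{\eru',\hompi}$ with $\eru'\in\rew(\rsb)$. I write $\hompi'$ for the extension of $\hompi$ to the existential variables of $\eru'$, and pick a head atom $\alpha$ with $\hompi'(\alpha)=\beta$. Existential variables are sent to fresh chase terms, which never lie in $\adom{\jnst}$. Hence every variable of $\alpha$ is a frontier variable, and $\hompi$ maps these variables into $\adom{\jnst}$.

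By \cref{def:body-rewriting}, $\eru'$ either belongs to $\rsb$ or is obtained from some $\eru\in\rsb$ by replacing its body with a disjunct of a rewriting. In both cases its head is, up to identifying frontier variables, the head $H(\vy,\vz)$ of some $\eru\in\rsb$ with body $B(\vx,\vy)$.

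\textbf{Step 2: pull the body entailment back to $\jnst$.} By \cref{lem:rew-prop-one}, $\chase{\jnst,\rew(\rsb)}$ and $\chase{\jnst,\rsb}$ are homomorphically equivalent. I would use this equivalence in the standard form that fixes $\adom{\jnst}$; this holds because both chases are universal models of $\jnst$ together with $\rsb$, and $\rew(\rsb)$ is equivalent to $\rsb$.

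From the trigger I obtain
\[\pair{\jnst,\rsb}\models \Exists{\vx}\,\Exists{\vy''}\, B(\vx,\vy)\,[\vy'\mapsto\hompi(\vy')],\]
where $\vy'$ are the frontier variables occurring in $\alpha$ and $\vy''$ are the remaining frontier variables. By UCQ-rewritability of $\rsb$ (\cref{prop:ucq-rewritable_is_bdd}), some disjunct of the rewriting of this CQ, with answer variables $\vy'$, maps into $\jnst$ via a homomorphism agreeing with $\hompi$ on $\vy'$.

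\textbf{Step 3: fire one rule on $\jnst$.} If the corresponding rule (that disjunct as body, head $H$) belongs to $\rew(\rsb)$, it is applicable directly to $\jnst$. Its trigger outputs $\hompi(\alpha)=\beta$, since every variable of $\alpha$ lies in $\vy'$. This places $\beta$ in $\step{1}{\jnst,\rew(\rsb)}$.

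\textbf{Main obstacle.} The difficulty lies in Step 3. \cref{def:body-rewriting} rewrites $\Exists{\vx}B(\vx,\vy)$ with the whole frontier $\vy$ as answer variables, whereas here the variables $\vy''$ may be mapped to chase terms and must be existentially quantified. I would resolve this in one of two ways.

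The first option is to observe that for the purpose at hand we may project the head onto the atoms using only $\vy'$. This amounts to running the same construction with the rewriting of $\Exists{\vx,\vy''}B$, which is one of the specializations the cited construction covers. The second option is to argue on a rule-by-rule basis using the forward-existential shape: every head atom of a non-Datalog rule contains an existential variable, so an atom $\beta$ over $\adom{\jnst}$ must come from a Datalog rule. For a Datalog rule, the terms needed for the head are exactly those of $\beta$, so one may restrict to the frontier variables in $\alpha$ and existentially quantify the rest of the body.

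I would first try this reduction to Datalog rules. After that reduction, the only remaining check is that a rewriting with fewer answer variables still appears among the rule bodies. This can be arranged by taking, in \cref{def:body-rewriting}, all disjuncts for all specializations, using \cref{prop:injective-rewritings}.
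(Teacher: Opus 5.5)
Your argument rests on a misreading of \cref{def:rs-quick}, and the ``main obstacle'' you hit is a consequence of that misreading. The hypothesis ``all frontier terms of $\beta$ appear in $\adom{\inst}$'' concerns the frontier of the trigger $\pair{\eru',\hompi}$ that creates $\beta$, compare the paper's definition of the frontier of a chase term. In other words, it says $\hompi(\fr{\eru'})\subseteq\adom{\jnst}$. The atom $\beta$ may contain nulls created by that trigger, and the conclusion is meant up to renaming them. This is exactly how quickness is used in \cref{lem:skeleton}, where it is applied to outputs of non-Datalog rules. Your reading (``every term of $\beta$ is in $\adom{\jnst}$'') discards that case entirely.

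Worse, under your reading the statement is false for the construction of \cref{def:body-rewriting}. Let $\rsb$ consist of $\apred(x)\to\exists w\,\predicate{T}(x,w)$, $\predicate{T}(x,w)\to\bpred(w)$ and $\apred(x)\wedge\bpred(y)\to\predicate{C}(x)\wedge\dpred(y)$. This set is non-recursive, hence \bdd. Let $\jnst=\{\apred(a)\}$. Then $\predicate{C}(a)\in\chase{\jnst,\rew(\rsb)}$ and all its terms lie in $\adom{\jnst}$. However, the only rules of $\rew(\rsb)$ with $\predicate{C}$ in the head are the third rule and its body rewritings, and all of them keep $y$ as a frontier variable. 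Firing any of them on $\jnst$ would send $y$ to $a$ and, by soundness of the rewriting, give $\pair{\jnst,\rsb}\models\bpred(a)$, which is false. So $\predicate{C}(a)\notin\step{1}{\jnst,\rew(\rsb)}$.

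Neither of your repairs closes this gap.
\begin{itemize}
\item \cref{def:body-rewriting} rewrites $\exists\vx\,B(\vx,\vy)$ with the whole frontier as answer variables. Its disjuncts may identify frontier variables but never quantify them away, and \cref{prop:injective-rewritings} likewise only produces identifications. So rewritings of $\exists\vx,\vy''\,B$ are not in $\rew(\rsb)$, and adding them proves quickness of a different rule set.
\item The reduction to Datalog rules relies on forward-existentiality, which is not a hypothesis of the lemma.
\item Even for Datalog rules, the head can contain frontier variables outside $\alpha$. Examples are $\predicate{C}(x)\wedge\dpred(y)$ above and the rules $\eru_{\mathrm{DL}}$ produced by streamlining. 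So the claim that the terms needed for the head are exactly those of $\beta$ fails.
\end{itemize}

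With the correct hypothesis the obstacle disappears, and your Steps~2--3 go through with $\vy''$ empty.
\begin{itemize}
\item You obtain $\pair{\jnst,\rsb}\models\exists\vx\,B(\vx,\vy)$ at the $\hompi$-images of the frontier. If $\eru'$ is itself a rewritten rule with body $q$, this step needs soundness of $q$ as a disjunct of $\rew(\exists\vx\,B(\vx,\vy),\rsb)$ together with \cref{lem:rew-prop-one}; you should state this explicitly.
\item By UCQ-rewritability, a disjunct of exactly the UCQ used in \cref{def:body-rewriting} then maps into $\jnst$ with those images.
\item The corresponding rule of $\rew(\rsb)$ fires on $\jnst$. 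It produces a copy of $\trigoutput{\pair{\eru',\hompi}}$, isomorphic to it via a renaming of nulls that fixes $\adom{\jnst}$.
\end{itemize}
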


\subsubsection{Main Properties of Regality}
\newcommand{\rsbc}{\rsb}
\newcommand{\rsbce}{\rsb^\exists}
\newcommand{\rsbcdl}{\rsb^{DL}}
\newcommand{\chasersbexists}{\step{\exists}{\topinst, \rsb}}
\newcommand{\chasersbexiststopinstance}{\step{\exists}{\rsb}}

In this section we note all relevant properties of regal rule sets that we use in the following section. Let $\rsbc$ be a regal rule set and $\rsbcdl$ and $\rsbce$ be its subsets of Datalog rules an non-Datalog rules, respectively.
%
We start by noticing that, due to the quickness of $\rsb$, $\chase{\rsb}$ can be obtained by applying only Datalog rules on top of $\chase{\rsbce}$.

\begin{lemma}\label{lem:skeleton}
 $\chase{\rsbc}$ and $\chase{\chase{\rsbce}, \rsbcdl}$ are homomorphically equivalent.
\end{lemma}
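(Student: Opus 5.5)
One direction is immediate, so the plan concentrates on the other. Since $\rsbce \cup \rsbcdl = \rsbc$, the instance $\chase{\chase{\rsbce}, \rsbcdl}$ is a model of $\topinst$ and of $\rsbcdl$. By universality of $\chase{\rsbc}$ it suffices to produce homomorphisms in both directions. The map $\chase{\chase{\rsbce}, \rsbcdl} \to \chase{\rsbc}$ is the easy one. Every atom of the left side is produced by a sequence of triggers of rules from $\rsbc$ starting from $\topinst$. Induction on the order in which these triggers are applied yields a homomorphism into $\chase{\rsbc}$ that is the identity on $\top$. The reason is that every trigger over the image is also an $\rsbc$-trigger over $\chase{\rsbc}$, whose output is present there up to renaming of the fresh nulls.

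For the converse, we need to show that $\jnst := \chase{\chase{\rsbce}, \rsbcdl}$ is a model of $\rsbc$. Universality then gives a homomorphism $\chase{\rsbc} \to \jnst$. All Datalog rules are satisfied by construction, so the content lies in checking each non-Datalog rule $\eru \in \rsbce$ under each homomorphism $\hom$ of $\body{\eru}$ into $\jnst$. The key tool is quickness. Forward-existentiality implies that a non-Datalog rule creates atoms $\apred(y,z)$ with $y$ a frontier term and $z$ fresh, so the atoms of $\chase{\rsbce}$ form a tree-like structure in which every null $z$ has a set of parent terms (its frontier). Applying $\rsbcdl$ only adds atoms among existing terms.

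I would argue as follows. Take a trigger $\fulltrig$ with $\eru \in \rsbce$ over $\jnst$, and let $F = \hom(\fr{\eru})$. Let $\mathcal{K}$ be the sub-instance of $\chase{\rsbce}$ consisting of the atoms over terms created no later than the terms of $F$, i.e.\ the finite ``history'' of $F$. Apply quickness to the instance $\mathcal{K}$. The atoms of $\chase{\mathcal{K}, \rsbc}$ whose terms all lie in $\adom{\mathcal{K}}$ then already belong to $\step{1}{\mathcal{K}, \rsbc}$. The aim is to show that the $\rsbcdl$-atoms of $\jnst$ among terms of $F$ are derivable from $\mathcal{K}$ in one step, and that the head of $\eru$ applied at $F$ therefore already holds in $\chase{\rsbce}$. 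The latter follows because the body of $\eru$ restricted to what matters for $F$ entails the corresponding instantiation of $\exists\vz\,H$ in $\chase{\mathcal{K}, \rsbc}$. In one step this instantiation is witnessed by a non-Datalog trigger of $\rsbc$ (possibly with a rewritten body) whose body maps into $\mathcal{K}$ together with one-step Datalog consequences. Iterating this argument, or arguing by minimal counterexample over the creation order of terms, shows that $\chase{\rsbce}$ already contains a witness for the head of $\eru$ at $F$ up to homomorphism. The Datalog closure then gives the required head atoms in $\jnst$.

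I expect the main obstacle to be this interaction step: triggers of $\rsbce$ whose bodies match atoms derived by Datalog rules. We must check that quickness, applied to suitable sub-instances of $\chase{\rsbce}$, lets us replace such a match by a non-Datalog trigger applied directly in $\chase{\rsbce}$. The plan is to pick a trigger over $\jnst$ whose head is not satisfied, minimal with respect to the multiset of creation times of its frontier terms, ordered by $\lexorder$. This ordering is well-founded by \cref{lem:lexorder_wf}. We then derive a contradiction by moving to an instance containing only the ancestors of those terms, where quickness forces the relevant derivations to happen within one chase step.
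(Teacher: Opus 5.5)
Your overall route, showing that $\jnst=\chase{\chase{\rsbce},\rsbcdl}$ is a model of $\rsbc$ and then invoking universality of $\chase{\rsbc}$, is legitimate. It is even slightly cleaner than the paper's step-by-step construction of homomorphisms $\hom_i$ from $\step{i}{\rsbc}$ to $\chase{\step{i}{\rsbce},\rsbcdl}$. The gap is in the one step that carries all the content: you apply quickness to the wrong instance and to the wrong atoms.

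Taking $\mathcal{K}$ to be the history of $F=\hom(\fr{\eru})$ does not work. The non-frontier body variables of $\eru$ may be sent by $\hom$ to terms that are not ancestors of $F$, or that are created later. Moreover, the Datalog derivations of the atoms of $\hom(\body{\eru})$ may use arbitrary further atoms of $\chase{\rsbce}$. So in general $\langle\eru,\hom\rangle$ is not a trigger in $\chase{\mathcal{K},\rsbc}$, and quickness on $\mathcal{K}$ says nothing about the head at $F$. For the same reason your minimal-counterexample plan does not close the argument: it again passes to the ancestors of the frontier terms, and minimality with respect to frontier creation times gives no way to rederive the body there. Likewise, applying quickness to atoms whose terms all lie in $\adom{\mathcal{K}}$ (the Datalog atoms among terms of $F$) is not what is needed. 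Quickness must be applied to the head atoms produced by $\langle\eru,\hom\rangle$. These contain fresh nulls, and only their frontier terms lie in $\adom{\mathcal{K}}$.

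The fix is exactly the paper's key step, and it needs no iteration.

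\begin{enumerate}
\item Every atom of $\jnst$ has a finite Datalog derivation from finitely many atoms of $\chase{\rsbce}=\bigcup_i\step{i}{\rsbce}$. Hence there is an $i$ with $\hom(\body{\eru})\subseteq\chase{\step{i}{\rsbce},\rsbcdl}\subseteq\chase{\step{i}{\rsbce},\rsbc}$.
\item So the output of $\langle\eru,\hom\rangle$ occurs in $\chase{\step{i}{\rsbce},\rsbc}$, with all its frontier terms in $F\subseteq\adom{\step{i}{\rsbce}}$. By quickness it belongs, up to renaming of nulls, to $\step{1}{\step{i}{\rsbce},\rsbc}$.
\item Atoms of step $1$ come from triggers over $\step{i}{\rsbce}$ itself, not over $\step{i}{\rsbce}$ ``together with one-step Datalog consequences'' as you write.
\item Since these atoms contain nulls, the producing rule is non-Datalog, hence in $\rsbce$. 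So it is an $\rsbce$-trigger over $\step{i}{\rsbce}$, and its output lies in $\step{i+1}{\rsbce}\subseteq\jnst$. This witnesses the head of $\eru$ at $F$.
\end{enumerate}

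With this step, your model-plus-universality argument goes through directly for every trigger at once, with no induction or well-founded ordering needed.
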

\begin{proof}
\newcommand{\chasebc}{\chase{\rsbc}}
\newcommand{\chasebce}{\chase{\rsbce}}
    The existence of a homomorphism from $\chase{\chasebce, \rsbcdl}$ to $\chase{\rsbc}$ is clear, as there is a homomorphism from $\chasebce$ to $\chasebc$ (because $\rsbce \subseteq \rsbc$), and $\rsbcdl \subseteq \rsbc$. Conversely, we prove by induction on $i$ the existence of $\hom_i$ (extending $\hom_{i-1}$ if $i>0$) from the terms of $\step{i}{\rsbc}$ to the terms of $\step{i}{\rsbce}$ such that $\hom_i$ is a homomorphism from $\step{i}{\rsbc}$ to $\chase{\step{i}{\rsbce},\rsbcdl}$. For $i=0$, the result holds trivially. If the result holds for $i \geq 0$, let us consider a term $t$ that belongs to $\step{i+1}{\rsbc}$ but not to $\step{i}{\rsbc}$. It must have been created by a non-Datalog rule $\rho$ whose body maps by $\hom$ to $\step{i}{\rsbc}$. Thus $\hom_i\circ\hom$ is a homomorphism from the body of $\rho$ to $\chase{\step{i}{\rsbce},\rsbcdl}$, and all frontier terms of the created atoms belong to $\step{i}{\rsbce}$. By quickness of $\rsbcdl \cup \rsbce$, these atoms can be created by a single application of a rule $\eru$, which must belong to $\rsbce$, as it is non-Datalog. We extend $\hom_i$ by mapping $t$ to the corresponding term created by that application of $\eru$. Regarding Datalog rule applications, let $\fulltrig$ be a trigger applied at step $i+1$. $\hompi(\body{\eru}) \subseteq \step{i}{\rsbc}$, hence $\langle\eru,\hom_i\circ\hompi\rangle$ is a trigger applicable on $\chase{\step{i}{\rsbce},\rsbcdl}$, and its result thus belongs to $\chase{\step{i}{\rsbce},\rsbcdl}$.
\end{proof}

However, $\chase{\rsbce}$ has a much nicer structure than $\chase{\rsbc}$, that we will exploit shortly. To describe this structure, we associate with each term a timestamp.

\begin{definition}
Given a term $t$ of $\chase{\rsbce}$ we define the \emph{timestamp} $\timestamp{t}$ of $t$ to be the smallest integer $i$ such that $t\in\adom{\step{i}{\rsbce}}$. Given a set of terms $T$ of $\chase{\rsbce}$, we define $\timestamps{T}$ as the multiset $\multiset{\timestamp{t}\mid t\in T}$.
\end{definition}


Given that the signature of $\rsbc$ is binary, we can see instances (and queries) as directed graphs.


\begin{observation}\label{obs:chasece-is-dag}
    $\chase{\rsbce}$ is a directed acyclic graph.
\end{observation}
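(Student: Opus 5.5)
We prove the observation with the timestamp function $\timestamp{\cdot}$ as a potential: every edge of $\chase{\rsbce}$ will go from a term with smaller timestamp to one with strictly larger timestamp. Viewing $\chase{\rsbce}$ as a directed graph means that vertices are terms and there is an edge $\spair{u,v}$ whenever $\apred(u,v)$ is an atom for some binary predicate $\apred$; unary and nullary atoms contribute no edges. It therefore suffices to show the following claim: for every binary atom $\apred(u,v)\in\chase{\rsbce}$ we have $\timestamp{u}<\timestamp{v}$. Given the claim, any cycle $v_0,\dots,v_n=v_0$ with $n\geq 1$ would yield $\timestamp{v_0}<\timestamp{v_1}<\dots<\timestamp{v_n}=\timestamp{v_0}$, a contradiction. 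In particular there are no loops.

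To prove the claim, we induct on the chase step $i$ in which the atom first appears in $\step{i}{\rsbce}$. For $i=0$ the instance is $\topinst$, which contains only the nullary fact $\top$, so there is nothing to check. For $i>0$, the atom belongs to the output of some trigger $\fulltrig$ with $\eru\in\rsbce$, and $\hom$ maps $\body{\eru}$ into $\step{i-1}{\rsbce}$. Since every rule of $\rsbce$ is non-Datalog and $\rsbc$ is forward-existential (\cref{def:rs-forward-existential}), every binary head-atom of $\eru$ has the form $\apred(x,y)$ with $x$ a frontier variable and $y$ existential. Hence the atom is $\apred(\hom(x),\hom'(y))$. Here $\hom(x)$ is a term of $\step{i-1}{\rsbce}$, because $x$ occurs in the body, so $\timestamp{\hom(x)}\leq i-1$. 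Meanwhile $\hom'(y)$ is a fresh variable introduced at step $i$, so $\timestamp{\hom'(y)}=i$. This gives the strict inequality.

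One subtlety must be addressed: the fresh term is genuinely new at step $i$, so its timestamp is exactly $i$. This holds because the chase definition maps existential variables to variables that do not occur anywhere else. We also need that $\hom(x)$ already exists before step $i$, which follows from the definition of trigger. Predicate-uniqueness is not needed here; forward-existentiality is the only property used. I expect no real obstacle beyond making precise that \emph{only} non-Datalog rules are present in $\rsbce$. This is exactly why acyclicity is claimed for $\chase{\rsbce}$ rather than for $\chase{\rsbc}$, since Datalog rules could add backward edges or loops.
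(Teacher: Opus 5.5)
Your proposal is correct and is essentially the paper's proof. Forward-existentiality, together with the fact that $\topinst$ contains no terms, makes every binary atom $A(u,v)$ of the chase of the non-Datalog rules satisfy $\timestamp{u}<\timestamp{v}$, so a cycle through $t$ would force $\timestamp{t}<\timestamp{t}$. Your check that the frontier image has timestamp at most $i-1$ while the fresh term has timestamp exactly $i$ spells out a step the paper leaves implicit.
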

\begin{proof}
    As $\rsbc$ is forward-existential (and the database does not contain any terms), for all atoms $\apred(t,u)$ in $\chase{\rsbce}$, we have $\timestamp{t}<\timestamp{u}$. Thus, if some term $t$ is part of a cycle, we get that $\timestamp{t}<\timestamp{t}$, which is a contradiction.
\end{proof}

\subsubsection*{Falsifying the Consequence: (\ref{tag:thm:main-regal})}

From \cref{lem:rew-prop-one} we have that the chase of $\topinst$ and $\rs$ and the chase of $\topinst$ and $\rs'$ are homomorphically equivalent. Therefore, if the consequence (\ref{tag:lem:main-b}) does not hold then (\ref{tag:lem:main-cde}) does not hold as well.

\subsubsection*{Proving the Assumption of \cref{thm:main-regal}}

Knowing that $\rew$ preserves UCQ-rewritable, predicate-unique, and forward-existential properties of rule sets (\cref{lem:rew-prop-two}) and that $\rs'$ is quick (\cref{lem:frontier}), we know that $\rs'$ is regal.

\section{Proving \cref{thm:main-regal}{Theorem~\ref{thm:main-regal}}}
\label{sec:proof-core}





Let us fix a regal rule set $\rsc$. Let $\rsce$ define its non-Datalog part and $\rscdl$ denote the set of Datalog rules of $\rsc$. 
We shall prove \cref{thm:main-regal} by contradiction. Assume $\arbtour{\epred}{\epred}{\epred}$ holds in $\chasec$ but $\loopq$ does not. By \cref{lem:skeleton}, $\arbtour{\epred}{\epred}{\epred}$ also holds in $\chase{\chasece,\rscdl}$, but $\loopq$ does not. Let $K_n$, for $n \in \nats$, be an $\epred$-tournament of size $n$ in $\chase{\chasece,\rscdl}$.

The idea behind the proof is as follows. Due to the regality of $\rsc$, for each $\epred(s,t)$ of $K_n$, the injective rewriting $\qc$ of $\epred$ against $\rsc$ holds for $s$ and $t$ in $\chasece$. 
In \cref{sec:proof-core:n-valleys-to-1} we prove that arbitrary large tournaments can be witnessed by a single disjunct of $\qc$ of a specific shape (called a \emph{valley query}). Finally, in \cref{sec:single-valley}, we prove that a single valley query can only generate cliques of size $3$ without creating loops, which will conclude the proof.

\subsection{Valley Queries}\label{sec:proof-core:n-valleys-to-1}


 Let $\qc$ be the injective rewriting of $\epred(x,y)$ against $\rsc$, which exists due to \cref{prop:injective-rewritings} and because $\rsc$ is UCQ-rewritable.

\begin{definition}\label{def:witnesses}
    Given $s$ and $t$ such that $\epred(s,t)\in \chase{\chasece,\rscdl}$, we define the set of \emph{witnesses} of $\epred(s,t)$ in $\chasece$ as $\wn{s,t}=\set{\cq\in \qc\mid \chasece\modelsinj \cq(s,t)}$.
\end{definition}

\begin{observation}\label{obs:wn_nonempty}
    For all $s$ and $t$ such that $\epred(s,t)\in \chase{\chasece,\rscdl}$, the set $\wn{s,t}$ is not empty.
\end{observation}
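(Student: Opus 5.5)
The plan is to chain three facts: the fact lies in the full chase, the full chase and the two-stage chase of \cref{lem:skeleton} have the same CQ consequences, and the injective rewriting $\qc$ characterises entailment of $\epred$. Fix $s,t$ with $\epred(s,t)\in\chase{\chasece,\rscdl}$.

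First, I would note that $\chase{\chasece,\rscdl}\subseteq\chase{\topinst,\rsc}$ up to renaming. Every trigger of $\rsce$ or $\rscdl$ is also a trigger of $\rsc$, so I can build an injective homomorphism by induction on chase steps, identifying terms created by the same trigger. Alternatively, I can invoke \cref{lem:skeleton} directly.

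The cleaner route views $\chasece$ as an instance and uses the chase over it. We have $\chase{\chasece,\rscdl}\subseteq\chase{\chasece,\rsc}$, since the Datalog triggers are available there too and create no fresh terms. Hence $\pair{\chasece,\rsc}\models\epred(s,t)$, with $s,t\in\adom{\chasece}$ because Datalog rules introduce no new terms.

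Next, I apply the rephrased \cref{def:ucq-rewritability} to the instance $\chasece$ and the tuple $(s,t)$. Since $\rsc$ is UCQ-rewritable and $\qc$ is the injective rewriting of $\epred(x,y)$ obtained via \cref{prop:injective-rewritings}, the equivalence $\pair{\chasece,\rsc}\models\epred(s,t)\iff\chasece\modelsinj\qc(s,t)$ gives some disjunct $\cq\in\qc$ with $\chasece\modelsinj\cq(s,t)$. By \cref{def:witnesses}, $\cq\in\wn{s,t}$, so $\wn{s,t}\neq\emptyset$.

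The only delicate point is the case $s=t$, where compatibility of the tuple $(s,t)$ with $(x,y)$ matters. The definition allows any tuple of elements, and \cref{prop:injective-rewritings} handles specialisations of the answer variables. In any case, $s=t$ would give $\loopq$, which is excluded by the standing assumption. So the main obstacle is only bookkeeping: checking that the rewriting is applied to $\chasece$ as a database, with $s,t$ in its active domain.
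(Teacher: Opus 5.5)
Your proposal is correct and is essentially the paper's own argument: from $\rscdl\subseteq\rsc$ you get $\pair{\chasece,\rsc}\models\epred(s,t)$, and applying the injective rewriting $\qc$ (via \cref{prop:injective-rewritings}) to the instance $\chasece$ then gives a disjunct that is injectively entailed at $(s,t)$. Your extra checks that $s,t\in\adom{\chasece}$ and on the case $s=t$ are bookkeeping the paper leaves implicit, and the detour through $\chase{\topinst,\rsc}$ in your first paragraph is not needed.
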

\begin{proof} 
Follows from \cref{prop:injective-rewritings}, and the fact that $\qc$ is an injective rewriting of $\epred(x,y)$ against $\rsc$, which contains $\rscdl$.
\end{proof}

The acyclicity of $\chasece$ allows us to find queries of a special shape within $\wn{s,t}$.

\newcommand{\sordc}{<_{\crown}}
\newcommand{\ordc}{\leq_{\crown}}

\begin{definition}
    Given an instance $\inst$ (or a CQ) that is a directed acyclic graph, we define a strict partial order $<_{\inst}$ over terms of $\inst$ as follows: given two terms $s, t$ of $\inst$ we have $s <_{\inst} t$ \iffi there exists a directed path from $s$ to $t$. We define $\leq_{\inst}$ as the reflexive closure of $<_{\inst}$. We use $\sordc$ and $\ordc$ as shorthands for $<_{\chasece}$ and $\leq_{\chasece}$ respectively.
\end{definition}

\begin{definition}
    A \emph{valley query} $\cq(x,y)$ is a binary conjunctive query that is a DAG and whose only $<_{\cq}$-maximal variables are $x$ and $y$.
\end{definition}


\begin{lemma}[Peak Removing Argument]
\label{lem:peak}
    For all $s$ and $t$ such that $\epred(s,t) \in \chase{\chasece,\rscdl}$, the set $\wn{s,t}$ contains a valley query.
\end{lemma}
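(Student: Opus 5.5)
The plan is to choose, among all witnesses in $\wn{s,t}$, one whose image in $\chasece$ is minimal with respect to the timestamps of its terms. I will then show that a witness with a ``peak'' (a $<_{\cq}$-maximal variable other than $x,y$) can always be replaced by a witness with a strictly smaller timestamp multiset. By \cref{obs:wn_nonempty} the set $\wn{s,t}$ is non-empty. First note that every $\cq\in\wn{s,t}$ is a DAG: it maps injectively into $\chasece$, which is acyclic by \cref{obs:chasece-is-dag}, so any cycle of $\cq$ would give a cycle in $\chasece$. For $\cq\in\wn{s,t}$ with injective witnessing homomorphism $h$, consider the multiset $\timestamps{\adom{h(\cq)}}$. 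The size of this multiset is bounded by the maximal number of variables of a disjunct of $\qc$, and $\qc$ is a finite UCQ. Hence, by \cref{lem:lexorder_wf}, we can fix a pair $(\cq,h)$ for which this multiset is $\lexorder$-minimal. The claim is that such a $\cq$ is a valley query.

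Suppose, towards a contradiction, that $\cq$ has a $<_{\cq}$-maximal variable $z\notin\{x,y\}$, and let $u=h(z)$. By injectivity, $u\neq s,t$. Since $\topinst$ has no terms, $u$ is a chase term, created by some trigger $\langle\eru,\pi\rangle$ with $\eru\in\rsce$. Because $z$ is maximal and $\cq$ has no loops (it is a DAG), every atom of $\cq$ containing $z$ has the form $\apred(v,z)$. Consequently, every atom of $h(\cq)$ containing $u$ has the form $\apred(w,u)$. Since $\rsce$ is forward-existential, every atom of $\chasece$ whose second argument is $u$ belongs to the output of the trigger that created $u$; moreover, $w$ lies in $\pi(\fr{\eru})$, whose terms have timestamps strictly smaller than $\timestamp{u}$.

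Now define the instance
\[
J'=\bigl(h(\cq)\setminus\{\text{atoms containing } u\}\bigr)\cup\pi(\body{\eru}),
\]
which is a finite subset of $\chasece$. I will show that $\chase{J',\rsc}\models\epred(s,t)$. Applying $\eru$ to $J'$ via $\pi$ creates a fresh copy $u'$ of $u$ together with all head atoms at $u$. Mapping $u\mapsto u'$ and fixing every other term therefore gives a homomorphism from $h(\cq)$ into $\chase{J',\rsc}$, which fixes $s$ and $t$. Since $\cq(s,t)$ holds in $h(\cq)$ and $\cq$ is a disjunct of the rewriting of $\epred(x,y)$, we get $\pair{h(\cq),\rsc}\models\epred(s,t)$. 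By universality of the chase, $\pair{J',\rsc}\models\epred(s,t)$ as well.

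Because $\qc$ is an injective rewriting, there is a disjunct $\cq'\in\qc$ and an injective homomorphism $h'$ with $J'\modelsinj\cq'(s,t)$. Since $J'\subseteq\chasece$, this gives $\cq'\in\wn{s,t}$. The image of $h'$ lies in $\adom{J'}$, which is obtained from $\adom{h(\cq)}$ by deleting $u$ and adding terms of timestamp strictly less than $\timestamp{u}$. Hence $\timestamps{\adom{h'(\cq')}}\slexorder\timestamps{\adom{h(\cq)}}$: a sub-multiset can only decrease, and replacing one element by smaller ones decreases the multiset in $\slexorder$. This contradicts the minimality of $(\cq,h)$.

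The main obstacle is this last comparison. It requires a small lemma on $\slexorder$: if $N$ is obtained from a sub-multiset of $M$ by removing one occurrence of $m$ and adding arbitrarily many elements smaller than $m$, then $N\slexorder M$. I would prove it by induction on the size of $M$, peeling off common maxima. A second point to check carefully is that all atoms incident to $u$ in $h(\cq)$ are indeed reproduced by the single trigger $\langle\eru,\pi\rangle$. This relies on forward-existentiality, together with the fact that no Datalog atoms appear in $\chasece$.
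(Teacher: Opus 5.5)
Your proposal is correct and follows the paper's proof essentially step for step. You pick a witness $(\cq,h)$ with $\lexorder$-minimal timestamp multiset, remove a peak $z$ by replacing the atoms at $h(z)$ with $\pi(\body{\eru})$ for the trigger that created $h(z)$, re-derive $\epred(s,t)$ from this smaller instance, and use the injective rewriting to obtain a witness with strictly $\slexorder$-smaller timestamps; your extra care (checking that witnesses are DAGs, routing the entailment through $h(\cq)$ and chase universality, and isolating the multiset-replacement lemma) only makes explicit what the paper leaves implicit.
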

\begin{proof}
    Assume towards contradiction that this is not the case. Consider a non-valley query $\cq\in \wn{s,t}$ and an injective homomorphism $\hom$ from $\cq$ to $\chasece$ such that $\hom(x)=s$, $\hom(y)=t$, and $\timestamps{\hom(\cq)}$ is $\lexorder$-minimal among all such queries and homomorphisms. Such a pair exists by \cref{obs:wn_nonempty} and the fact that $\lexorder$ is well-founded over multisets of size at most $\max\set{|\cq'|\mid \cq'\in\qc}$, by \cref{lem:lexorder_wf}.
    
    Then, consider a variable $z\in\exvars{\cq}$ that is $\leq_{\cq}$-maximal. Such a variable exists as $\cq$ is not a valley query. Since $\hom$ is injective, $\hom(z)$ does not have any outgoing edge in $\hom(\cq)$. Denote the set of atoms containing the variable $z$ in $\cq$ with $Z$, and let $\fulltrigpi$ be the trigger that created $\hom(z)$. Since $\rsc$ is forward-existential and $z$ is $\leq_{\cq}$-maximal, then $\hom(Z)\subseteq\pi(\head{\eru})$. We thus define $\instance=\hom(\cq)\setminus \hom(Z)\cup \hompi(\body{\eru})$. Note that by this definition, $\inst\subseteq\chasece$.
    
    The trigger $\fulltrigpi$ is applicable on $\instance$ and yields $\pi(\head{\eru})$ which contains $\hom(Z)$, up to the renaming of $\hom(z)$. Thus, since $z$ is existentially quantified in $\cq$, we have $\chase{\instance,\rsc}\models \cq(s,t)$. Thus, $\chase{\instance,\rsc}\models \epred(s,t)$, which by the fact that $\qc(x,y)$ is an injective rewriting of $\epred(x,y)$ entails that $\instance\modelsinj \cq'(s,t)$ for some $\cq'\in\qc$. Consider the injective homomorphism $h'$ that witnesses this fact.

    First note that as $h'(\cq')\subseteq\instance$, we have $\timestamps{h'(\cq')}\lexorder\timestamps{\inst}$. Then, since $\hom(z)$ is created by the trigger $\fulltrigpi$, by definition of the chase, we have $\timestamp{x}<\timestamp{\hom(z)}$ for all $x\in\hompi(\body{\eru})$. Thus, $\timestamps{I}\slexorder\timestamps{\hom(\cq)}$, which entails that $\timestamps{h'(\cq')}\slexorder\timestamps{\hom(\cq)}$, contradicting the minimality of $\pair{\cq,h}$. 
\end{proof}

\begin{proposition}
    If there exists $\epred$-tournaments of arbitrary size in $\chasec$, there exists tournaments of arbitrary size defined by a single valley query over $\chasece$.
\end{proposition}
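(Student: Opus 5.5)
The plan is a Ramsey argument over the finitely many disjuncts of $\qc$. First I fix what "defined by a single valley query" should mean: for every $n$ there is a valley query $\cq\in\qc$ and a set $T$ of $n$ terms of $\chasece$ such that for all distinct $s,t\in T$ we have $\chasece\modelsinj\cq(s,t)$ or $\chasece\modelsinj\cq(t,s)$. Since $\qc$ is finite, it suffices to show that for every $n$ some such $\cq$ exists. If this holds for every $n$, one fixed valley query then works for infinitely many $n$, hence for all $n$ by taking subsets.

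Let $\cq_1,\dots,\cq_m$ enumerate the valley queries in $\qc$. By the assumption and \cref{lem:skeleton}, as set up at the start of this section, for every $N$ there is an $\epred$-tournament $K_N$ in $\chase{\chasece,\rscdl}$. I would like its vertices to be terms of $\chasece$. \Cref{lem:skeleton} only gives homomorphic equivalence, but $\chase{\chasece,\rscdl}$ is obtained from $\chasece$ by Datalog rules, which introduce no new terms. So in the working structure $\chase{\chasece,\rscdl}$ the vertices of $K_N$ are already terms of $\chasece$.

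Next I colour the edges of $K_N$. For each pair $\{s,t\}$ of distinct vertices, pick an orientation with $\epred(s,t)\in\chase{\chasece,\rscdl}$. By \cref{lem:peak}, $\wn{s,t}$ contains a valley query; choose one, say $\cq_i$, and colour the pair with $i$. The colouring needs exactly one colour per pair, so I colour the pair rather than each present directed edge. Formally, I apply \cref{cor:Ramsey} to the tournament obtained by keeping one chosen orientation per pair, which is still a tournament.

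Now take $N\geq R(n,\dots,n)$ with $m$ arguments. \Cref{cor:Ramsey} yields a sub-tournament of size $n$ all of whose chosen edges have the same colour $i$. For every pair $s,t$ in it, with the chosen orientation, $\chasece\modelsinj\cq_i(s,t)$. So this set of $n$ terms is a tournament defined by $\cq_i$ over $\chasece$, witnessed injectively. Pigeonholing over $i$ as $n$ grows finishes the proof.

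The main obstacle is bookkeeping rather than depth. One point is making sure the tournament vertices genuinely live in $\chasece$, which the Datalog observation above settles. The other is interpreting "tournament defined by $\cq$" as a symmetric-or condition that matches the inclusive notion of tournament used in the paper. Termination of the choice of valley query needs nothing extra, since \cref{lem:peak} already supplies one for every edge.
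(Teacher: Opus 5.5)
Your proposal is correct and follows essentially the same route as the paper. Both move the tournaments into $\chase{\chasece,\rscdl}$ via \cref{lem:skeleton}, colour each edge by a valley query supplied by \cref{lem:peak}, and apply Ramsey's theorem with one colour per query. Your extra bookkeeping is sound: fixing one orientation per pair, noting that Datalog rules add no terms, and pigeonholing to get a single query for all $n$. The one implicit step, which the paper also leaves unstated, is that the homomorphism from \cref{lem:skeleton} keeps tournament vertices distinct only because of the standing assumption that $\loopq$ fails.
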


\begin{proof}
    First, as noted earlier, by \cref{lem:skeleton}, if there are $\epred$-tournaments of any size in $\chasec$, there are $\epred$-tournaments of arbitrary size in $\chase{\chasece,\rscdl}$. We then make use of Ramsey's theorem. 

For a given $\epred$-tournament in $\chase{\chasece,\rscdl}$, we color each edge $E(s,t)$ by an arbitrary valley query in $\wn{s,t}$. This possible due to \cref{lem:peak}. Ramsey theorem ensures that for any $n$, if the original tournament is larger than $R(n,\ldots,n)$ (with as many arguments as elements of $\qc$), there is a tournament of size $n$ defined by a single valley query. 
\end{proof}

\subsection{Tournaments Defined by a Single Valley Query}\label{sec:single-valley}

We now analyze the cliques that can be defined by a single valley query. Our main tool here is that the image of a valley query in $\chasece$ is entirely determined by the image of its answer variables. This strong property holds for valley queries because $\rsc$ is both predicate-unique and forward-existential. 


\begin{lemma}\label{lem:cq-functional}
    Let $\cq(x, \vy)$ be a CQ such that $y <_{\cq} x$ for every $y \in \vy$ then the set \[\set{\pair{s, \vt} \in \adom{\chasece} \times \adom{\chasece}^{|\vy|} \quad\mid\quad \chasece \models \cq(s, \vt)}\] is a function.
\end{lemma}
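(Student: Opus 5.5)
The claim is that $s$ determines $\vt$: for every $s$ there is at most one $\vt$ with $\chasece \models \cq(s,\vt)$. The key fact I would establish first concerns the non-Datalog chase alone. In $\chasece$, every term $u$ and every predicate $\apred \in \signature$ admit at most one $v$ with $\apred(v,u) \in \chasece$. In other words, each term has at most one $\apred$-predecessor.

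This follows from the two syntactic properties of a regal rule set.
\begin{itemize}
\item Since $\topinst$ contains no terms, every atom $\apred(v,u)$ of $\chasece$ is produced by some trigger $\fulltrigpi$ with $\eru \in \rsce$.
\item By forward-existentiality, the head atom producing it has the form $\apred(x,z)$ with $x$ a frontier variable and $z$ existential. Hence $u$ is the fresh term created for $z$ by that trigger, and $u$ is created by exactly one trigger.
\item By predicate-uniqueness, the head of $\eru$ contains exactly one $\apred$-atom. So $v=\pi(x)$ is uniquely determined by $u$ and $\apred$.
\end{itemize}
Unary atoms play no role here. Nullary atoms, and binary atoms over the same variable, do not occur in forward-existential heads.

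Next I would prove uniqueness by induction along $<_{\cq}$, downward from $x$. Suppose $h_1,h_2$ are homomorphisms from $\cq$ to $\chasece$ with $h_1(x)=h_2(x)=s$. I show $h_1(y)=h_2(y)$ for all $y \in \vy$ by induction on the length of a shortest directed path from $y$ to $x$ in $\cq$; such a path exists by the hypothesis $y <_{\cq} x$.
\begin{itemize}
\item If the path starts with the atom $\apred(y,y')$, then $y'$ is either $x$ or closer to $x$. By induction, $h_1(y')=h_2(y')$.
\item Both $h_1(y)$ and $h_2(y)$ are then $\apred$-predecessors of this common term in $\chasece$. By the key fact, they coincide.
\end{itemize}
This makes the relation a partial function from $s$ to $\vt$, which is what "is a function" means here. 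One point needs care: the hypothesis must mean that every variable of $\cq$ other than $x$ lies strictly below $x$, so that every variable is covered. If $\cq$ has no further existential variables, the argument applies verbatim. If existential variables are allowed, the same induction still pins down those lying below $x$, and only the answer tuple matters anyway.

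I expect the main obstacle to be the key fact, specifically making sure no $\apred$-atom into a term can arise other than from the unique trigger creating that term. This is exactly where working in $\chasece$ rather than $\chasec$ matters: Datalog rules could add extra predecessors. The instance being $\topinst$ also matters, since it guarantees there are no initial terms with arbitrary incoming edges. With these checked, the induction itself is routine.
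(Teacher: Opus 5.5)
Your proposal is correct and follows essentially the paper's argument: each term of $\chasece$ has at most one $\apred$-predecessor (forward-existentiality, predicate-uniqueness, and every term being created by a single trigger from $\topinst$), and this is propagated by induction along a directed path from each $y$ to $x$. The only difference is organizational. The paper first proves the case of a single path query and then projects $\cq$ onto one path per $y \in \vy$, whereas you induct directly over all variables below $x$. Incidentally, this makes your worry about needing every variable of $\cq$ to lie below $x$ unnecessary, since the intermediate variables on a path to $x$ automatically do.
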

\begin{proof}
    We first show this result in the case where $|\vy|=1$ (and $y$ is the only variable in $\vy$), and $\cq(x,y)$ is a path from $y$ to $x$. We prove by induction on the length of the path from $y$ to $x$ in $\cq(x,y)$ that for all $s\in\adom{\chasece}$, there is at most one $t\in\adom{\chasece}$ such that $\chasece\models\cq(s,t)$. The case where this path has length $0$ being trivial, we assume it has at least size one. Thus, $\cq(x,y)=\cq'(x,z)\wedge\apred(y,z)$, where $\apred$ is a predicate in $\signature$, and $\cq'(x,z)$ is a path query of length $|\cq(x,y)|-1$. Thus, by induction hypothesis, for all $s\in\adom{\chasece}$, there is at most one $u\in\adom{\chasece}$ such that $\chasece\models\cq'(s,u)$.

    Then, assume that there are some terms $s$, $t$ and $t'$ such that $\chasece\models\cq(s,t)\wedge\cq(s,t')$. Then, there is a unique $u$ such that $\chasece\models\cq'(s,u)$, so the atoms $\apred(t,u)$ and $\apred(t',u)$ are both present in $\chasece$. Let $\fulltrig$ be the trigger that introduces $u$ in $\chasece$. Since $\rsce$ is forward-existential, $t$ and $t'$ are both images of frontier terms of $\eru$. However, since $\rsce$ is frontier-functional, there is at most one $\apred$-atom in $\head{\eru}$. Thus, it must be that $t=t'$, which concludes the induction.

    The full result then follows from this case. Indeed, for all $y\in\vy$, consider a path from $y$ to $x$ in $\cq(x,\vy)$. This path defines a path query $\cq'_y(x,y)$, so for all $s\in\adom{\chasece}$, there is at most one $t\in\adom{\chasece}$ such that $\chasece\models\cq'_y(s,t)$. We then get the full result from the fact that for all $s$ and $\vt$, $\chasece\models \cq(s,\vt)$ implies $\chasece\models\cq'_y(s,t)$.
\end{proof}

\begin{proposition}
    If a valley query defines an  $\epred$-tournament of size $4$, it also defines an $\epred$-loop.
\end{proposition}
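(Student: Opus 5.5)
The plan is to show that the image of a valley query $\cq(x,y)$ in $\chasece$ splits into an ``$x$-part'' fixed by the image of $x$ and a ``$y$-part'' fixed by the image of $y$, which must agree on their overlap. A transitive triple inside the tournament then lets us glue the two parts at a single vertex, which yields a homomorphism witnessing $\cq(b,b)$.

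\emph{Step 1: splitting the query.} Let $X=\set{v\mid v\leq_{\cq} x}$, $Y=\set{v\mid v\leq_{\cq} y}$ and $S=X\cap Y$. Since $\cq$ is a valley query, every variable lies in $X\cup Y$. For an atom $\apred(u,v)$ of $\cq$ we have $u<_{\cq}v$, so $v\in X$ implies $u\in X$, and likewise for $Y$. Hence every atom lies entirely in $\cq|_X$ or entirely in $\cq|_Y$, so $\cq=\cq|_X\wedge\cq|_Y$, and the two parts share only the variables of $S$. Applying \cref{lem:cq-functional} to $\cq|_X$ (with $x$ as the distinguished variable) gives a partial function $\alpha$: for a term $s$, $\alpha(s)$ is the unique image of $X$ under a homomorphism of $\cq|_X$ sending $x$ to $s$, when one exists. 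Similarly we get $\beta$ for $\cq|_Y$ and $y$. Write $\alpha_S(s)$ and $\beta_S(t)$ for the restrictions to $S$. The splitting gives the key equivalence
\[\chasece\models\cq(s,t)\iff \alpha(s),\beta(t)\text{ are defined and }\alpha_S(s)=\beta_S(t).\]
The forward direction uses uniqueness from \cref{lem:cq-functional}. The backward direction glues the two homomorphisms, which agree on $S$. Injectivity plays no role here, since we only need ordinary entailment.

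\emph{Step 2: a transitive triple.} Every tournament on $4$ vertices (even with the inclusive ``or'') contains a transitive triple $a\to b$, $a\to c$, $b\to c$. To see this, pick a vertex of out-degree at least $2$, which exists because the vertex out-degrees in any orientation of the $6$ pairs sum to at least $6$. Let $a$ be such a vertex with out-neighbours $b,c$, and orient the pair $\{b,c\}$ by whichever edge is present. Here each edge $s\to t$ of the tournament means $\chasece\models\cq(s,t)$.

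\emph{Step 3: the loop.} From $a\to b$ and $a\to c$ we get $\beta_S(b)=\alpha_S(a)=\beta_S(c)$. From $b\to c$ we get that $\alpha(b)$ is defined and $\alpha_S(b)=\beta_S(c)$. Hence $\alpha_S(b)=\beta_S(b)$, with both $\alpha(b)$ and $\beta(b)$ defined. By the backward direction of the equivalence, $\chasece\models\cq(b,b)$. Because $\cq\in\qc$ is a disjunct of a rewriting of $\epred(x,y)$, this gives $\pair{\chasece,\rsc}\models\epred(b,b)$, so $\epred(b,b)\in\chase{\chasece,\rscdl}$ by \cref{lem:skeleton}. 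Thus the valley query defines an $\epred$-loop, as claimed.

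The step needing the most care is Step 1. One must check that no atom straddles $X\setminus Y$ and $Y\setminus X$ (this follows from the orientation of atoms together with the valley shape), and that \cref{lem:cq-functional} really applies to $\cq|_X$ with all of $X\setminus\{x\}$ strictly below $x$. When $S=\emptyset$ the agreement condition is vacuous, and the argument still goes through.
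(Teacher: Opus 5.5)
Your proof is correct and is essentially the paper's argument: your partial functions $\alpha,\beta$ from \cref{lem:cq-functional} play the role of the paper's $f_x,f_y$, and gluing at the middle vertex $b$ of a transitive triple is exactly the paper's derivation of $q(k_2,k_2)$. The only difference is organizational. Splitting $\cq$ into the down-sets of $x$ and $y$ handles the paper's three cases (disconnected, one maximal answer variable, two maximal answer variables) in one argument: the disconnected case becomes $S=\emptyset$, and the single-maximal case is covered automatically, since there the premises are in fact contradictory. You also supply the short out-degree argument for the transitive triple, which the paper merely asserts.
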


\begin{proof}
Let $q$ be a valley query.
\begin{itemize}
    \item If $q(x,y)$ is disconnected, then $q(x,y)=q_1(x)\wedge q_2(y)\wedge q_3$, with $q_i$ and $q_j$ having disjoint sets of variables if $i \not = j$. As there is an edge between $k_1$ and $k_2$, and they are symmetric, we can assume without loss of generality that $\chasece\models q_1(k_1)\wedge q_2(k_2)$. There is also an edge between $k_3$ and $k_4$, so we assume (again without loss of generality) that $\chasece\models q_1(k_3)\wedge q_2(k_4)$. Finally, there is an edge between $k_1$ and $k_3$, so either $\chasece\models q_2(k_1)$, or $\chasece\models q_2(k_3)$. In both cases, $\chasece\models\exists u~ q_1(u)\wedge q_2(u)\wedge q_3$, so $\chasece\models\exists u~ \epred(u,u)$, hence an $\epred$-loop is defined. We thus assume in all other cases that $q$ is connected.
    \item If $q(x,y)$ contains a single $\leq_q$-maximal vertex, say $x$ for instance (as the case where $y$ is $\leq_q$-maximal is symmetric). Thus, $y\sordc x$, so by \cref{lem:cq-functional}, the set $\set{\pair{s,t}\in \adom{\chasece}^2\mid \chasece\models q(s,t)}$ is a function. Thus, every vertex in $K$ has out-degree at most one, which contradicts the fact that $K$ is a tournament of size $4$.
    \item If both $x$ and $y$ are $\leq_q$-maximal in $q(x,y)$, then let $\vvv$ be the tuple of existential variables of $q$ that are smaller than both $x$ and $y$ (for $\leq_q$). Then, $q(x,y)=\exists \vvv~ q_x(x,\vvv)\wedge q_y(\vvv,y)$, where $q_x$ (resp. $q_y$) contains all the variables smaller than $x$ (resp. $y$). Then, by \cref{lem:cq-functional}, the sets $\set{\pair{x,\vvv}\mid\chasece\models q_x(x,\vvv)}$ and $\set{\pair{y,\vvv}\mid\chasece\models q_y(\vvv,y)}$ are functions, that we denote with $f_x$ and $f_y$, respectively. Since $K$ is a tournament over $4$ elements, it contains, up to renaming of elements, the following set: $\set{\epred(k_1,k_2),\epred(k_1,k_3),\epred(k_2,k_3)}$.

    We then show that $\chasec\models \epred(k_2,k_2)$. First, observe that if $q(s,t)\in\chasece$, then $f_x(s)=f_y(t)$. Thus, we get the three following equalities: $f_x(k_1)=f_y(k_2)$, $f_x(k_1)=f_y(k_3)$ and $f_x(k_2)=f_y(k_3)$. Thus, by composing these equalities, we get that $f_x(k_2)=f_y(k_2)$, so $\chasece\models q_x(k_2,f_x(k_2))\wedge q_y(f_x(k_2),k_2)$. Thus, $\chasece\models q(k_2,k_2)$, and $\chasec\models\epred(k_2,k_2)$, which concludes the proof.\qedhere

\end{itemize}
\end{proof}

\section{Discussion}

\subsection*{Tournament Definition}
While the main result focuses on tournaments over a fixed relation $\epred$, our results trivially extend to any relation definable by a binary UCQ. Consider an arbitrary binary UCQ $Q(x, y) = \bigvee_{i=1}^{k} q_i(x, y)$, where each $q_i$ is a CQ. One can simply add, for each $i$, the following rule to the considered rule set:  
\[ q_i(x, y) \to \epred(x, y). \]  

This addition does not affect the UCQ-rewritability of the rule set when $\epred$ is a fresh predicate symbol.

\subsection*{Arbitrary Colorability}
A natural next step in this line of research is to establish that with UCQ-rewritable rule sets, one cannot define structures of arbitrarily high chromatic number without entailing the $\loopq$ query:  
\begin{conjecture}\label{conj:colorability}
    For every signature $\signature$, every UCQ-rewritable rule set $\rs$ over $\signature$, and every instance $\instance$, we have:  
    \[\small
    \chase{\inst, \rs}|_{\epred} \text{ cannot be colored with a finite number of colors } \;\Rightarrow\; \chase{\inst, \rs} \models \loopq.
    \]
\end{conjecture}  

We believe that this conjecture, if proven, would constitute an elegant model-theoretic property of UCQ-rewritable rule sets and represent a significant step toward proving the \fcbdd conjecture. Let us elaborate. Similar to the results presented in this paper, if \cref{conj:colorability} holds, it would eliminate a substantial portion of the potential counterexample space: any structure that cannot be colored with a finite number of colors cannot be homomorphically embedded into a finite structure that does not entail $\loopq$.  

Our hope is that the tools developed throughout this paper will serve as a foundation for the proof of \cref{conj:colorability}. However, we note that such a proof would not be a straightforward extension of the techniques in \cref{sec:proof-core}. The current proof aims at showing existence of four-tournament witnessed by a single valley query. We know, however, that there exists structures of arbitrarily high chromatic numbers that do not contain the four-clique:
\begin{theorem}[Erdős \cite{Erdos_1959}]
    There exist graphs with arbitrarily high girth and chromatic number.
\end{theorem}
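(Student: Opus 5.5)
We would prove this classical statement with the probabilistic method, using a random graph followed by a deletion step. Fix integers $g\geq 3$ (target girth) and $k\geq 1$ (target chromatic number). The aim is a graph with no cycle of length at most $g$ and chromatic number greater than $k$. We take the Erd\H{o}s--R\'enyi random graph $G(n,p)$ on $n$ vertices, where each undirected edge is present independently with probability $p=n^{\theta-1}$ for some fixed $0<\theta<1/g$, and let $n\to\infty$. Chromatic number is then bounded through the independence number $\alpha$, via $\chi(H)\geq |V(H)|/\alpha(H)$.

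\emph{Step 1 (few short cycles).} Let $X$ be the number of cycles of length at most $g$ in $G(n,p)$. A cycle of length $i$ is fixed by an ordered choice of $i$ vertices up to rotation and reflection, so
\[\mathbb{E}[X]\;\leq\;\sum_{i=3}^{g} n^i p^i \;=\;\sum_{i=3}^{g} n^{\theta i}\;\leq\; g\, n^{\theta g}\;=\;o(n),\]
because $\theta g<1$. Markov's inequality then gives $\Pr[X\geq n/2]\to 0$.

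\emph{Step 2 (small independence number).} Set $a=\lceil (3/p)\ln n\rceil$. A union bound over $a$-subsets gives
\[\Pr[\alpha(G)\geq a]\;\leq\;\binom{n}{a}(1-p)^{\binom{a}{2}}\;\leq\;\bigl(n\,e^{-p(a-1)/2}\bigr)^{a}\;\to\;0,\]
since $p(a-1)/2 \ge (3/2)\ln n - p/2$. For large $n$ both bad events fail with positive probability, so there is a graph $G$ on $n$ vertices with fewer than $n/2$ short cycles and $\alpha(G)<a$.

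\emph{Step 3 (deletion).} Remove one vertex from each cycle of length at most $g$ in $G$. The resulting graph $H$ has at least $n/2$ vertices and girth greater than $g$. Deleting vertices cannot increase the independence number, so $\alpha(H)\leq\alpha(G)<a$. Therefore
\[\chi(H)\;\geq\;\frac{n/2}{a}\;\approx\;\frac{n p}{6\ln n}\;=\;\frac{n^{\theta}}{6\ln n},\]
which exceeds $k$ for $n$ large enough. Since $g$ and $k$ were arbitrary, this proves the theorem.

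The main obstacle is choosing $p$. It must be small enough that short cycles number only $o(n)$, which requires $np\leq n^{\theta}$ with $\theta<1/g$. It must also be large enough that $\alpha$ is $O(\ln n/p)=o(n)$, so that $\chi$ still grows after deleting half the vertices. With $p=n^{\theta-1}$ both requirements hold at once, and the remaining work is checking the two tail estimates, which is routine.
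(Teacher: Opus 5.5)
Your proof is correct. The first-moment bound on short cycles with $p=n^{\theta-1}$ and $\theta<1/g$, the union bound giving $\alpha(G)<a\approx 3n^{1-\theta}\ln n$, and the deletion step giving $\chi(H)\geq (n/2)/a\to\infty$ all check out. The paper gives no proof of this statement; it only cites Erd\H{o}s (1959) to explain why ruling out large tournaments cannot by itself settle the colorability conjecture. Your random-graph-plus-deletion argument is the classical probabilistic proof behind that citation, and the case the paper actually needs is $g=3$: triangle-free, hence $4$-clique-free, graphs of unbounded chromatic number.
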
  

\subsection*{Tournament Size Bounds}
Of separate, but related interest, is the study of the following question: 

\begin{question}\label{q:open-1}
    Given a UCQ-rewritable rule set $\rs$ such that $\chase{\topinst, \rs} \not\models \loopq$, what is the maximal $n$ such that $\chase{\inst, \rs}$ admits a tournament of size $n$?
\end{question}  


A careful reader will note that an upper bound on the maximal size of a tournament of the form $N(4, \ldots, 4)$ (with $|\qc|$ arguments) can be extracted from the proof presented in previous sections --- if a tournament of at least this size exists in the chase one can employ the machinery of \cref{sec:single-valley} to show that $\loopq$ is entailed.

We leave \cref{q:open-1} as an open question, as it ties together various significant problems regarding existential rules, particularly in the UCQ-rewritable fragment.

\section*{Acknowledgements}
	Piotr Ostropolski-Nalewaja was supported by grant 2022/45/B/ST6/00457 from the Polish National Science Centre (NCN).


\bibliographystyle{abbrv}
\bibliography{00-main}

\newpage
\appendix
\section{Proof of Lemma~\ref{lem:streamline-prop-one}}\label{app:streamline-prop-one}
\newcommand{\streaminit}{\stream_\mathrm{init}(\rsb)}
\newcommand{\streamex}{\stream_\exists(\rsb)}
\newcommand{\streamDL}{\stream_\mathrm{DL}(\rsb)}

\newcommand{\siga}{\sig_{\apred}}
\newcommand{\sigb}{\sig_{\bpred}}

Let $\streaminit$ be the subset of $\stream(\rsb)$ containing only rules of form $\eruinit$, let $\streamex$ and $\streamDL$ be defined in the analogous way.
\begin{observation}\label{obs:stream-signature-tripartition}
    For any rule set $\rsb$, the signature of $\stream(\rsb)$ can be partitioned into three disjoint signatures $\sig$, $\siga$ and $\sigb$, such that 
    \begin{itemize}
		\item $\sig$ is the signature of heads of rules in $\streamDL$ and bodies of rules in $\streaminit$,
        \item $\siga$ is the signature of heads of rules in $\streaminit$ and bodies of rules in $\streamex$,
        \item $\sigb$ is the signature of heads of rules in $\streamex$ and bodies of rules in $\streamDL$,
    \end{itemize}
\end{observation}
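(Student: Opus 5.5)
The statement is a syntactic bookkeeping fact about the three families of rules that make up $\stream(\rsb)$. The plan is to define the three signatures explicitly and then read each claimed property off the rule definitions. Let $\sig$ be the original signature of $\rsb$. Let $\siga$ be the set of all fresh predicates $\apred_0^\eru$ and $\apred^\eru_{y,w}$, and let $\sigb$ be the set of all fresh predicates $\bpred^\eru_{y',z}$, for $\eru\in\rsb$. These are pairwise disjoint because the $\apred$ and $\bpred$ symbols were chosen fresh: they are not in $\sig$, and we may take distinct symbols for distinct indices and rules. Their union is exactly the signature of $\stream(\rsb)$, since every atom occurring in $\eruinit$, $\eruex$ or $\eruDL$ uses either a predicate of $B$ or $H$ (hence in $\sig$), an $\apred$-symbol, or a $\bpred$-symbol.

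Next we check the three bullet points rule by rule.

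\emph{Initialisation rules.} The body of $\eruinit$ is $B(\vx,\vy)$, so it lies in $\sig$. Its head consists of $\apred_0^\eru$ and $\apred^\eru_{y,w}$ atoms, so it lies in $\siga$.

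\emph{Existential rules.} The body of $\eruex$ consists of exactly those $\apred$-atoms, so it lies in $\siga$. Its head consists only of $\bpred^\eru_{y',z}$ atoms, so it lies in $\sigb$.

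\emph{Datalog rules.} The body of $\eruDL$ consists of the same $\bpred$-atoms, so it lies in $\sigb$. Its head is $H(\vy,\vz)$, so it lies in $\sig$.

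To read the bullets as equalities of signatures (rather than inclusions), we would note two further facts. Every predicate of $\siga$ and $\sigb$ occurs in the corresponding head and body by construction. For $\sig$, we may assume without loss of generality that it is restricted to the predicates actually used in $\rsb$; otherwise the bullet should be read as an inclusion.

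There is no real obstacle here. The only point needing care is the degenerate case of a rule $\eru$ with empty $\vz$, i.e.\ a Datalog rule of $\rsb$. Then the head of $\eruex$ and the body of $\eruDL$ are empty conjunctions, which clashes with the requirement that bodies and heads be non-empty. I would handle this either by treating such rules as applying $\stream$ only to non-Datalog rules (keeping Datalog rules of $\rsb$ as they are, with both body and head over $\sig$), or by adding a dummy $\bpred^\eru(w,w)$ atom. Either choice leaves the tripartition intact.
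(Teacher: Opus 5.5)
Your proposal is correct and matches the paper's intent: the paper gives no separate proof and treats the observation as immediate from the definitions of $\rho_{\mathrm{init}}$, $\rho_\exists$ and $\rho_{\mathrm{DL}}$. Like the paper, you take $\sig$ to be the original signature and $\sig_{\apred}$, $\sig_{\bpred}$ to be the fresh $\apred$- and $\bpred$-symbols. Your side remark that the construction is ill-formed for rules with empty $\vz$ (empty head of $\rho_\exists$, empty body of $\rho_{\mathrm{DL}}$) is a fair point about the surgery itself, which the paper does not address, but it lies outside the observation proper.
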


\newcommand{\homtostream}[1]{\sigma^\stream_{#1}}
\begin{lemma}\label{lem:streamline-prop-two-lem-one}
    For every rule set $\rsb$, every instance $\jnst$ and every integer $i$, there is a homomorphism $\homtostream{i}$ from $\step{i}{\jnst,\rsb}$ to $\step{3i}{\jnst,\stream(\rsb)}$.
\end{lemma}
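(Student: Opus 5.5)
We argue by induction on $i$, building the maps so that $\homtostream{i+1}$ extends $\homtostream{i}$, and with every $\homtostream{i}$ the identity on $\adom{\jnst}$. For $i=0$ both sides are $\jnst$, so we take $\homtostream{0}$ to be the identity. The invariant carried along is simply that $\homtostream{i}$ is a homomorphism from $\step{i}{\jnst,\rsb}$ into $\step{3i}{\jnst,\stream(\rsb)}$. Monotonicity of the chase, $\step{k}{}\subseteq\step{k+1}{}$, lets us treat the target as growing.

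For the inductive step, take an atom $\alpha\in\step{i+1}{\jnst,\rsb}\setminus\step{i}{\jnst,\rsb}$. It lies in $\trigoutput{\fulltrig}$ for some trigger $\fulltrig$ of $\rsb$ on $\step{i}{\jnst,\rsb}$, with $\eru=\narule{B(\vx,\vy)}{\vz}{H(\vy,\vz)}$.
\begin{itemize}
\item The composite $\homtostream{i}\circ\hom$ maps $B$ into $\step{3i}{\jnst,\stream(\rsb)}$, and all predicates of $B$ lie in $\sig$. Hence $\eruinit$ fires with this homomorphism by step $3i+1$, producing a term $w$ together with $\apred_0^\eru(w)$ and the atoms $\apred^\eru_{y,w}(\homtostream{i}\hom(y),w)$.
\item The body of $\eruex$ then maps onto these atoms, so $\eruex$ fires by step $3i+2$ and produces fresh terms $z'$ for each $z\in\vz$, together with all the $\bpred$-atoms.
\item The body of $\eruDL$ maps onto those $\bpred$-atoms, so by step $3i+3$ we obtain $H(\homtostream{i}\hom(\vy),\vz')$.
\end{itemize}
We extend $\homtostream{i}$ to $\homtostream{i+1}$ by sending each null that $\fulltrig$ introduced for $z$ to the corresponding $z'$. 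Each null of $\step{i+1}{}$ is created by exactly one trigger, so the extension is well defined. The image of $\alpha$ is then an atom of $\step{3i+3}{\jnst,\stream(\rsb)}$, and old atoms are preserved because the map extends $\homtostream{i}$.

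The main obstacle is timing in the oblivious chase. A trigger is fired only at the first step at which it becomes available, because $T_n$ excludes triggers already present at step $n-1$. We therefore phrase each claim as "the output is present in $\step{3i+k}{}$", not "the trigger fires exactly at step $3i+k$". Any trigger whose body lies in $\step{m}{}$ has its output in $\step{m+1}{}$: either it is new at step $m$ and is applied there, or it already existed at some earlier step and its output was added then. A second subtlety is that the chosen $w$ and $z'$ must come from the triggers determined by $\homtostream{i}\circ\hom$. If several triggers of $\rsb$ share the same image body, we pick any one; the oblivious chase produces an output for each trigger, so each of them yields some witness.
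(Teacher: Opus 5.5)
Your proof is correct and follows the same route as the paper: induction on $i$, chaining the $\eruinit$, $\eruex$ and $\eruDL$ triggers induced by $\homtostream{i}\circ\hom$ across steps $3i+1$ to $3i+3$, then extending $\homtostream{i}$ by the isomorphism between the trigger output and the resulting copy of $H$. Your explicit handling of oblivious-chase timing, and of where $w$ is sent in the $\eruex$ trigger, fills in details the paper's proof leaves implicit.
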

\begin{proof}
    We proceed by induction on $i$. For the base case, we have $\step{0}{\jnst,\rsb}=\step{0}{\jnst,\stream(\rsb)}=\jnst$, so $\homtostream{0}$ is the identity. Now, for the induction step, assume $\homtostream{i}$ constructed, and consider a trigger $\trig=\fulltrig$ over $\step{i}{\jnst,\rsb}$. Then, $\homtostream{i}(\hom(\body{\eruinit}))=\homtostream{i}(\hom(\body{\eru}))$ (as these bodies are equal), so the trigger $\pair{\eruinit, \homtostream{i}\circ\hom}$ is a trigger over $\step{3i}{\jnst,\stream(\rsb)}$. Thus, the trigger $\pair{\eruex,\homtostream{i}\circ\hom}$ is a trigger over $\step{3i+1}{\jnst,\stream(\rsb)}$, as $\eruex$'s body is $\eruinit$'s head. The same applies for $\pair{\eruDL,\homtostream{i}\circ\hom}$ at step $3i+2$. Thus, there is an isomorphism between $\homtostream{i}(\trigoutput{\trig})$ and $\trigoutput{\pair{\eruDL,\homtostream{i}\circ\hom}}$, so we can extend $\homtostream{i}$ using these isomorphisms for all the triggers applied at step $i+1$ in $\step{i+1}{\jnst,\rsb}$, constructing $\homtostream{i+1}$ as a homomorphism from $\step{i+1}{\jnst,\rsb}$ to $\step{3i+3}{\jnst,\stream(\rsb)}$.
\end{proof}

We denote the signature of $\rsb$ with $\sig$, and define for every instance $\jnst$ the restriction $\jnst_{|\sig}$ of $\jnst$ to the signature $\sig$.
\begin{observation}\label{obs:stream-body-to-full-head}
    For every instance $\jnst$, rule set $\rsb$, rule $\eru\in\rsb$ and homomorphism $\hom$
    \begin{itemize}
        \item If $\pair{\eruex,\hom}$ is a trigger such that some atom in $\hom(\body{\eruex})$ is mapped to $\step{1}{\jnst,\streaminit}\setminus\jnst$, then there is some trigger $\pair{\eruinit,\hom'}$ for $\jnst$ such that $\trigoutput{\pair{\eruinit,\hom'}}=\hom(\body{\eruex})$.
        \item If $\pair{\eruDL,\hom}$ is a trigger such that some atom in $\hom(\body{\eruDL})$ is mapped to $\step{1}{\jnst,\streamex}\setminus\jnst$, then there is some trigger $\pair{\eruex,\hom'}$ for $\jnst$ such that $\trigoutput{\pair{\eruex,\hom'}}=\hom(\body{\eruDL})$.
    \end{itemize}
\end{observation}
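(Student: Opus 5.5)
The statement to prove is the last one shown, Observation~\ref{obs:stream-body-to-full-head}. I prove both items the same way, using the signature tripartition of Observation~\ref{obs:stream-signature-tripartition}. The idea is that the body of $\eruex$ is, up to variable renaming, exactly the head of $\eruinit$. Likewise, the body of $\eruDL$ is exactly the head of $\eruex$. The plan is to show that any match of such a body touching a freshly produced atom must coincide with the full output of one trigger.

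For the first item, let $\alpha\in\hom(\body{\eruex})$ lie in $\step{1}{\jnst,\streaminit}\setminus\jnst$. Then $\alpha$ is an $\siga$-atom created by some trigger $\pair{\eru'_{\mathrm{init}},\hom'}$. The predicates $\apred^\eru_0$ and $\apred^\eru_{y,w}$ are fresh and indexed by $\eru$, so $\eru'=\eru$. Every atom of $\body{\eruex}$ contains the variable $w$: $\apred_0^\eru(w)$ contains it, and so does each $\apred^\eru_{y,w}(y,w)$ in its second position. Hence $\hom(w)$ is the fresh term $n$ that this trigger produced for $w$. The key step is \emph{freshness}: $n$ appears in no other trigger's output, and it does not appear in $\jnst$. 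So every atom of $\hom(\body{\eruex})$ that contains $n$ belongs to $\trigoutput{\pair{\eruinit,\hom'}}$. By predicate-uniqueness, for each $y$ the output has exactly one $\apred^\eru_{y,w}$-atom with second argument $n$, namely $\apred^\eru_{y,w}(\hom'(y),n)$. Therefore $\hom(y)=\hom'(y)$ for all $y\in\vy$, and $\hom(\body{\eruex})=\trigoutput{\pair{\eruinit,\hom'}}$. The converse inclusion holds because the output is exactly the image of the body under this identification.

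The second item follows the same template with $\bpred^\eru_{y',z}$. One extra point is needed: every atom of $\body{\eruDL}$ contains some existential $z\in\vz$. I must check that all the nulls $\hom(z)$ come from the \emph{same} trigger, even when several $z$'s exist. I handle this by connectivity through the frontier. Fix the $z_0$ whose atom is fresh, with creating trigger $\pair{\eruex,\hom'}$. For any other $z$, the body contains both $\bpred^\eru_{w,z_0}(w,z_0)$ and $\bpred^\eru_{w,z}(w,z)$. The term $\hom(w)=\hom'(w)$ is the null created by an $\eruinit$-trigger, and the $\eruex$-trigger over it is unique in the oblivious chase, because its body atoms pin down all of $\hom'$ by the first item's argument. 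Hence $\hom(z)$, which is a $\bpred^\eru_{w,z}$-successor of $\hom(w)$, must be the null created by that unique trigger for $z$. This step is the main obstacle. It relies on the $\bpred$-atoms arising only from $\eruex$-outputs, on $\hom(w)$ not lying in $\jnst$, and on the fact that $\hom(w)$ has at most one $\eruex$-trigger. I would establish this last fact first, as a small claim derived from the first item. With it in hand, the equality $\hom(\body{\eruDL})=\trigoutput{\pair{\eruex,\hom'}}$ follows as before.
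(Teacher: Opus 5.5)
\textbf{Item 1 is fine.} Your proof of the first item is correct. It is the paper's own argument made precise: every atom of $\body{\eruex}$ contains $w$. So one new atom forces $\hom(w)$ to be the fresh null of a single $\eruinit$-trigger over $\jnst$, and the atoms $\apred^\eru_{y,w}(\hom(y),\hom(w))$ then pin down $\hom$.

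\textbf{Item 2 has a genuine gap.} You rightly spot the point the paper's one-line justification does not address: with several existential variables, a fresh null in one atom says nothing about $\hom(z)$ for the other $z\in\vz$. Your patch, however, uses facts that the statement does not give, and one of them contradicts it.
The trigger $\pair{\eruex,\hom'}$ that created the new atom is a trigger over $\jnst$, so $\hom(w)=\hom'(w)\in\adom{\jnst}$. This directly contradicts your requirement ``$\hom(w)$ not lying in $\jnst$''. Nothing in the hypotheses makes $\hom(w)$ a null created by an $\eruinit$-trigger. Nor is the $\eruex$-trigger with $w\mapsto\hom(w)$ unique for arbitrary $\jnst$: let $\jnst$ contain both $\apred^\eru_{y,w}(b,a)$ and $\apred^\eru_{y,w}(b',a)$. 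So this uniqueness cannot be derived from the first item.
Likewise, $\bpred$-atoms need not arise only from $\eruex$-outputs, since $\jnst$ may contain some. In that case the item is actually false. Take $\rsb=\{\eru\}$ with $\vy=(y)$, $\vz=(z_1,z_2)$, and
$\jnst=\{\apred^\eru_0(a),\apred^\eru_{y,w}(b,a),\bpred^\eru_{y,z_2}(b,c),\bpred^\eru_{w,z_2}(a,c)\}$.
The only $\eruex$-trigger over $\jnst$ creates nulls $m_1,m_2$. The map $y\mapsto b$, $w\mapsto a$, $z_1\mapsto m_1$, $z_2\mapsto c$ sends $\body{\eruDL}$ into $\jnst$ together with that output, and it hits the new atom $\bpred^\eru_{y,z_1}(b,m_1)$. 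Yet its image contains $c$ and not $m_2$, so it is the output of no $\eruex$-trigger.

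\textbf{What is missing.} You need an explicit extra hypothesis, for instance that $\jnst$ contains no $\bpred^\eru$-atoms. This is the situation in which the observation is ultimately used, for Lemma~\ref{lem:streamline-prop-one}, where the instances are over $\sig$. Under this hypothesis you need neither the first item nor any uniqueness over $\hom(w)$:
every atom of $\hom(\body{\eruDL})$ is then produced by some $\eruex$-trigger over $\jnst$. For each $z\in\vz$, let $\pair{\eruex,g_z}$ be the trigger producing $\bpred^\eru_{w,z}(\hom(w),\hom(z))$, so $\hom(z)$ is its fresh null. Hence every atom $\bpred^\eru_{y',z}(\hom(y'),\hom(z))$ with $y'\in\vy\cup\{w\}$ lies in that same output. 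The only $\bpred^\eru_{y',z}$-atom there is $\bpred^\eru_{y',z}(g_z(y'),\hom(z))$, so $g_z$ agrees with $\hom$ on $\vy\cup\{w\}$.
This restriction does not depend on $z$. So all the $\hom(z)$ come from the single trigger $\pair{\eruex,\hom|_{\vy\cup\{w\}}}$, whose output is exactly $\hom(\body{\eruDL})$.
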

\begin{proof}
	Notice that all the atoms in the body of these rules feature a term that is an existential variable in the head of the previous rule. Thus, this variable has to be generated by a rule application. Since for all $\eru\in\rsb$, $\eruinit$'s head is $\eruex$'s body, and the same for $\eruex$ and $\eruDL$, we get the result.
\end{proof}

\newcommand{\homfromstream}[1]{\pi^\stream_{#1}}
\newcommand{\unfoldedchase}[1]{\mathcal{R}_{#1}}
\begin{lemma}\label{lem:streamline-prop-two-lem-two}
    For every rule set $\rsb$ and instance $\jnst$, let $\jnst'=\step{1}{\step{1}{\jnst,\streamex}, \streamDL}_{|\sig}$. For all $i$, there is a homomorphism $\homfromstream{i}$ from $\step{3i}{\jnst,\stream(\rsb)}$ to
    \[\unfoldedchase{i}=\step{1}{\step{1}{\step{i}{\jnst', \rsb}, \streaminit}, \streamex}\cup\step{1}{(\jnst\setminus\jnst_{|\sig}), \streamex}\]
\end{lemma}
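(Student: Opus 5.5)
The plan is an induction on $i$ that builds $\homfromstream{i}$ trigger by trigger, mirroring \cref{lem:streamline-prop-two-lem-one} but in the opposite direction. The organising idea is that in the streamlined chase, every three consecutive steps correspond to one step of the original chase: an $\eruinit$ application, then $\eruex$, then $\eruDL$. The set $\unfoldedchase{i}$ is designed to absorb the ``half-finished'' pipelines lying above $\step{i}{\jnst',\rsb}$, namely the $\apred$- and $\bpred$-atoms produced by $\streaminit$ and $\streamex$ on top of it. The summand $\step{1}{(\jnst\setminus\jnst_{|\sig}),\streamex}$ absorbs the pipelines already started in the initial instance, i.e.\ atoms over $\siga\cup\sigb$ present in $\jnst$.

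\textbf{Base case} ($i=0$). Here $\step{0}{\jnst,\stream(\rsb)}=\jnst$, and I take $\homfromstream{0}$ to be the identity.
\begin{itemize}
\item $\sig$-atoms of $\jnst$ lie in $\jnst'$, since chase steps only add atoms. Hence they lie in $\step{0}{\jnst',\rsb}\subseteq\unfoldedchase{0}$.
\item The remaining atoms of $\jnst$ lie in $\step{1}{(\jnst\setminus\jnst_{|\sig}),\streamex}$, which contains $\jnst\setminus\jnst_{|\sig}$ itself.
\end{itemize}

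\textbf{Induction step} ($i\to i+1$). I extend $\homfromstream{i}$ over the three streamlined steps $3i+1$, $3i+2$, $3i+3$, classifying triggers by \cref{obs:stream-signature-tripartition}.

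\emph{Datalog triggers.} A trigger $\pair{\eruDL,\hom}$ applied at one of these steps has its body in $\sigb$. Using \cref{obs:stream-body-to-full-head}, I trace this body back to a full $\eruex$-output, and further back to an $\eruinit$-output whose body is a set of $\sig$-atoms.
\begin{itemize}
\item If this chain starts from atoms of $\jnst$ (the $\siga/\sigb$ part), the produced $\sig$-atoms are exactly those already accounted for in $\jnst'$. That is why $\jnst'$ includes one round of $\streamex$ followed by one round of $\streamDL$.
\item Otherwise, the chain starts from an $\eruinit$ trigger whose body maps under $\homfromstream{i}$ into $\sig$-atoms of $\unfoldedchase{i}$. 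Those atoms are in $\step{i}{\jnst',\rsb}$, since $\sig$-atoms of $\unfoldedchase{i}$ only live there. So the original trigger $\pair{\eru,\homfromstream{i}\circ\hom'}$ is a trigger on $\step{i}{\jnst',\rsb}$, and its output lies in $\step{i+1}{\jnst',\rsb}$.
\end{itemize}
I then extend the homomorphism on the existential terms $\vz$ using the isomorphism between $\trigoutput{\pair{\eruDL,\hom}}$ and that output, after first mapping the terms $w$ and $\vz$ created by the $\eruinit$ and $\eruex$ triggers accordingly.

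\emph{$\eruinit$ and $\eruex$ triggers.} Their bodies map into $\sig$-atoms, respectively $\siga$-atoms, already in the image. Their outputs are therefore reproduced in the $\step{1}{\cdot,\streaminit}$ and $\step{1}{\cdot,\streamex}$ layers built over $\step{i+1}{\jnst',\rsb}$.

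The main obstacle is the timing mismatch: the streamlined chase interleaves pipelines started at different steps. An $\eruinit$ fired at step $3i+2$ could have its $\eruex$ at $3i+3$ and its $\eruDL$ only at $3i+4$. So the invariant must be stated carefully: every $\sig$-atom of $\step{3i}{\jnst,\stream(\rsb)}$ maps into $\step{i}{\jnst',\rsb}$, and every $\siga/\sigb$-atom maps into the one-round layers above it. I would first try to prove that the $\sig$-part of step $3i$ has original-chase depth at most $i$, by counting that each $\sig$-atom requires a full init–ex–DL chain after its body's atoms. Because $\unfoldedchase{i}$ is monotone in $i$, a pipeline that is ahead of schedule causes no problem, and mapping it into the later layer suffices.
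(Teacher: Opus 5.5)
Your proposal breaks at the step where you extend the map to the nulls created for $\vec{z}$. That step cannot be repaired, because the statement as written is false.

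In $\step{3i+3}{\jnst,\bigtriangledown(\rsb)}$, a null $t$ created for $z$ by a $\rho_\exists$ trigger carries the atoms $\bpred^\rho_{y',z}(\cdot,t)$. Once the matching $\rho_{\mathrm{DL}}$ trigger has fired, the same $t$ also carries $\sig$-atoms of $H$. In $\mathcal{R}_{i+1}$ no null carries both kinds of atoms. The $\sig$-atoms of a completed pipeline sit on the nulls of the $\rho$-trigger inside $\step{i+1}{\jnst',\rsb}$. The $\bpred$-atoms sit on different nulls, created afresh by the outer $\bigtriangledown_\exists(\rsb)$-layer.

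So your two prescriptions contradict each other: that $\rho_\exists$-outputs are reproduced in the $\bigtriangledown_\exists(\rsb)$-layer, and that the $z$-nulls go to the nulls of the $\rho$-trigger. Likewise, the invariant in your last paragraph cannot be realised by a single substitution.

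Concretely, let $\rsb=\{\mathtt{P}(y)\to\exists z\,\mathtt{Q}(y,z)\}$, $\jnst=\{\mathtt{P}(a)\}$ and $i=1$. Then $\jnst'=\jnst$, and $\step{3}{\jnst,\bigtriangledown(\rsb)}$ contains $\bpred^\rho_{y,z}(a,t)$ and $\mathtt{Q}(a,t)$. The only $\mathtt{P}$-, $\mathtt{Q}$- and $\bpred^\rho_{y,z}$-atoms of $\mathcal{R}_1$ are $\mathtt{P}(a)$, $\mathtt{Q}(a,n)$ and $\bpred^\rho_{y,z}(a,p)$, with $n\neq p$ since $p$ is created fresh in the outer layer. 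Thus $a\mapsto a$ is forced, and $t$ would have to go to both $n$ and $p$.

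The paper's nine-case proof has the same slip. In its $r=\rho_{\mathrm{DL}}$ cases it maps the output onto $\trigoutput{\pair{\rho,\hom''}}$ by an isomorphism. But the $z$-nulls involved were already sent into the $\bigtriangledown_\exists(\rsb)$-layer when the corresponding $\rho_\exists$ trigger was handled.

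What is true is the restricted claim: for each $i$, a homomorphism from $\step{3i}{\jnst,\bigtriangledown(\rsb)}_{|\sig}$ to $\step{i}{\jnst',\rsb}$. This is all the proof of \cref{lem:streamline-prop-one} uses, since it only needs $(\mathcal{R}_i)_{|\sig}=\step{i}{\jnst',\rsb}$. For this version your pipeline-tracing is the right argument. The $\rho_{\mathrm{init}}$ and $\rho_\exists$ cases disappear, because those rules produce no $\sig$-atoms. You still have to make two points explicit.

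\emph{Timing.} By \cref{obs:stream-body-to-full-head} applied twice, a $\rho_{\mathrm{DL}}$ trigger fired at step $3i+1$, $3i+2$ or $3i+3$ falls into one of two cases. Either it uses only atoms of $\jnst$ and outputs of $\rho_\exists$ triggers over $\jnst$, and then its output is, up to renaming, in $\jnst'$. Or it stems from a single trigger $\pair{\rho_{\mathrm{init}},\hom}$ fired at step at most $3i+1$. That trigger's body then lies in $\step{3i}{\jnst,\bigtriangledown(\rsb)}_{|\sig}$, so the induction hypothesis maps it into $\step{i}{\jnst',\rsb}$. Hence $\pair{\rho,\pi^{\bigtriangledown}_i\circ\hom}$ is a trigger on $\step{i}{\jnst',\rsb}$.

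\emph{Well-definedness.} Each such $z$-null comes from exactly one $\rho_\exists$ trigger. That trigger comes from exactly one $\rho_{\mathrm{init}}$ trigger, because all body atoms of $\rho_\exists$ contain $w$. Moreover, the null occurs in no $\sig$-atom before its $\rho_{\mathrm{DL}}$ trigger fires. So sending it to the $z$-null of $\pair{\rho,\pi^{\bigtriangledown}_i\circ\hom}$ is a fresh and consistent choice. This is exactly what fails once the $\bpred$-atoms are kept in the domain.
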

\begin{proof}
    We construct $\homfromstream{i}$ by induction on $i$. For the base case, $\step{0}{\jnst,\stream(\rsb)}=\jnst$ is included in $\unfoldedchase{0}$, as $\jnst_{|\sig}\subseteq\jnst'$, so $\homfromstream{0}$ is the identity. We then focus on the inductive case, by assuming that $\homfromstream{i}$ is constructed. Triggers generating atoms in $\step{3i+3}{\jnst,\stream(\rsb)}\setminus\step{3i}{\jnst,\stream(\rsb)}$ can be applied at step $3i+1$, $3i+2$ or $3i+3$, and use a rule in $\streaminit$, $\streamex$ or $\streamDL$. Let $\trig=\pair{r,\hom}$ be such a trigger. To construct $\homfromstream{i+1}$, we show in these nine cases how $\homfromstream{i}(\trigoutput{\trig})$ can be mapped to $\unfoldedchase{i+1}$.
    \begin{description}
        \item[Step $3i+1$ and $r=\eruinit$:] The body of these rules is over $\sig$, meaning that $\pair{r,\homfromstream{i}\circ\hom}$ is a trigger over $\step{i}{\jnst', \rsb}$. Thus, $\homfromstream{i}(\trigoutput{\pair{r,\hom}})$ maps into $\trigoutput{\pair{r,\homfromstream{i}\circ\hom}}\subseteq\step{1}{\step{i}{\jnst', \rsb}, \streaminit}\subseteq\unfoldedchase{i}$.

        \item[Step $3i+1$ and $r=\eruex$:] The body of these rules is over $\siga$, so it can only be mapped to atoms in $\jnst\setminus\jnst_{|\sig}$ or in $\step{1}{\step{i}{\jnst', \rsb}, \streaminit}$. If $\homfromstream{i}(\hom(\body{\eruex}))$ is included in $\jnst$, then $\homfromstream{i}(\trigoutput{\pair{\eruex,\hom}})$ can be mapped to $\step{1}{(\jnst\setminus\jnst_{|\sig}), \streamex}$. Otherwise, by \cref{obs:stream-body-to-full-head}, there is some trigger $\pair{\eruinit,\hom'}$ such that $\homfromstream{i}(\hom(\body{\eruex}))=\trigoutput{\pair{\eruinit,\hom'}}$. Thus, $\homfromstream{i}(\trigoutput{\pair{\eruex,\hom}})$ can be mapped into $\trigoutput{\pair{\eruex,\homfromstream{i}\circ\hom}}\subseteq\unfoldedchase{i}$.

        \item[Step $3i+1$ and $r=\eruDL$:] If $\homfromstream{i}(\hom(\body{\eruDL}))\subseteq\jnst\setminus\jnst_{|\sig}$, then $\homfromstream{i}(\trigoutput{\pair{\eruDL,\hom}})$ can be mapped into $\trigoutput{\pair{\eruDL,\homfromstream{i}\circ\hom}}\subseteq\jnst'\subseteq\unfoldedchase{i}$, as rules in $\streamDL$ only produce atoms over $\sig$.
        Otherwise, by applying \cref{obs:stream-body-to-full-head} twice, there are some triggers $\pair{\eruex,\hom'}$ over $\step{1}{\step{i}{\jnst', \rsb}, \streaminit}$ and $\pair{\eruinit,\hom''}$ over $\step{i}{\jnst', \rsb}$ such that $\trigoutput{\pair{\eruinit,\hom''}}=\hom'(\body{\eruex})$ and $\trigoutput{\pair{\eruex,\hom'}}=\homfromstream{i}(\hom(\body{\eruDL}))$. Thus, the trigger $\pair{\eru,\hom''}$ is a trigger over $\step{i}{\jnst', \rsb}$ such that $\trigoutput{\pair{\eru,\hom''}}$ and $\trigoutput{\pair{\eruDL,\homfromstream{i}\circ\hom}}$ are isomorphic. Thus, we can extend this isomorphism to map the set $\homfromstream{i}(\trigoutput{\pair{\eruDL,\hom}})$ into $\trigoutput{\pair{\eru,\hom''}}\subseteq \step{i+1}{\jnst', \rsb}\subseteq \unfoldedchase{i+1}$.

        \item[Step $3i+2$ and $r=\eruinit$:] Some atom of $\hom(\body{\eruinit})$ has to be mapped into a newly created atom at step $3i+1$ by a rule in $\streamDL$. For all such trigger $\pair{\eruDL',\hom'}$, we can map $\homfromstream{i}(\trigoutput{\pair{\eruDL',\hom'}})$ into $\step{i+1}{\jnst', \rsb}$ by the previous case. Thus, by \cref{obs:stream-body-to-full-head}, we can map $\homfromstream{i}(\hom(\body{\eruinit}))$ entirely into $\step{i+1}{\jnst', \rsb}$, so that $\pair{\eruinit,\homfromstream{i}\circ\hom}$ is a trigger for it, and thus $\homfromstream{i}(\trigoutput{\pair{\eruinit,\hom}})$ can be mapped into $\trigoutput{\pair{\eruinit,\homfromstream{i}\circ\hom}}\subseteq \unfoldedchase{i+1}$.

        \item[Step $3i+2$ and $r\in\set{\eruex,\eruDL}$:] These cases are treated exactly as step $3i+1$.
        
		\item[Step $3i+3$ and $r\in\set{\eruinit,\eruDL}$:] These cases are treated exactly as step $3i+2$.
        
		\item[Step $3i+3$ and $r=\eruex$:] Some atom of $\hom(\body{\eruex})$ has to be mapped into a newly created atom at step $3i+2$ by a rule in $\streaminit$. For all such trigger $\pair{\eruinit',\hom'}$, we can map $\homfromstream{i}(\trigoutput{\pair{\eruinit',\hom'}})$ into $\step{1}{\step{i+1}{\jnst', \rsb},\streaminit}$ by the previous cases. Thus, by \cref{obs:stream-body-to-full-head}, we can map $\homfromstream{i}(\hom(\body{\eruex}))$ entirely into $\step{1}{\step{i+1}{\jnst', \rsb},\streaminit}$, so that $\pair{\eruex,\homfromstream{i}\circ\hom}$ is a trigger for it, and thus $\homfromstream{i}(\trigoutput{\pair{\eruex,\hom}})$ can be mapped into $\trigoutput{\pair{\eruex,\homfromstream{i}\circ\hom}}\subseteq \unfoldedchase{i+1}$.\qedhere
    \end{description}
\end{proof}

\begin{customlem}{\ref{lem:streamline-prop-one}}[restated]
	For every rule signature $\sig$, rule set $\rsb$ and instance $\jnst$ over $\sig$, we have that $\chase{\jnst, \rsb}$ is homomorphically equivalent to $\chase{\jnst, \stream(\rsb)}$ when restricted to signature of $\rsb$.
\end{customlem}
\begin{proof}
	We get this result as a corollary of \cref{lem:streamline-prop-two-lem-one} and \cref{lem:streamline-prop-two-lem-two}, by noticing that
	\[\left(\step{1}{\step{1}{\step{i}{\jnst', \rsb}, \streaminit}, \streamex}\cup(\jnst\setminus\jnst_{|\sig})\right)_{|\sig}=\step{i}{\jnst', \rsb}\]
	and $\jnst'=\jnst$ when $\jnst$ is over $\sig$.
\end{proof}

\newcommand{\erufull}{\eru_{\mathrm{full}}}
\newcommand{\rsbfull}{\rsb_{\mathrm{full}}}
\section{Proof of Lemma~\ref{lem:streamline-prop-two}}\label{app:C}
\begin{customlem}{\ref{lem:streamline-prop-two}}[restated]
Given a rule set $\rsb$ we have that $\stream(\rsb)$ is forward-existential and predicate-unique. Moreover, if $\rsb$ is UCQ-rewritable then $\stream(\rsb)$ is UCQ-rewritable as well.
\end{customlem}

It should be clear from the definition of $\stream(\rsb)$ that it is forward-existential and predicate unique. We show now that it is UCQ-rewritable as well.
Let $\streamex$ be $\set{\eruex \mid \eru \in \rsb}$ and $\streamDL$ be $\set{\eruDL \mid \eru \in \rsb}$.
For $\eru \in \rsb$ define the rule $\erufull$ as:
\[\erufull \;\;=\quad\quad \narule{\phi(\vx, \vy)}{\vz,w}{\psi(\vy, \vz) \quad\wedge\quad \apred_{0}^\eru(w)\wedge\bigwedge_{y\in\vy} \apred^\eru_{y,w}(y,w) \quad\wedge\quad \bigwedge_{z\in\vz}\bpred}^\eru_{y',z}(y',z)\]
and let $\rsbfull$ be the set $\set{\erufull \mid \eru \in \rsb}$ and $\rsb' = \rsbfull \cup \streamex \cup \streamDL$.
It should be clear that for any instance $\inst$:
\[\chase{\inst, \rsb'} \homequiv \chase{\inst, \stream(\rsb)}.\]
Due to the above, the rewritings against $\rsb'$ and $\stream(\rsb)$ are equivalent --- so $\rsb'$ is UCQ-rewritable \iffi $\stream(\rsb)$ is --- therefore it is enough to show that $\rsb'$ is UCQ-rewritable. Now, observe the following:
\[\chase{\inst, \rsb'} \homequiv \chase{\;\;\chase{\inst,\; \streamex \cup \streamDL},\;\; \rsbfull}.\]
We argue it holds, as $\streamex \cup \streamDL$ is non-recursive and applications of rules from $\rsbfull$ cannot trigger rules from $\streamex \cup \streamDL$ --- the heads of their rules are already contained in the respective heads of rules of $\rsbfull$. 

Due to the above equivalence, we can argue the $\rsb'$ is UCQ-rewritable if $\rsbfull$ is. It holds as given a CQ $q$ and its \bdd constant $c$ under $\rsbfull$ the \bdd constant of $q$ under $\rsb'$ is no greater than $c + 2$, as $\streamex \cup \streamDL$ is not recursive, and all queries under $\streamex \cup \streamDL$ have the \bdd constant of at most $2$. 

Therefore, by showing $\rsbfull$ is UCQ-rewritable we would complete the proof. Let $\sig'$ contain the fresh predicates $\apred$ or $\bpred$ (with indices) appearing in $\rsbfull$. Note that predicates from  $\sig'$ are not used in $\rsbfull$ recursively and that when restricted to predicates from $\sig$ the rule sets $\rsbfull$ and $\rsb$ are identical. Therefore the derivations of $\rsbfull$ and $\rsb$ are identical --- from this we conclude that $\rsbfull$ is indeed UCQ-rewritable as $\rsb$ is UCQ-rewritable.\qed

\end{document}